\documentclass[12pt]{article}
\usepackage[utf8]{inputenc}
\usepackage[export]{adjustbox}
\usepackage{authblk}
\usepackage{bbm}
\usepackage{pgfpages}
\usepackage{multicol}
\usepackage{csquotes}
\usepackage{multirow}
\usepackage{booktabs} 
\usepackage{authblk}
\usepackage{caption}
\usepackage{subcaption}
\usepackage{stackengine}
\usepackage{threeparttable}
\usepackage{floatrow}

\usepackage[T1]{fontenc}

\usepackage{tikz}

\usetikzlibrary{
    arrows.meta,
    positioning,
    shapes,
    bending,
    calc,
    fit,
    backgrounds
}

\usepackage{amsmath,amssymb,amsthm,bm}

\newtheorem{proposition}{Proposition}
\usepackage{setspace}
\usepackage{graphicx}

\usepackage{xcolor}

\usepackage{comment}
\usepackage{lscape}
\usepackage{hyperref}
\hypersetup{
    colorlinks=true,
    linkcolor=blue,
    filecolor=blue,      
    urlcolor=tiblue,
    citecolor=blue,
}

\usepackage[spanish,english]{babel}

\usepackage{natbib} % bibliography
\usepackage[a4paper, margin=1in]{geometry}

\usepackage{setspace}
\usepackage{footmisc}

\usepackage{caption}
\AtBeginEnvironment{thebibliography}{\singlespacing}

\title{Bank Earnings, Credit Supply \& the Macroeconomy \\ Evidence from Canada}

\author[1]{Santiago Camara}

\author[2]{Sanaa Latif}

\affil[1]{McGill University \& Red-NIE}

\affil[2]{McGill University}

\date{\normalsize This Draft \today}

\begin{document}
    
\maketitle

\begin{abstract}
This paper studies whether news about banks' balance sheets propagates to aggregate financial conditions and macroeconomic activity. We construct high-frequency Canadian bank net-worth shocks using stock-price reactions around earnings announcements of the six large Canadian banks. Guided by a model in which higher intermediary net worth expands credit supply and lowers borrowing spreads, we use the co-movement between bank equity prices and Canadian corporate spreads to purge raw bank equity surprises from contaminating information. Favorable purged credit-supply bank net-worth shocks lower corporate spreads, raise bank valuations and broader equity prices, appreciate the Canadian dollar, and increase real activity over the medium run. The results are robust across specifications, samples, and additional outcomes, and suggest that bank earnings news is macroeconomically relevant in concentrated banking systems.

\bigskip

\noindent \textbf{Keywords:} Bank net worth; credit supply shocks; financial intermediation; earnings announcements; high-frequency identification; corporate spreads; local projections; Canada.

\medskip

\noindent \textbf{JEL Codes:} E32, E44, E51, G14, G21, C32.

\end{abstract}

%%%%%%%%%%%%%%%%%%%%%%%%%%%%%%%%%%%%%%%%%%%%%%%%%%%%%%%%%%%%%%%%%%%%%%
%%%%%%%%%%%%%%%%%%%%%%%%%%%%%%%%%%%%%%%%%%%%%%%%%%%%%%%%%%%%%%%%%%%%%%
% -------------------------------------------------
% Introduction
% -------------------------------------------------
\newpage
\section{Introduction} \label{sec:introduction}

What role do banks play in aggregate fluctuations? A large theoretical literature argues that shocks to intermediary balance sheets can affect credit supply, asset prices, and real activity. When intermediaries lose net worth, their ability to expand balance sheets deteriorates, lending spreads increase, and credit to the real economy contracts. Conversely, improvements in intermediary net worth can ease borrowing conditions and stimulate activity. These mechanisms are central to theories of financial amplification and macroeconomic crises \citep{bernanke1983non,bernanke1999financial,kiyotaki1997credit,brunnermeier2014macroeconomic,gertler2015banking}. Yet empirically identifying shocks to bank net worth, and tracing their aggregate effects, remains challenging. Bank equity prices respond to many sources of information, including news about banks' own lending capacity, expected loan demand, borrower fundamentals, aggregate macroeconomic conditions, risk premia, and monetary policy. The central empirical problem is therefore to isolate the component of changes in bank valuations that can be interpreted as credit-supply news.

This paper constructs high-frequency Canadian bank net-worth shocks using stock-price reactions around the earnings announcements of the six large Canadian banks. The empirical strategy builds on the insight of \citet{ottonello2025financial}, who use earnings-announcement windows of large U.S. financial intermediaries to identify high-frequency financial shocks. Earnings announcements are useful for identification because they are predetermined, bank-specific events that reveal information about bank profitability, future net worth, and lending capacity. Around these events, stock-price movements provide a direct market-based measure of the news released about the intermediary. We adapt this idea to the Canadian setting and use it to study the macroeconomic consequences of bank net-worth shocks.

Canada is an especially useful setting for this question. The Canadian banking sector is highly concentrated and dominated by a small number of large institutions. Earnings news about Royal Bank of Canada, Toronto-Dominion Bank, Bank of Montreal, Bank of Nova Scotia, Canadian Imperial Bank of Commerce, and National Bank of Canada therefore represents news about a quantitatively important part of the domestic intermediation sector. This feature distinguishes the Canadian setting from the U.S. banking system. While the U.S. financial sector is larger, intermediation is substantially more fragmented across banks, nonbanks, broker-dealers, and capital-market institutions. By contrast, large Canadian banks account for a central share of domestic financial intermediation and have been studied as a distinctive banking system relative to the United States \citep{allen2006canadian}. Their scale, concentration, and resilience are also emphasized in work comparing Canadian banks to international banking systems, especially around the global financial crisis \citep{ratnovski2009canadian}. This institutional structure makes Canada a natural environment in which to ask whether news about large banks' balance sheets has aggregate effects on borrowing conditions and macroeconomic activity.

The Canadian setting is also important because Canada is a small open economy exposed to global financial and trade shocks. Recent work shows that international monetary disturbances can affect the Canadian economy through both financial conditions and trade linkages.\footnote{A first example is \cite{rey2015dilemma} and \cite{rey2016international}, which present evidence on the international spillovers of US monetary policy in the Canadian economy. Second, \citep{camara2025between} shows that monetary policy shocks from the European Central Bank have significant financial and trade linkages with the Canadian economy.} This paper studies a complementary domestic source of variation in financial outcomes: news about Canadian banks' balance sheets. By focusing on bank earnings announcements, we isolate shocks originating in the domestic intermediation sector and ask whether they propagate to Canadian corporate spreads, asset prices, real activity, and prices.

Our empirical approach closely follows \citet{ottonello2025financial}, but adapts their high-frequency earnings-announcement strategy to the Canadian setting and to a more explicitly macroeconomic question. The first difference is practical. While their baseline identification uses narrow intraday windows around U.S. financial intermediaries' earnings announcements, our implementation uses timing-adjusted daily event windows, reflecting the availability of Canadian bank equity prices and Canadian corporate spread data. For each earnings release, we use the Bloomberg earnings-release timestamp to define the relevant stock-price reaction: previous close to announcement-day open for pre-market releases, open to close for intraday releases, and announcement-day close to next-day open for post-market releases. This preserves the central logic of the high-frequency design, isolating the market reaction around predetermined earnings-news events, while adapting the event window to the Canadian data environment. The second difference is substantive. Whereas \citet{ottonello2025financial} focus primarily on financial-market and cross-sectional firm-level transmission, our analysis uses the identified shocks to study aggregate macroeconomic propagation in Canada. We estimate the dynamic responses of Canadian corporate spreads, bank valuations, equity prices, exchange rates, monetary policy, real activity, prices, sectoral output, and labor market outcomes.

The paper begins with a simple theoretical framework that motivates this identification strategy. Financial intermediaries face costly balance-sheet expansion because external equity issuance is costly, and they compete strategically in the loan market. In this environment, an increase in intermediary net worth lowers the marginal cost of supplying loans, expands credit supply, reduces equilibrium lending spreads, and raises bank equity valuations. The model delivers a clear empirical implication: favorable news about banks' net worth should raise bank stock prices and lower corporate borrowing spreads. This implication guides the construction of the purged credit-supply component of bank equity surprises.

We construct the shocks using Bloomberg data from 2002:Q4 to 2026:Q1. We focus on earnings releases for the Canadian-listed tickers of the six large Canadian banks. We first show that earnings announcements are information events. Announcement days feature substantially larger absolute and squared stock-price movements than non-announcement days, and timing-adjusted event-window stock-price reactions are strongly positively related to Bloomberg earnings surprises. We then aggregate bank-level stock-price reactions using lagged quarterly market-capitalization shares. The use of lagged weights ensures that larger banks receive greater weight in the aggregate surprise while avoiding mechanical effects of the announcement-window price reaction.

The raw weighted bank equity surprise is informative, but it is not yet a clean credit-supply shock. Positive bank earnings news may reflect an improvement in banks' balance-sheet strength and lending capacity, but it may also reflect contaminating information about expected loan demand, borrower fundamentals, aggregate conditions, or risk premia. To purge this contamination, we use the model's key empirical implication: favorable credit-supply news should raise bank equity valuations and lower corporate borrowing spreads. We therefore identify the purged credit-supply shock as the component of bank equity news that moves bank equity prices and Canadian corporate spreads in opposite directions. News that raises bank valuations and corporate spreads in the same direction is treated as contaminating information rather than as the object of interest. Our benchmark implementation uses the median admissible rotational angle, and we also consider a simpler poor man's sign-restriction approach. This use of sign restrictions is related to a broader literature showing that high-frequency financial-market surprises often combine multiple economic forces, so that separating shocks by their asset-price co-movement is important for interpretation \citep{jarocinski2020deconstructing,jarocinski2022central,camara2025spillovers}.

The main empirical results show that favorable purged credit-supply bank net-worth shocks have meaningful aggregate effects. A one-standard-deviation favorable shock lowers Canadian corporate spreads on impact and over the following year. It raises the real aggregate market capitalization of Canadian banks, increases the broader Canadian equity index, and appreciates the Canadian dollar. Real activity rises gradually, with GDP peaking roughly one year after the shock. The Bank of Canada policy rate increases over the medium run, consistent with an endogenous response to improved financial conditions and stronger activity. These responses are consistent with the mechanism emphasized by the model: news that improves intermediary net worth eases corporate credit conditions and propagates to the real economy.

We provide several pieces of additional evidence. First, local projections using the raw weighted bank equity surprise generate broadly similar but less sharply interpretable responses, consistent with the idea that raw earnings news contains both credit-supply news and contaminating information. Second, responses using the poor man's sign-restriction shock are close to the benchmark results, suggesting that the findings are not driven by the specific median-rotation implementation. Third, we use the median-rotation purged credit-supply shock as an external instrument for Canadian corporate spreads. The IV exercise shows that bank-induced increases in corporate spreads lower bank valuations, reduce equity prices, and contract real activity. This result reinforces the interpretation that corporate borrowing spreads are an important transmission channel through which bank net-worth shocks affect the macroeconomy.

The results are robust across specifications and samples. The main propagation patterns survive when we use two lags rather than six lags of macro-financial controls, when we add a deterministic trend, when we restrict the sample to the pre-2020 period, and when we focus on the post-2010 period. Additional outcome exercises show that the real effects extend beyond aggregate GDP. Favorable credit-supply bank net-worth shocks raise employment and total hours worked and reduce unemployment over the medium run. Sectoral responses indicate that the real-side effects are stronger in goods-producing and resource-related sectors than in services, suggesting that the shock operates through credit-sensitive components of economic activity.

\paragraph{Related literature.}
This paper contributes to several literatures. First, it contributes to empirical work on financial intermediation and aggregate fluctuations by providing high-frequency evidence that bank net-worth shocks affect macroeconomic outcomes. Existing work has shown that intermediary balance sheets matter for asset prices, credit supply, and firm outcomes \citep{khwaja2008tracing,adrian2014financial,chodorow2014employment,siriwardane2025segmented}. Recent evidence also shows that global banks' net worth can transmit financial shocks across borders and affect emerging-market macroeconomic and microeconomic outcomes \citep{arnabal2025global}. We complement this literature by focusing on aggregate propagation in Canada, a setting in which a small number of large domestic banks play a central role in financial intermediation and the banking system differs sharply from the more fragmented U.S. system \citep{allen2006canadian,allen2007efficiency,ratnovski2009canadian}.

Second, the paper contributes to the literature using high-frequency identification to study macroeconomic shocks. Following the logic of high-frequency monetary policy identification \citep{kuttner2001monetary,gurkaynak2004actions,nakamura2018high,nakamura2018identification} and building directly on \citet{ottonello2025financial}, we use narrow windows around predetermined information events to identify shocks to the financial sector. The purging step is also related to work showing that high-frequency surprises can mix policy, information, and risk-premium components, and that their macroeconomic interpretation depends on how these components are separated \citep{jarocinski2020deconstructing,camara2025spillovers}. Third, the paper contributes to work on credit-supply shocks and macroeconomic transmission by showing that earnings-announcement news about banks affects corporate spreads, asset prices, real activity, labor markets, and sectoral production.

The rest of the paper proceeds as follows. Section \ref{sec:illustrative_model} presents the illustrative theoretical framework. Section \ref{sec:shock_construction} describes the data, validates the earnings-announcement stock-price reactions, and constructs the purged credit-supply bank net-worth shocks. Section \ref{sec:Macroeconomic_Propagation} estimates the macroeconomic propagation of the shocks using monthly local projections and provides additional evidence from raw shocks, alternative sign restrictions, and an IV spread exercise. Section \ref{sec:robustness} reports additional results and robustness checks. Section \ref{sec:conclusions} concludes.

%%%%%%%%%%%%%%%%%%%%%%%%%%%%%%%%%%%%%%%%%%%%%%%%%%%%%%%%%%%%%%%%%%%%%%
%%%%%%%%%%%%%%%%%%%%%%%%%%%%%%%%%%%%%%%%%%%%%%%%%%%%%%%%%%%%%%%%%%%%%%
\section{An Illustrative Theoretical Framework} \label{sec:illustrative_model}

In this section, we present a simple theoretical framework to motivate the identification strategy and interpret the empirical results. The goal is not to develop a fully dynamic quantitative model of financial intermediation, but rather to formalize the mechanism linking bank net worth, credit supply, lending spreads, and bank equity valuations. The model has three key ingredients. First, financial intermediaries face costly balance-sheet expansion because raising external equity is costly. Second, intermediary net worth affects the marginal cost of supplying loans and therefore equilibrium credit spreads. Third, in a concentrated banking system, news about the capitalization and profitability of large banks can affect aggregate financial conditions.

The model delivers the sign pattern that motivates the empirical strategy. Favorable bank-specific news about intermediary net worth raises the equity valuation of the bank receiving the news. At the same time, higher intermediary net worth expands aggregate credit supply and lowers equilibrium borrowing spreads. Thus, the model predicts that credit-supply news about banks should move bank equity valuations and corporate borrowing spreads in opposite directions. We use this sign pattern below as an identifying restriction to separate the credit-supply component of bank equity surprises from other information released during earnings announcements.

%%%%%%%%%%%%%%%%%%%%%%%%%%%%%%%%%%%%%%%%%%%%%%%%%%%%%%%%%%%%%%%%%%%%%%
\subsection{Environment}

Figure \ref{fig:stylized_model_economy} summarizes the structure of the economy. The economy has two periods, \(t\in\{0,1\}\), and is populated by households, nonfinancial firms, and a finite set of financial intermediaries indexed by \(i\in\{1,\dots,N\}\). Households supply deposits and equity funding to intermediaries and own bank equity. Intermediaries enter the period with net worth \(n_i\), raise deposits and external equity, and supply loans to firms. Firms borrow to finance investment projects, produce at \(t=1\), and repay intermediaries. The assumption that \(N\) is finite captures the concentrated structure of the Canadian banking system. This concentration is central for the mechanism: news about the profitability and capitalization of a small number of large banks can represent news about aggregate intermediation capacity.

%%%%%%%%%%%%%%%%%%%%%%%%%%%%%%%%%%%%%%%%%%%%%%%%%%%%%%%%%%%%%%%%%%%%%%
\begin{figure}[H]
\centering
\begin{tikzpicture}[
    font=\footnotesize,
    >=Latex,
    node distance=1.4cm and 1.6cm,
    box/.style={
        draw,
        thick,
        rounded corners=2pt,
        fill=white,
        text width=2.7cm,
        minimum height=0.9cm,
        align=center,
        inner sep=4pt
    },
    bank/.style={
        draw,
        thick,
        rounded corners=2pt,
        fill=blue!6,
        text width=3.0cm,
        minimum height=0.95cm,
        align=center,
        inner sep=4pt
    },
    result/.style={
        draw,
        thick,
        rounded corners=2pt,
        fill=gray!10,
        text width=7.4cm,
        minimum height=0.85cm,
        align=center,
        inner sep=5pt
    },
    arr/.style={->, thick},
    shock/.style={->, very thick, blue!70!black},
    lab/.style={font=\scriptsize, midway, fill=white, inner sep=1pt}
]

\node[box] (hh) {
    \textbf{Households}\\
    deposits, equity
};

\node[bank, right=of hh] (bank) {
    \textbf{Banks}\\
    net worth \(n_i\)\\
    choose loans \(\ell_i\)
};

\node[box, right=of bank] (firms) {
    \textbf{Firms}\\
    borrow, invest
};

\node[box, below=1.05cm of bank] (frictions) {
    \textbf{Frictions}\\
    \(\ell_i\leq \lambda(n_i+e_i)\)\\
    \(\Psi(e_i)=\frac{\kappa}{2}e_i^2\)
};

\node[box, below=1.05cm of firms] (market) {
    \textbf{Credit market}\\
    \(R(L)=\bar A(1-L)\)\\
    \(S=R-R_D\)
};

\node[result, below=1.15cm of frictions, xshift=1.45cm] (result) {
    \(n_i \uparrow \Rightarrow V_i\uparrow,\; L^*\uparrow,\; S^*\downarrow\)
};

\draw[arr] (hh) -- node[lab, above] {\(d_i,e_i\)} (bank);
\draw[arr] (bank) -- node[lab, above] {\(\ell_i\)} (firms);
\draw[arr] (firms) to[bend left=18] node[lab, below] {repayment} (bank);

\draw[arr] (frictions) -- node[lab, left] {balance-sheet cost} (bank);
\draw[arr] (firms) -- node[lab, right] {credit demand} (market);
\draw[arr] (market) -- node[lab, below] {\(R,S\)} (frictions);

\draw[shock] (bank) -- node[lab, right] {higher net worth} (frictions);
\draw[shock] (frictions) -- node[lab, left] {easier credit supply} (result);

\end{tikzpicture}

\caption{Stylized Model Economy}
\label{fig:stylized_model_economy}
\floatfoot{\textbf{Notes:} Households supply deposits and equity to banks. Banks enter with net worth, face leverage and external-equity-issuance frictions, and supply loans to firms. Higher bank net worth relaxes effective lending capacity, raises aggregate lending, lowers spreads, and increases the equity value of the bank receiving the net-worth shock.}
\end{figure}
%%%%%%%%%%%%%%%%%%%%%%%%%%%%%%%%%%%%%%%%%%%%%%%%%%%%%%%%%%%%%%%%%%%%%%

\paragraph{Timing.}

At \(t=0\), intermediary \(i\) enters with net worth \(n_i\). Intermediaries raise deposits and, if needed, external equity from households. They then choose loan supply and lend to nonfinancial firms. Firms use these loans to finance investment projects. At \(t=1\), production takes place, firms repay their loans, and intermediaries distribute profits to households. Bank equity values at \(t=0\) therefore reflect the expected present discounted value of intermediary profits.

%%%%%%%%%%%%%%%%%%%%%%%%%%%%%%%%%%%%%%%%%%%%%%%%%%%%%%%%%%%%%%%%%%%%%%
\paragraph{Households.}

Households are risk-neutral and value consumption according to
\[
U=c_0+\beta c_1,
\]
where \(\beta\in(0,1)\). They supply risk-free deposits to intermediaries and own bank equity. Risk neutrality implies that the gross deposit rate is pinned down by
\[
R_D=\frac{1}{\beta}.
\]
This deposit rate is the intermediaries' risk-free funding cost. Since households own intermediary equity, bank equity valuations move with expected intermediary profits.

%%%%%%%%%%%%%%%%%%%%%%%%%%%%%%%%%%%%%%%%%%%%%%%%%%%%%%%%%%%%%%%%%%%%%%
\paragraph{Firms and credit demand.}

Nonfinancial firms require external finance to operate. Each project requires one unit of funding at \(t=0\) and produces output at \(t=1\). Projects differ in quality \(z\), where
\[
z\sim U[0,A].
\]
Only a fraction \(\lambda_f\in(0,1)\) of output can be pledged to lenders. A firm borrowing at gross loan rate \(R\) invests whenever pledgeable output is sufficient to repay the loan:
\[
R\leq \lambda_f z.
\]
Thus, lower loan rates allow more firms to borrow and invest. Aggregating across firms yields a downward-sloping inverse demand for credit:
\[
R(L)=\bar A(1-L),
\qquad
\bar A\equiv \lambda_f A,
\]
where \(L\) denotes aggregate lending. This reduced-form credit demand links changes in intermediary loan supply to equilibrium lending rates and credit spreads.

%%%%%%%%%%%%%%%%%%%%%%%%%%%%%%%%%%%%%%%%%%%%%%%%%%%%%%%%%%%%%%%%%%%%%%
\paragraph{Financial intermediaries.}

Financial intermediaries supply credit to firms. Intermediary \(i\) chooses loan supply \(\ell_i\), and aggregate lending is
\[
L=\sum_{i=1}^{N}\ell_i.
\]
Because the banking sector is concentrated, intermediaries compete as Cournot lenders in the loan market: each intermediary chooses its own loan supply taking competitors' lending as given. This strategic interaction gives individual banks market power and implies that changes in the lending capacity of large banks can affect aggregate credit conditions.

Intermediary \(i\) finances lending with deposits \(d_i\), internal net worth \(n_i\), and newly issued external equity \(e_i\):
\[
\ell_i=d_i+n_i+e_i.
\]
External equity issuance is costly:
\[
\Psi(e_i)=\frac{\kappa}{2}e_i^2,
\qquad
\kappa>0.
\]
Intermediaries are also subject to a leverage constraint:
\[
\ell_i\leq \lambda(n_i+e_i),
\qquad
\lambda>1.
\]
These two frictions make intermediary net worth valuable. Following the quantitative corporate finance literature (e.g., \cite{gomes2001financing}, \cite{hennessy2007costly}), these costs are designed to capture flotation costs and adverse-selection premia associated with raising external equity. A better-capitalized intermediary can support more lending while relying less on costly external equity. Therefore, higher net worth lowers the marginal cost of supplying credit, expands loan supply, and compresses the spread between the lending rate and the deposit rate:
\[
S(L)\equiv R(L)-R_D.
\]

This environment delivers the sign pattern used below. Favorable bank-specific news about intermediary net worth raises the announcing bank's expected profits and equity valuation. At the same time, it relaxes effective balance-sheet constraints, expands aggregate credit supply, and lowers corporate borrowing spreads. The next subsection formalizes this mechanism through the intermediary's optimization problem.

%%%%%%%%%%%%%%%%%%%%%%%%%%%%%%%%%%%%%%%%%%%%%%%%%%%%%%%%%%%%%%%%%%%%%%
\subsection{Bank Problem}

Bank \(i\) chooses loan supply \(\ell_i\) taking the lending decisions of other banks as given. Let
\[
L=\ell_i+L_{-i},
\qquad
L_{-i}\equiv\sum_{j\neq i}\ell_j,
\]
where \(L_{-i}\) denotes total lending by bank \(i\)'s competitors. The gross loan rate is determined by inverse credit demand:
\[
R(L)=\bar A(1-L).
\]
Because banks compete as Cournot lenders, each bank internalizes the effect of its own lending decision on the aggregate lending rate.

Bank \(i\)'s balance sheet is:
\[
\ell_i=d_i+n_i+e_i,
\]
where \(d_i\) denotes deposits, \(n_i\) is initial net worth, and \(e_i\) is newly issued external equity. Deposits are risk-free and pay the gross rate \(R_D\). External equity issuance is costly, with cost
\[
\Psi(e_i)=\frac{\kappa}{2}e_i^2.
\]
The bank is also subject to the leverage constraint:
\[
\ell_i\leq \lambda(n_i+e_i).
\]
We focus on the region in which this constraint binds and banks issue positive external equity. Under a binding leverage constraint,
\[
e_i=\frac{\ell_i}{\lambda}-n_i.
\]
This expression is central to the mechanism: for a given level of lending, a bank with higher net worth needs to issue less costly external equity.

Bank \(i\)'s period-1 profits are:
\[
\Pi_i
=
\left[
R(L)-R_D
\right]\ell_i
+
R_D(n_i+e_i)
-
\frac{\kappa}{2}e_i^2.
\]
The first term is net interest income from lending: the bank earns the lending spread \(R(L)-R_D\) on loans \(\ell_i\). The second term reflects the gross payoff on bank capital, \(n_i+e_i\), and the final term is the cost of raising external equity.

Substituting the binding leverage constraint into profits gives the bank's problem in terms of loan supply:
\[
\max_{\ell_i}
\quad
\Pi_i
=
\left[
R(\ell_i+L_{-i})-R_D
\right]\ell_i
+
R_D
\left(
\frac{\ell_i}{\lambda}
\right)
-
\frac{\kappa}{2}
\left(
\frac{\ell_i}{\lambda}-n_i
\right)^2.
\]
Using \(R(L)=\bar A(1-L)\), this can be written as:
\[
\max_{\ell_i}
\quad
\Pi_i
=
\left[
\bar A(1-\ell_i-L_{-i})-R_D
\right]\ell_i
+
R_D
\left(
\frac{\ell_i}{\lambda}
\right)
-
\frac{\kappa}{2}
\left(
\frac{\ell_i}{\lambda}-n_i
\right)^2.
\]
The first-order condition is:
\[
\frac{\partial \Pi_i}{\partial \ell_i}
=
\bar A(1-\ell_i-L_{-i})
-
\bar A\ell_i
-
R_D
+
\frac{R_D}{\lambda}
-
\frac{\kappa}{\lambda}
\left(
\frac{\ell_i}{\lambda}-n_i
\right)
=0.
\]
Equivalently,
\[
\bar A
-
\bar A L_{-i}
-
2\bar A\ell_i
-
R_D\left(1-\frac{1}{\lambda}\right)
-
\frac{\kappa}{\lambda}
\left(
\frac{\ell_i}{\lambda}-n_i
\right)
=0.
\]

This condition shows how bank net worth affects credit supply. Higher \(n_i\) lowers the amount of costly external equity required to support a given level of lending. This reduces the marginal cost of balance-sheet expansion and increases the bank's optimal loan supply. In equilibrium, the resulting increase in aggregate credit supply lowers the lending rate and compresses the corporate borrowing spread.

%%%%%%%%%%%%%%%%%%%%%%%%%%%%%%%%%%%%%%%%%%%%%%%%%%%%%%%%%%%%%%%%%%%%%%
\subsection{Equilibrium and Comparative Statics}

We first characterize a symmetric Cournot equilibrium in which all banks choose the same loan supply and have the same initial net worth:
\[
\ell_i=\ell,
\qquad
n_i=n,
\]
for all \(i\in\{1,\dots,N\}\). Aggregate lending is therefore
\[
L=N\ell.
\]
In a symmetric equilibrium, bank \(i\)'s competitors supply
\[
L_{-i}=(N-1)\ell.
\]
Substituting this condition into the bank's first-order condition gives
\[
\bar A
-
\bar A(N+1)\ell
-
R_D
\left(
1-\frac{1}{\lambda}
\right)
-
\frac{\kappa}{\lambda}
\left(
\frac{\ell}{\lambda}-n
\right)
=0.
\]
Solving for equilibrium lending per bank yields
\[
\ell^*
=
\frac{
\bar A
-
R_D\left(1-\frac{1}{\lambda}\right)
+
\frac{\kappa}{\lambda}n
}{
\bar A(N+1)
+
\frac{\kappa}{\lambda^2}
}.
\]
Aggregate equilibrium lending is
\[
L^*
=
N\ell^*
=
N
\frac{
\bar A
-
R_D\left(1-\frac{1}{\lambda}\right)
+
\frac{\kappa}{\lambda}n
}{
\bar A(N+1)
+
\frac{\kappa}{\lambda^2}
}.
\]
The equilibrium lending rate is pinned down by inverse credit demand:
\[
R^*
=
\bar A(1-L^*),
\]
and the equilibrium corporate borrowing spread is
\[
S^*
=
R^*-R_D.
\]

We focus on the equilibrium region in which lending is positive, banks issue positive external equity, leverage constraints bind, and equilibrium lending spreads are positive. Positive lending requires
\[
\bar A
-
R_D\left(1-\frac{1}{\lambda}\right)
+
\frac{\kappa}{\lambda}n
>0,
\]
while positive equilibrium spreads require
\[
S^*
=
\bar A(1-L^*)-R_D
>0.
\]
Appendix \ref{appendix:proofs} describes the maintained equilibrium region in more detail.

The first comparative static concerns a common increase in bank net worth. Differentiating equilibrium lending per bank with respect to the common net-worth level \(n\) gives
\[
\frac{\partial \ell^*}{\partial n}
=
\frac{
\kappa/\lambda
}{
\bar A(N+1)+\kappa/\lambda^2
}
>0.
\]
Therefore,
\[
\frac{\partial L^*}{\partial n}
=
N
\frac{\partial \ell^*}{\partial n}
>0.
\]
Higher common bank net worth increases equilibrium lending by reducing the amount of costly external equity needed to support a given balance-sheet size.

Since the equilibrium lending rate is decreasing in aggregate lending,
\[
R^*=\bar A(1-L^*),
\]
we have
\[
\frac{\partial R^*}{\partial n}
=
-\bar A
\frac{\partial L^*}{\partial n}
<0.
\]
The deposit rate \(R_D\) is fixed by households' Euler equation, so
\[
\frac{\partial S^*}{\partial n}
=
\frac{\partial R^*}{\partial n}
<0.
\]
A common increase in bank net worth therefore expands credit supply and compresses corporate borrowing spreads.

The second comparative static concerns a bank-specific increase in net worth. This is the perturbation that most closely corresponds to the empirical setting, where earnings announcements reveal news about a particular bank. Consider a marginal increase in \(n_i\), evaluated around the symmetric equilibrium. Define
\[
a
=
\bar A+\frac{\kappa}{\lambda^2},
\qquad
D
=
\bar A(N+1)+\frac{\kappa}{\lambda^2}.
\]
Solving the linear Cournot system around the symmetric equilibrium gives
\[
\frac{\partial L^*}{\partial n_i}
=
\frac{\kappa/\lambda}{D}
>0.
\]
Thus, a bank-specific increase in net worth raises aggregate lending. Since the equilibrium lending rate is \(R^*=\bar A(1-L^*)\), it follows that
\[
\frac{\partial S^*}{\partial n_i}
=
\frac{\partial R^*}{\partial n_i}
=
-\bar A
\frac{\partial L^*}{\partial n_i}
<0.
\]
A favorable bank-specific net-worth shock therefore expands aggregate credit supply and compresses corporate borrowing spreads.

The valuation effect is also positive. Optimized profits of bank \(i\) are
\[
\Pi_i^*
=
\left[
R(L^*)-R_D
\right]\ell_i^*
+
R_D
\left(
\frac{\ell_i^*}{\lambda}
\right)
-
\frac{\kappa}{2}
\left(
\frac{\ell_i^*}{\lambda}-n_i
\right)^2.
\]
For a bank-specific perturbation, the envelope theorem eliminates the effect of bank \(i\)'s own optimal lending response, but competitors' lending can also respond. The total derivative is
\[
\frac{d\Pi_i^*}{dn_i}
=
\kappa e_i^*
-
\bar A \ell_i^*
\frac{\partial L_{-i}^*}{\partial n_i},
\]
where \(L_{-i}^*\) is total lending by bank \(i\)'s competitors. In the Cournot equilibrium, loans are strategic substitutes, so
\[
\frac{\partial L_{-i}^*}{\partial n_i}
=
-(N-1)
\frac{(\kappa/\lambda)\bar A}{aD}
<0.
\]
Therefore,
\[
\frac{d\Pi_i^*}{dn_i}
=
\kappa e_i^*
+
\bar A \ell_i^*(N-1)
\frac{(\kappa/\lambda)\bar A}{aD}
>0,
\]
where the inequality uses \(e_i^*>0\) in the maintained equilibrium region. Bank equity values reflect the present discounted value of expected bank profits:
\[
V_i=\beta \mathbb E[\Pi_i^*],
\]
so \(dV_i/dn_i>0\) under the maintained equilibrium conditions.

These results deliver the model's key sign implication. Favorable bank-specific news about intermediary net worth raises the announcing bank's equity valuation, expands aggregate credit supply, and lowers corporate borrowing spreads:
\[
n_i \uparrow
\quad\Rightarrow\quad
V_i \uparrow,
\qquad
L^* \uparrow,
\qquad
S^* \downarrow.
\]
This joint movement of bank equity prices and corporate spreads motivates the empirical sign restriction used below.

%%%%%%%%%%%%%%%%%%%%%%%%%%%%%%%%%%%%%%%%%%%%%%%%%%%%%%%%%%%%%%%%%%%%%%
\subsection{Implications for Identification}

The model provides a direct interpretation for stock-price reactions around bank earnings announcements. In the model, bank equity values reflect expected future profits. Because profits depend on lending capacity, balance-sheet costs, and equilibrium lending spreads, news about bank net worth is reflected in bank stock prices.

The model also clarifies why raw bank equity surprises are not automatically clean credit-supply shocks. Earnings announcements may reveal favorable information about a bank's own balance-sheet strength and lending capacity, but they may also contain information about aggregate fundamentals, borrower quality, expected credit demand, or risk premia. These sources of news can all move bank valuations, but they do not have the same implications for credit supply.

The sign implication of the model is the joint response of bank equity prices and corporate borrowing spreads. A favorable bank-specific credit-supply shock raises the announcing bank's net worth and equity valuation. At the same time, it relaxes effective balance-sheet constraints, expands aggregate credit supply, and lowers corporate spreads. Therefore, the component of bank equity news most closely aligned with the model should move bank valuations and corporate spreads in opposite directions:
\[
\Delta V_i>0,
\qquad
\Delta S<0.
\]
Equivalently, favorable credit-supply news should be associated with positive bank stock-price reactions and declines in the Canadian corporate OAS.

Other information released during earnings announcements can also move bank valuations. News about stronger expected loan demand, for example, may raise expected bank profitability while putting upward pressure on borrowing costs. Such news moves bank equity prices and corporate spreads in the same direction and is therefore not the credit-supply object emphasized by the model. Other forms of news, such as risk-premium or aggregate-fundamentals news, may be harder to distinguish using asset-price signs alone. For this reason, the sign pattern should be interpreted as an identifying restriction motivated by the model, not as a mechanical classification rule.

This distinction motivates the empirical strategy in the next section. We first construct raw bank equity surprises from timing-adjusted stock-price reactions around Canadian bank earnings announcements. We then use the co-movement between these bank equity surprises and Canadian corporate spreads to isolate the component most closely aligned with the model's prediction: favorable bank net-worth news raises bank valuations and lowers corporate borrowing spreads.

%%%%%%%%%%%%%%%%%%%%%%%%%%%%%%%%%%%%%%%%%%%%%%%%%%%%%%%%%%%%%%%%%%%%%%

%%%%%%%%%%%%%%%%%%%%%%%%%%%%%%%%%%%%%%%%%%%%%%%%%%%%%%%%%%%%%%%%%
%%%%%%%%%%%%%%%%%%%%%%%%%%%%%%%%%%%%%%%%%%%%%%%%%%%%%%%%%%%%%%%%%
\section{Data \& Empirical Strategy} \label{sec:shock_construction}

This section describes the construction of Canadian bank net-worth shocks from earnings-announcement stock-price reactions. The empirical strategy uses Bloomberg data on bank equity prices, earnings-announcement events, market capitalization, and Canadian corporate credit spreads. We focus on the six large publicly traded Canadian banks: Royal Bank of Canada, Toronto-Dominion Bank, Bank of Montreal, Bank of Nova Scotia, Canadian Imperial Bank of Commerce, and National Bank of Canada. These institutions account for the bulk of the publicly traded Canadian banking sector's market value and closely correspond to the large intermediaries in the theoretical framework of Section \ref{sec:illustrative_model}.

The sample runs from 2002:Q4 to 2026:Q1. The starting date is determined by the availability of the Bloomberg Canadian corporate option-adjusted spread series, which we use to measure aggregate corporate credit conditions and to purge raw bank equity surprises that could contaminate the information. The construction proceeds in three steps. First, we document the concentrated structure of the Canadian banking sector and validate that earnings announcements are information events for bank equity prices. Second, we construct timing-adjusted stock-price reactions around each bank earnings announcement and aggregate them using lagged market-capitalization shares. Third, guided by the model's prediction that favorable credit-supply news raises bank valuations and lowers corporate spreads, we use the co-movement between bank equity reactions and Canadian corporate OAS changes to isolate the purged credit-supply component of bank net-worth news.

%%%%%%%%%%%%%%%%%%%%%%%%%%%%%%%%%%%%%%%%%%%%%%%%%%%%%%%%%%%%%%%%%%%%%%
\paragraph{Market capitalization shares.}

A key feature of the Canadian banking system is its high degree of concentration. Figure \ref{fig:bank_market_cap_shares} plots each bank's share of total market capitalization among the six large Canadian banks from 2002:Q4 to 2026:Q1. The figure shows that the market value of the publicly traded Canadian banking sector is persistently concentrated among a small number of institutions. RBC and TD account for the largest shares throughout most of the sample, while BMO, Scotiabank, CIBC, and National Bank account for the remainder. This concentration is central for the empirical design: an earnings surprise for a large bank represents news about a quantitatively important part of the Canadian intermediation sector.

%%%%%%%%%%%%%%%%%%%%%%%%%%%%%%%%%%%%%%%%%%%%%%%%%%%%%%%%%%%%%%%%%%%%%%
\begin{figure}[H]
\centering
\includegraphics[width=0.90\textwidth]{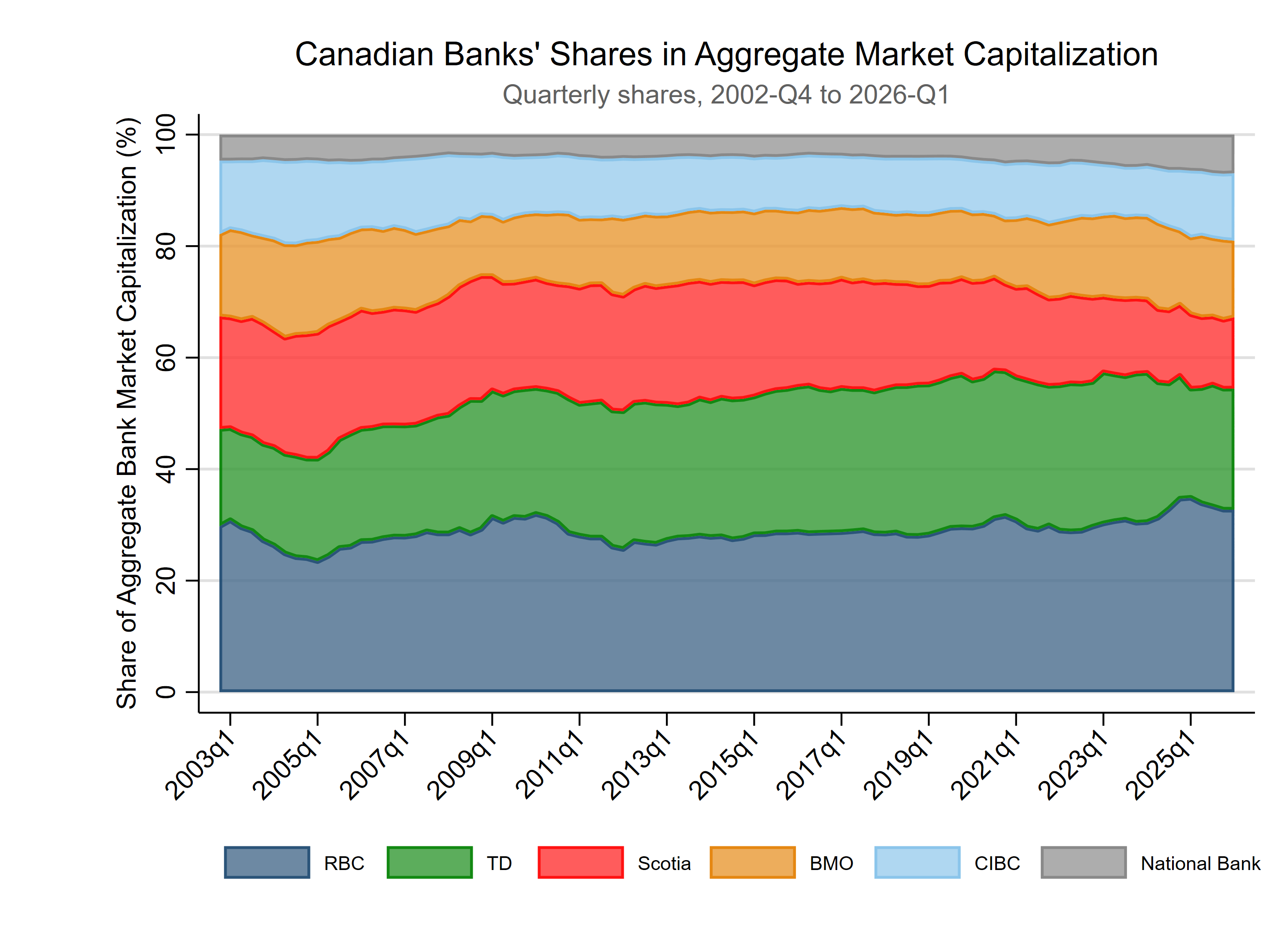}
\caption{Canadian Banks' Shares in Aggregate Market Capitalization}
\label{fig:bank_market_cap_shares}
\floatfoot{\textbf{Notes:} This figure plots quarterly market capitalization shares for the six large Canadian banks in the sample. Shares are computed as each bank's market capitalization divided by total market capitalization across RBC, TD, Scotiabank, BMO, CIBC, and National Bank.}
\end{figure}
%%%%%%%%%%%%%%%%%%%%%%%%%%%%%%%%%%%%%%%%%%%%%%%%%%%%%%%%%%%%%%%%%%%%%%

At this stage, the market-capitalization shares are used only to document the structure of the banking sector and motivate the aggregation procedure introduced below. The descriptive exercises that follow first validate that earnings announcements are information events for bank equity prices. We then use lagged market-capitalization shares to aggregate timing-adjusted bank-level stock-price reactions into a raw Canadian bank equity surprise.

%%%%%%%%%%%%%%%%%%%%%%%%%%%%%%%%%%%%%%%%%%%%%%%%%%%%%%%%%%%%%%%%%%%%%%
\paragraph{Earnings announcements and event windows.}

We collect earnings-announcement events from Bloomberg for the Canadian-listed tickers of the six large Canadian banks. We focus on events classified by Bloomberg as earnings releases, corresponding to event type ``ER.'' For each event, we retain the earnings-release date and, when available, the associated timestamp. When Bloomberg reports an earnings-release date but no announcement time, we classify the event as occurring before the market opens.

We distinguish between two return windows. For the descriptive comparison between announcement and non-announcement days, we use the same close-to-close stock-price change for both samples:
\[
\Delta p^{CC}_{i,t}
=
100\left[
\log(P^{close}_{i,t})
-
\log(P^{close}_{i,t-1})
\right].
\]
This same-window comparison allows us to test whether earnings-announcement days are associated with larger bank equity-price movements than other trading days without mechanically changing the length of the return window.

For the construction of bank equity surprises, however, we use a timing-adjusted event window around each earnings announcement. This is the relevant object for identification because it isolates the stock-price movement over the window in which investors first incorporate the earnings news. For pre-open announcements, we compute the change from the previous trading day's close to the announcement-day open. For intraday announcements, we compute the change from the announcement-day open to the announcement-day close. For post-close announcements, we compute the change from the announcement-day close to the next trading day's open. We denote this timing-adjusted event-window stock-price reaction by
\[
\Delta p^{ER}_{i,t}.
\]

We use the Bloomberg Canadian corporate option-adjusted spread series to measure aggregate corporate credit conditions. Let \(OAS_t\) denote the Canadian corporate option-adjusted spread. Since the OAS series is observed at the daily frequency, OAS changes cannot be measured over the same intraday windows as stock prices. We therefore construct OAS changes using the closest daily window implied by the earnings-release timing. For pre-open and intraday announcements, we measure the OAS change from the previous trading day to the announcement day. For post-close announcements, we measure the OAS change from the announcement day to the next trading day.

%%%%%%%%%%%%%%%%%%%%%%%%%%%%%%%%%%%%%%%%%%%%%%%%%%%%%%%%%%%%%%%%%%%%%%
\paragraph{Announcement days as information events.}

We next verify that earnings-announcement days are information events for Canadian bank equity prices. Because the empirical strategy below studies the co-movement between bank equity reactions and corporate spread changes, we restrict the analysis to a common sample in which both the stock-price change and the OAS change are observed. Table \ref{tab:announcement_price_summary_same_window} compares stock-price movements on announcement and non-announcement days using same-window close-to-close stock returns. The table reports the raw close-to-close price change, its absolute value, and its square. The raw return captures the direction of the stock-price movement, while the absolute and squared returns measure the magnitude of the price movement regardless of sign.

%%%%%%%%%%%%%%%%%%%%%%%%%%%%%%%%%%%%%%%%%%%%%%%%%%%%%%%%%%%%%%%%%%%%%%
\begin{table}[H]
\centering
\caption{Bank Stock-Price Movements Around Canadian Bank Earnings Announcements}
\label{tab:announcement_price_summary_same_window}
\begin{threeparttable}
\scriptsize
\resizebox{\textwidth}{!}{
\begin{tabular}{lccccccccc}
\toprule
& \multicolumn{3}{c}{Non-Announcement Days} 
& \multicolumn{3}{c}{Announcement Days}
& \multicolumn{3}{c}{Tests -- $p$-values} \\
\cmidrule(lr){2-4} \cmidrule(lr){5-7} \cmidrule(lr){8-10}
Variable 
& $N$ & Mean & Median 
& $N$ & Mean & Median
& $t$-test 
& Variance-test 
& Median-test \\
\midrule

Close-to-close price change
& 33,513 & 0.031 & 0.069
& 543 & 0.012 & -0.099
& 0.750 & 0.000 & 0.181 \\

Absolute close-to-close price change
& 33,513 & 0.839 & 0.563
& 543 & 2.079 & 1.784
& 0.000 & 0.000 & 0.000 \\

Squared close-to-close price change
& 33,513 & 1.747 & 0.317
& 543 & 7.009 & 3.184
& 0.000 & 0.000 & 0.000 \\

\bottomrule
\end{tabular}
}
\footnotesize
\floatfoot{
\textit{Notes:} This table reports pooled summary statistics for bank stock-price movements on Canadian bank earnings-announcement and non-announcement days. For this descriptive comparison, stock-price changes are measured using the same close-to-close window for both groups, $\Delta p^{CC}_{i,t}=100[\log(P^{close}_{i,t})-\log(P^{close}_{i,t-1})]$. The sample is restricted to observations for which both the stock-price change and the Canadian corporate OAS change are observed. The final three columns report $p$-values from tests comparing announcement and non-announcement days: a two-sample $t$-test for equality of means, a variance test, and a Wilcoxon rank-sum test for differences in medians. The high-frequency shocks used in the empirical analysis are not constructed from close-to-close returns; they are constructed using timing-adjusted event-window stock-price reactions around earnings announcements.}
\end{threeparttable}
\end{table}
%%%%%%%%%%%%%%%%%%%%%%%%%%%%%%%%%%%%%%%%%%%%%%%%%%%%%%%%%%%%%%%%%%%%%%

Table \ref{tab:announcement_price_summary_same_window} shows that earnings announcements are associated with substantially larger movements in bank equity prices. The mean and median of raw close-to-close returns are not statistically different between announcement and non-announcement days, as expected, because earnings news can be either positive or negative. However, the variance test strongly rejects the null of equal variances for raw close-to-close returns.

The magnitude of price movements is much larger on announcement days. The average absolute close-to-close price change rises from 0.839 percentage points on non-announcement days to 2.079 percentage points on announcement days, while the median rises from 0.563 to 1.784 percentage points. Squared price changes display the same pattern: the average squared return rises from 1.747 to 7.009. For both absolute and squared price changes, the equality-of-means, equality-of-variances, and equality-of-medians tests all reject at conventional levels. These results confirm that earnings announcements release information that is incorporated into Canadian bank equity prices.

This table is used only to validate that earnings announcements are information events. The shocks used in the empirical analysis are constructed from the timing-adjusted event-window stock-price reaction \(\Delta p^{ER}_{i,t}\), not from close-to-close returns. The distinction is important: close-to-close returns provide a comparable benchmark across announcement and non-announcement days, while timing-adjusted event-window returns isolate the high-frequency market reaction to the earnings release.

%%%%%%%%%%%%%%%%%%%%%%%%%%%%%%%%%%%%%%%%%%%%%%%%%%%%%%%%%%%%%%%%%%%%%%
\paragraph{Validating stock-price reactions with earnings surprises.}

Before aggregating bank-level surprises, we verify that the timing-adjusted event-window stock-price changes capture the information released in banks' earnings announcements. Bloomberg reports an earnings surprise for each announcement, constructed as the difference between realized earnings and the pre-announcement analyst consensus estimate. If the event-window stock-price reaction captures news about bank profitability, then positive Bloomberg earnings surprises should be associated with positive stock-price reactions.

Figure \ref{fig:scatter_price_surprise_pooled} shows this relationship in the pooled sample. Larger positive Bloomberg earnings surprises are associated with larger positive announcement-window stock-price reactions. This positive relationship supports the interpretation of \(\Delta p^{ER}_{i,t}\) as a bank-level equity-market response to earnings news rather than as an arbitrary daily price movement. This positive relationship supports the interpretation of \(\Delta p^{ER}_{i,t}\) as an equity-market response to earnings news rather than as an arbitrary daily price movement. Separating the component of this earnings news that corresponds to credit-supply variation is the purpose of the sign-restriction step below.

%%%%%%%%%%%%%%%%%%%%%%%%%%%%%%%%%%%%%%%%%%%%%%%%%%%%%%%%%%%%%%%%%%%%%%
\begin{figure}[H]
\centering
\includegraphics[width=0.90\textwidth]{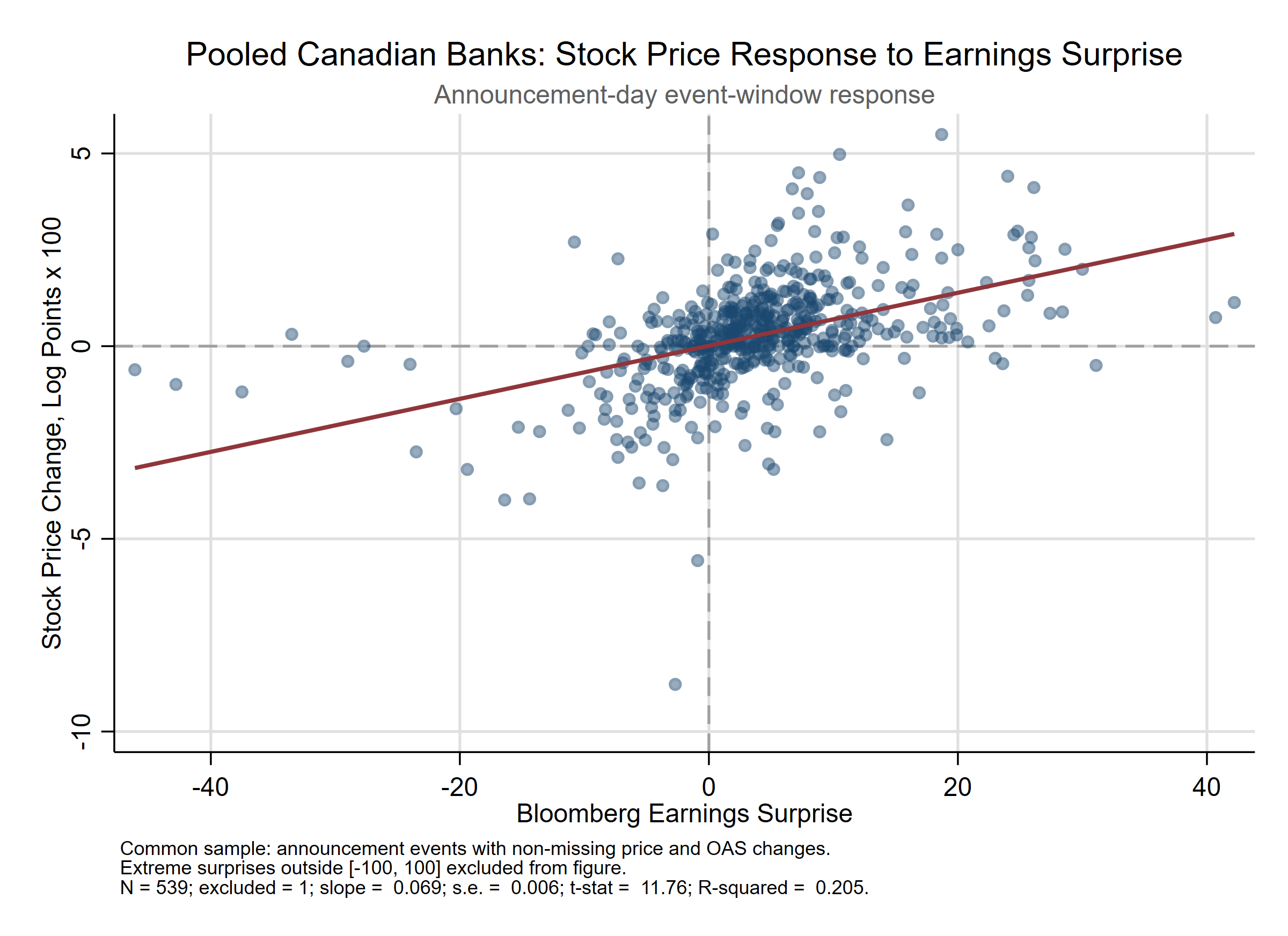}
\caption{Stock-Price Responses to Bloomberg Earnings Surprises: Pooled Canadian Banks}
\label{fig:scatter_price_surprise_pooled}
\floatfoot{\textbf{Notes:} This figure plots timing-adjusted announcement-window stock-price changes against Bloomberg earnings surprises for the pooled sample of Canadian bank earnings announcements. Stock-price changes are log price changes multiplied by 100. Bloomberg earnings surprises are constructed as the difference between realized earnings and the pre-announcement analyst consensus estimate. The sample is restricted to announcement events with non-missing stock-price and OAS changes. Extreme surprises outside $[-100,100]$ are excluded from the figure. The fitted line corresponds to an OLS regression of the event-window stock-price change on the Bloomberg earnings surprise.}
\end{figure}
%%%%%%%%%%%%%%%%%%%%%%%%%%%%%%%%%%%%%%%%%%%%%%%%%%%%%%%%%%%%%%%%%%%%%%

The pooled relationship in Figure \ref{fig:scatter_price_surprise_pooled} summarizes the main validation exercise. Appendix \ref{appendix:additional_details_surprises} presents the corresponding bank-by-bank scatter plots and shows that the positive relationship between Bloomberg earnings surprises and announcement-window stock-price reactions is not driven by a single institution. This evidence motivates using timing-adjusted announcement-window stock-price changes as the bank-level surprises from which we construct the aggregate Canadian bank equity shock.

%%%%%%%%%%%%%%%%%%%%%%%%%%%%%%%%%%%%%%%%%%%%%%%%%%%%%%%%%%%%%%%%%%%%%%
\paragraph{Aggregating bank-level surprises.}

Having established that timing-adjusted announcement-window stock-price reactions capture earnings news, we aggregate bank-level surprises into a Canadian bank equity surprise. The aggregation uses lagged market-capitalization shares, motivated by the concentration of the Canadian banking sector documented in Figure \ref{fig:bank_market_cap_shares}. An earnings surprise for a larger bank represents news about a larger share of the Canadian intermediation sector and should therefore receive more weight in the aggregate shock.

Let \(MC_{i,q}\) denote the market capitalization of bank \(i\) in quarter \(q\), and define total market capitalization across the six banks as
\[
MC_q
=
\sum_{j=1}^{N}MC_{j,q}.
\]
Bank \(i\)'s market-capitalization share is
\[
\omega_{i,q}
=
\frac{MC_{i,q}}{\sum_{j=1}^{N}MC_{j,q}}.
\]
For an earnings announcement occurring on date \(t\), let \(q(t)\) denote the corresponding quarter. The raw aggregate Canadian bank equity surprise is
\[
v_t
=
\sum_{i=1}^{N}
\omega_{i,q(t)-1}
\Delta p^{ER}_{i,t}.
\]
The weights are lagged by one quarter, so they are predetermined relative to the earnings announcement and are not mechanically affected by the announcement-window stock-price reaction. Positive values of \(v_t\) correspond to favorable news about the market value of the Canadian banking sector, while negative values correspond to adverse bank equity news.

Before using \(v_t\) for identification, we also examine whether the underlying bank-level announcement-window stock-price reactions are predictable from previous earnings-announcement reactions. Appendix \ref{appendix:autocorrelation_surprises} reports bank-level and pooled serial-correlation tests. The evidence provides little indication of systematic persistence: event-time autocorrelations are generally small and centered near zero. This supports the interpretation of the announcement-window stock-price reactions as high-frequency surprises rather than predictable components of bank equity returns.

%%%%%%%%%%%%%%%%%%%%%%%%%%%%%%%%%%%%%%%%%%%%%%%%%%%%%%%%%%%%%%%%%%%%%%
\paragraph{Purging credit-supply bank net-worth shocks.}

The raw bank equity surprise \(v_t\) is informative, but it is not
automatically a clean credit-supply shock. Earnings announcements may
reveal information about banks' own balance-sheet strength and lending
capacity, but they may also contain other information about expected loan
demand, borrower fundamentals, aggregate macroeconomic conditions, or risk
premia. The model in Section \ref{sec:illustrative_model} provides the sign
pattern that motivates the empirical decomposition: favorable intermediary
net-worth news should raise bank equity valuations and lower corporate
borrowing spreads. We use this implication as an identifying restriction
rather than as a mechanical classification rule.

We first examine this sign pattern directly in the data. Figure
\ref{fig:scatter_price_oas_pooled} plots event-window changes in the
Canadian corporate OAS against timing-adjusted event-window bank stock-price
reactions for the pooled sample of Canadian bank earnings announcements. The
fitted relationship is negative: higher bank equity-price reactions are
associated with lower corporate spreads. Quantitatively, the pooled slope is
\(-0.123\), with a standard error of \(0.077\), implying a \(t\)-statistic of
\(-1.59\) and a \(p\)-value of \(0.112\). The correlation is also negative,
equal to \(-0.069\). Thus, the sign of the relationship is consistent with
the intermediary balance-sheet channel, although the contemporaneous
event-window relationship is imprecisely estimated.

%%%%%%%%%%%%%%%%%%%%%%%%%%%%%%%%%%%%%%%%%%%%%%%%%%%%%%%%%%%%%%%%%%%%%%
\begin{figure}[H]
\centering
\includegraphics[width=0.90\textwidth]{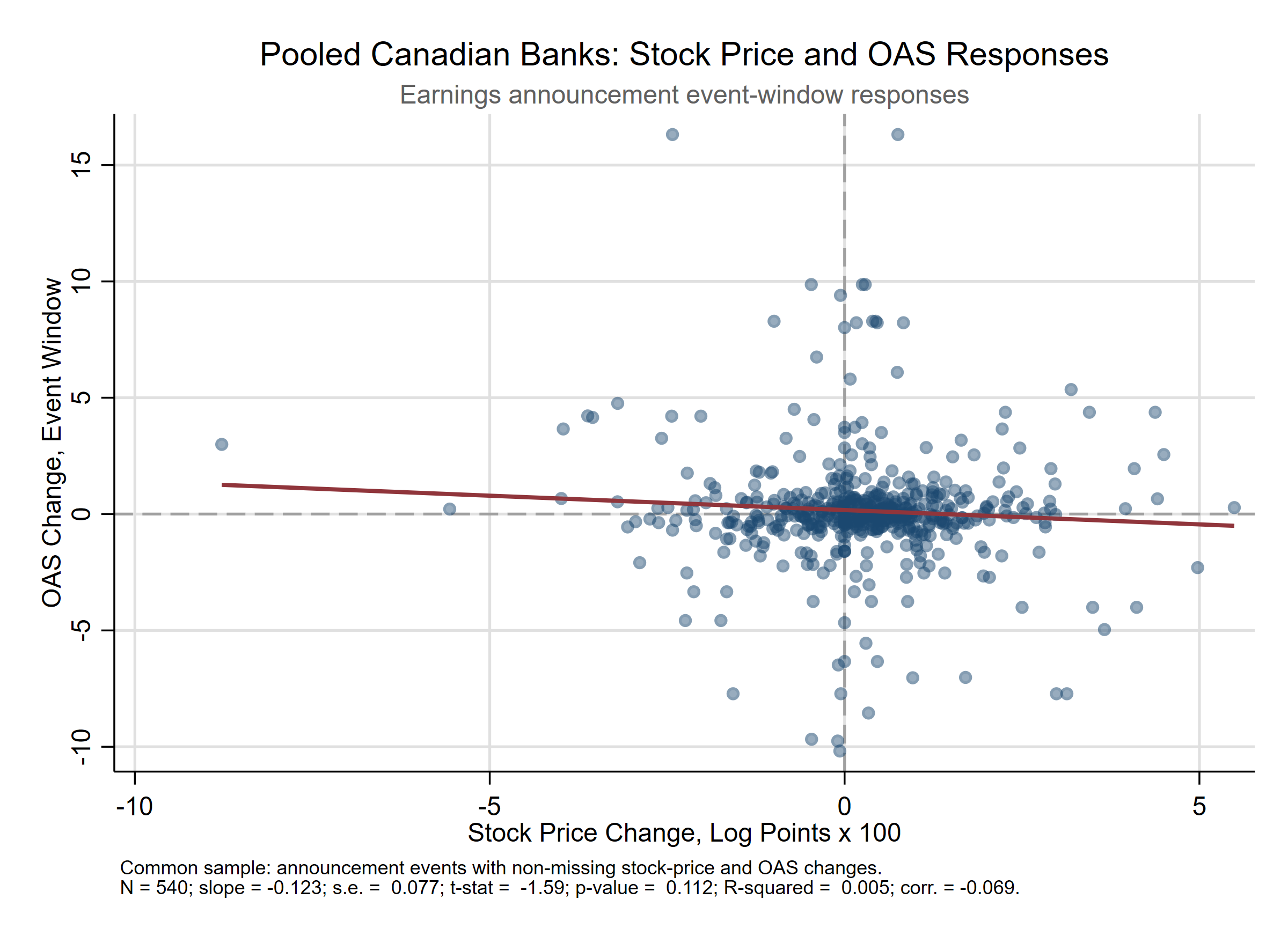}
\caption{Bank Stock-Price Reactions and Canadian Corporate OAS Changes}
\label{fig:scatter_price_oas_pooled}
\floatfoot{\textbf{Notes:} This figure plots event-window changes in Canadian corporate OAS against timing-adjusted event-window bank stock-price changes for the pooled sample of Canadian bank earnings announcements. Stock-price changes are log price changes multiplied by 100. OAS changes are measured in basis points. The sample is restricted to announcement events with non-missing stock-price and OAS changes. The fitted line corresponds to an OLS regression of the OAS change on the event-window stock-price change.}
\end{figure}
%%%%%%%%%%%%%%%%%%%%%%%%%%%%%%%%%%%%%%%%%%%%%%%%%%%%%%%%%%%%%%%%%%%%%%

We interpret Figure \ref{fig:scatter_price_oas_pooled} as sign-validating
evidence rather than as a stand-alone test of the credit-supply channel.
Bank stock prices are bank-specific and respond immediately to earnings
news, while the Canadian corporate OAS is an aggregate credit-market object
that may adjust more slowly and is affected by broader financial conditions.
Appendix \ref{appendix:additional_details_surprises} reports the
corresponding bank-by-bank scatter plots. The estimated slopes are negative
for each bank, although the individual-bank relationships are imprecisely
estimated.

Guided by this sign pattern, we decompose the raw bank equity surprise into
two orthogonal components:
\[
v_t
=
v^{CS}_t
+
v^{C}_t,
\]
where \(v^{CS}_t\) denotes the purged credit-supply component and \(v^{C}_t\)
denotes the residual component. The object of interest is \(v^{CS}_t\), the
component of bank equity news aligned with the model's prediction that
favorable intermediary net-worth news raises bank valuations and lowers
corporate spreads. The residual component \(v^{C}_t\) absorbs the remaining
variation in bank equity surprises, including news about expected loan
demand, borrower fundamentals, aggregate conditions, or risk premia.

The decomposition is sign-identified rather than point-identified. To make
the rotation explicit, define the bivariate event-window vector
\[
z_t
=
\begin{pmatrix}
v_t \\
\Delta OAS_t
\end{pmatrix},
\]
where \(v_t\) is the raw market-capitalization-weighted bank equity surprise
and \(\Delta OAS_t\) is the corresponding event-window change in the Canadian
corporate OAS. We orthogonalize the innovations in \(z_t\) and consider the
set of two-dimensional rotations that generate two orthogonal components,
\(v^{CS}_t\) and \(v^C_t\), satisfying
\[
v_t
=
v^{CS}_t+v^{C}_t,
\qquad
\operatorname{cov}\left(v^{CS}_t,v^{C}_t\right)=0.
\]
Given these two conditions, variance additivity follows mechanically:
\[
\operatorname{var}(v_t)
=
\operatorname{var}(v^{CS}_t)
+
\operatorname{var}(v^{C}_t).
\]
The identifying sign restrictions are
\[
\operatorname{cov}\left(v^{CS}_t,\Delta OAS_t\right)<0,
\qquad
\operatorname{cov}\left(v^{C}_t,\Delta OAS_t\right)>0.
\]
The first restriction assigns to the credit-supply component the part of bank
equity news that moves bank valuations and corporate spreads in opposite
directions. The second restriction assigns to the residual component the
variation that moves bank valuations and corporate spreads in the same
direction. These restrictions do not rule out all alternative interpretations
of negative equity--spread co-movement, such as favorable risk-premium news.
They should therefore be interpreted as identifying assumptions that isolate
the component of earnings-window bank equity news most closely aligned with
the intermediary net-worth channel in the model.

Our baseline implementation uses the median admissible rotational angle
proposed by \cite{jarocinski2022central}. We consider the set of rotations
that satisfy the sign restrictions above and select the median rotation among
the admissible rotations. In our sample, the admissible set is nonempty and
ranges from \(0\) to \(86.8\) degrees. The median rotation is \(43.4\)
degrees. At this rotation, the credit-supply component has correlation
\(0.73\) with the raw bank equity surprise, correlation \(-0.73\) with the
event-window OAS change, and correlation \(0.70\) with the credit-supply shock
obtained from the poor man's sign restriction. Thus, the benchmark shock uses
both pieces of high-frequency information: it is neither simply the raw bank
equity surprise nor simply the negative of the OAS innovation.

As an alternative implementation, we also construct a simple ``poor man's''
sign-restriction shock. In that approach, announcement observations in which
bank stock prices and OAS changes move in opposite directions are classified
as credit-supply news, while observations in which they move in the same
direction are assigned to the residual component. The empirical analysis below
focuses on the median-rotation credit-supply component \(v^{CS}_t\), and
Section \ref{subsec:additional_evidence} compares the benchmark responses with
those obtained from the raw bank equity surprise, the poor man's
sign-restriction shock, and alternative admissible rotations.

%%%%%%%%%%%%%%%%%%%%%%%%%%%%%%%%%%%%%%%%%%%%%%%%%%%%%%%%%%%%%%%%%%%%%%
\paragraph{Time series of purged shocks.}

Figure \ref{fig:bank_supply_shocks_timeseries} presents the monthly time series of the raw bank equity surprise and the two purged credit-supply shock series. The black line reports the raw market-capitalization-weighted bank equity surprise \(v_t\). The light-blue dashed line reports the credit-supply shock obtained from the poor man's sign restriction, while the dark-blue dotted line reports the credit-supply shock obtained from the median rotational-angle procedure.\footnote{The three series are closely related but need not coincide at the monthly frequency. The raw surprise sums all bank-level earnings-announcement surprises within a month, while the poor man's supply measure keeps only the event-level surprises whose stock-price and OAS changes move in opposite directions. Thus, in months with both supply-type and non-supply-type announcement events, the raw monthly surprise and the poor man's supply shock differ. The median rotational-angle series differs more generally because it is obtained from a rotation of the joint system of bank equity surprises and OAS changes.}

The shocks are centered around zero and display sharp movements around specific earnings-announcement months. This pattern is consistent with the high-frequency nature of the identification strategy: the series captures discrete news about bank profitability and balance-sheet strength rather than slow-moving changes in financial conditions. Several large movements occur during periods of elevated financial uncertainty, including the global financial crisis, the early 2010s, the COVID period, and the most recent part of the sample. In the empirical analysis that follows, we use the median rotational-angle purged credit-supply shock as the benchmark measure of Canadian bank net-worth shocks.

%%%%%%%%%%%%%%%%%%%%%%%%%%%%%%%%%%%%%%%%%%%%%%%%%%%%%%%%%%%%%%%%%%%%%%
\begin{figure}[H]
\centering
\includegraphics[width=0.95\textwidth]{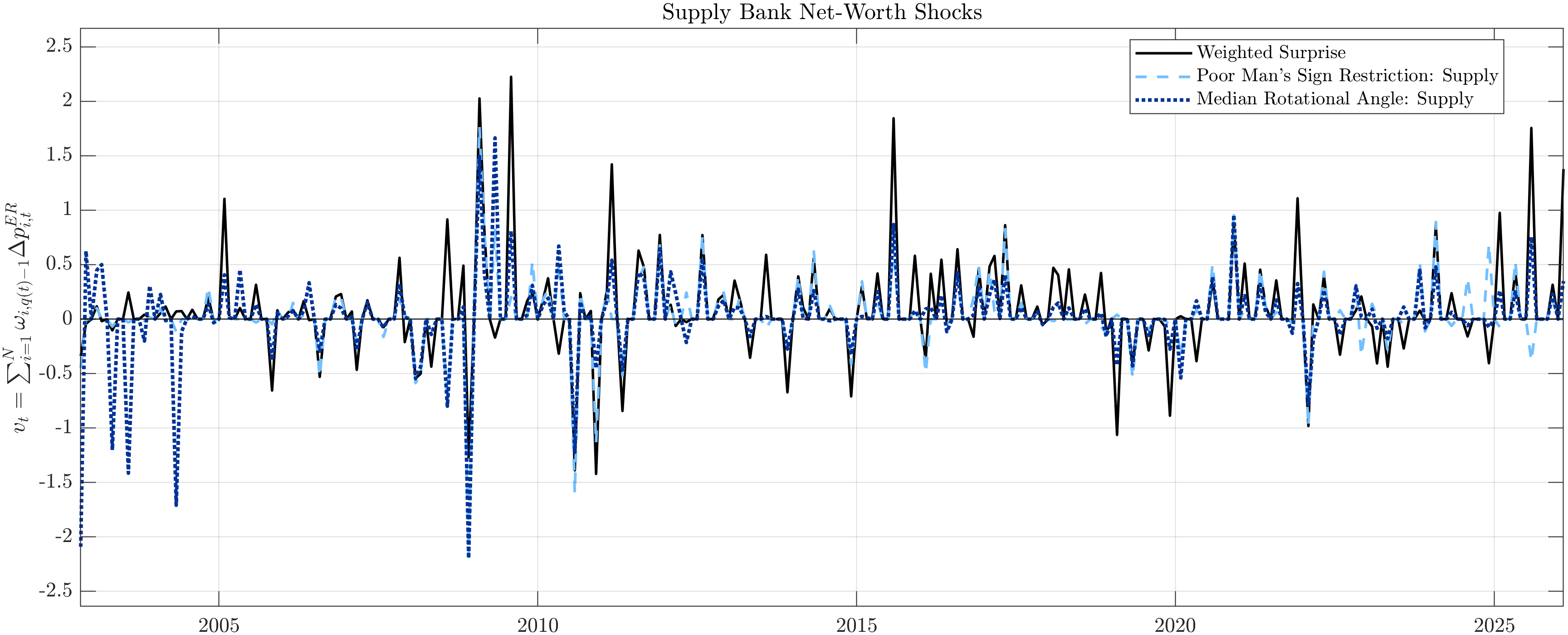}
\caption{Raw and Purged Canadian Bank Net-Worth Supply Shocks}
\label{fig:bank_supply_shocks_timeseries}
\floatfoot{\textbf{Notes:} This figure reports monthly Canadian bank net-worth shock series constructed from earnings-announcement stock-price reactions. The black line is the raw market-capitalization-weighted bank equity surprise. The light-blue dashed line reports the purged credit-supply component identified using the poor man's sign restriction. The dark-blue dotted line reports the purged credit-supply component identified using the median rotational-angle procedure. Monthly series are constructed by summing event-level shocks within each month. The empirical analysis focuses on the median rotational-angle purged credit-supply component, which is the object directly motivated by the theoretical framework in Section \ref{sec:illustrative_model}.}
\end{figure}

\paragraph{Correlation with \cite{ottonello2025financial} shocks.}

As an additional validation exercise, we test whether the Canadian bank net-worth shocks are predictable from U.S. bank surprises observed before the Canadian earnings announcements, constructed by \cite{ottonello2025financial}. This exercise matters because U.S. bank earnings announcements may reveal information about global banking-sector conditions, intermediary balance sheets, or aggregate financial risk. If the Canadian shocks were systematically predictable from previously observed U.S. bank surprises, then the Canadian announcement-window variation could partly reflect information already revealed by U.S. bank earnings news rather than new information contained in Canadian bank announcements.

For each Canadian bank-shock date, we attach the last three U.S. bank surprises observed strictly before the Canadian announcement date and regress the Canadian shock on these predetermined U.S. shocks. Appendix \ref{appendix:predictability_us_bank_shocks} reports the results. The raw total Canadian bank-equity surprise displays some mild predictability from the most recent raw U.S. bank surprise: the joint test for the last three raw U.S. surprises has a \(p\)-value of \(0.082\). This is consistent with the idea that the raw Canadian bank-equity surprise may contain a common banking-sector or global financial component. However, the relationship disappears for the preferred median-rotation credit-supply shock. When the Canadian credit-supply shock is regressed on the last three raw U.S. bank surprises, the joint \(p\)-value is \(0.292\); when it is regressed on the last three purged U.S. bank surprises, the joint \(p\)-value is \(0.818\). Thus, the preferred Canadian credit-supply bank net-worth shock is not predictable from U.S. bank earnings-news shocks observed before the Canadian earnings announcement. This supports the interpretation that the benchmark shock captures Canadian bank-specific earnings news rather than pre-existing U.S. banking-sector information.

%%%%%%%%%%%%%%%%%%%%%%%%%%%%%%%%%%%%%%%%%%%%%%%%%%%%%%%%%%%%%%%%%%%%%%

% -------------------------------------------------
% Macroeconomic Propagation
% -------------------------------------------------
\section{Macroeconomic Propagation} \label{sec:Macroeconomic_Propagation}

This section studies the macroeconomic propagation of the purged credit-supply bank net-worth shocks constructed in Section \ref{sec:shock_construction}. The previous section shows that Canadian bank earnings announcements contain information about expected profitability and lending capacity, and uses the co-movement between bank equity reactions and corporate spreads to purge raw bank equity surprises from contaminating information. We now ask whether the resulting shock affects aggregate Canadian financial conditions and macroeconomic outcomes. The theoretical framework in Section \ref{sec:illustrative_model} predicts that favorable news about intermediary net worth should raise bank valuations, lower corporate borrowing spreads, expand credit supply, and stimulate real activity. We test these predictions using monthly local projections. We first describe the macro-financial dataset and empirical specification. We then present benchmark impulse responses to the median rotational-angle purged credit-supply shock. Finally, we provide additional evidence using the raw bank equity surprise, an alternative poor man's sign-restriction shock, and an IV exercise that isolates bank-induced movements in Canadian corporate spreads.

%%%%%%%%%%%%%%%%%%%%%%%%%%%%%%%%%%%%%%%%%%%%%%%%%%%%%%%%%%%%%%%%%%%%%%
\subsection{Data and Empirical Specification}

\paragraph{Dataset.}

We estimate the dynamic effects of Canadian bank net-worth shocks using a monthly macro-financial dataset. Most Canadian macroeconomic variables are sourced from the Large Canadian Database for Macroeconomic Analysis compiled by \cite{fortin2022large}.\footnote{In particular, we use data sourced from the March 2026 vintage of the LCDMA dataset.} We merge these data with the monthly purged credit-supply bank net-worth shock constructed in Section \ref{sec:shock_construction}, the Bloomberg Canadian corporate option-adjusted spread, and aggregate market capitalization for the six large Canadian banks.

The benchmark shock is the purged credit-supply component of the Canadian bank equity surprise identified using the median rotational-angle procedure, denoted by \(v_t^{CS}\). We aggregate the event-level shock to the monthly frequency and normalize it by its sample standard deviation. The impulse responses below are therefore interpreted as responses to a one-standard-deviation favorable credit-supply bank net-worth shock.

The benchmark outcomes capture Canadian financial conditions, asset prices, monetary policy, real activity, and prices. We use the Bloomberg Canadian corporate option-adjusted spread, denoted by \(OAS_t\), as a measure of corporate borrowing conditions. The daily OAS series is converted to monthly frequency using the last available observation in each month. We use the aggregate market capitalization of the six large Canadian banks as a measure of bank equity valuations. To construct this series, we first compute each bank's monthly market capitalization as the last of its daily market capitalizations within the month. We then sum the monthly market capitalizations of RBC, TD, Scotiabank, BMO, CIBC, and National Bank to obtain aggregate Canadian bank market capitalization, denoted by \(MC_t\).

The remaining benchmark outcomes are the Canadian equity index, the nominal exchange rate vis-\`a-vis the U.S. dollar, the Bank of Canada policy rate, real activity, and consumer prices. The equity index, nominal exchange rate, real activity, consumer price index, and aggregate bank market capitalization are transformed into log variables multiplied by 100.\footnote{In particular, we use the following variables from the LCDMA dataset: (i) GDP, code \texttt{GDP\textunderscore NEW}; (ii) the USDCAD exchange rate, code \texttt{USCAD\textunderscore NEW}; (iii) the CPI index, code \texttt{CPI\textunderscore ALL\textunderscore CAN}; (iv) the stock market index, code \texttt{TSX\textunderscore CLO}; and (v) the Bank of Canada policy rate, code \texttt{BANK\textunderscore RATE\textunderscore L}.}

%%%%%%%%%%%%%%%%%%%%%%%%%%%%%%%%%%%%%%%%%%%%%%%%%%%%%%%%%%%%%%%%%%%%%%
\paragraph{Empirical specification.}

For each outcome variable \(y_t\) and horizon \(h=0,\dots,36\), we estimate monthly local projections of the form:
\begin{equation}
Y_{t+h}^{(h)}
=
\alpha_h
+
\beta_h v_t^{CS}
+
\Gamma_h' X_{t-1}
+
u_{t+h}^{(h)},
\label{eq:lp_benchmark}
\end{equation}
where \(v_t^{CS}\) is the normalized purged credit-supply bank net-worth shock and \(X_{t-1}\) is a vector of predetermined Canadian macro-financial controls. The coefficient of interest is \(\beta_h\), which traces the dynamic response of the outcome variable at horizon \(h\) to a favorable bank net-worth shock. The benchmark specification includes only the contemporaneous shock, not its lags.

For variables expressed in logs, the dependent variable is the cumulative change between horizon \(t+h\) and the month before the shock:
\begin{equation}
Y_{t+h}^{(h)}
=
y_{t+h}-y_{t-1}.
\end{equation}
Since log variables are multiplied by 100, these responses are measured in percentage points. This transformation is used for real activity, the nominal exchange rate, consumer prices, the Canadian equity index, and real aggregate bank market capitalization. For the Bank of Canada policy rate and the Canadian corporate OAS, we estimate responses in forward levels:
\begin{equation}
Y_{t+h}^{(h)}
=
y_{t+h}.
\end{equation}
The responses of the policy rate and the corporate spread are therefore interpreted in the original units of their respective series.

The benchmark control vector includes six monthly lags of Canadian macro-financial variables. The benchmark specification also includes a COVID-period dummy equal to one from March 2020 through June 2021. Standard errors are computed using Newey--West corrections with 36 lags, the same as the horizon for the local projections. We report 68 percent and 90 percent confidence intervals.

%%%%%%%%%%%%%%%%%%%%%%%%%%%%%%%%%%%%%%%%%%%%%%%%%%%%%%%%%%%%%%%%%%%%%%%%%%%%%%%%%%%%%%%%
\subsection{Benchmark Results} \label{subsec:benchmark_results}

Figure \ref{fig:benchmark_macro_lp_supply_shock} reports the dynamic responses to a one-standard-deviation favorable purged credit-supply bank net-worth shock, identified using the median rotational-angle procedure. The responses are consistent with the intermediary balance-sheet channel described in Section \ref{sec:illustrative_model}: favorable bank net-worth news eases corporate borrowing conditions, raises bank valuations, improves broader asset prices, and is followed by a gradual expansion in real activity.

%%%%%%%%%%%%%%%%%%%%%%%%%%%%%%%%%%%%%%%%%%%%%%%%%%%%%%%%%%%%%%%%%%%%%%
\begin{figure}[H]
\centering
\caption{Macroeconomic Propagation of Credit-Supply Bank Net-Worth Shocks}
\includegraphics[width=0.95\textwidth]{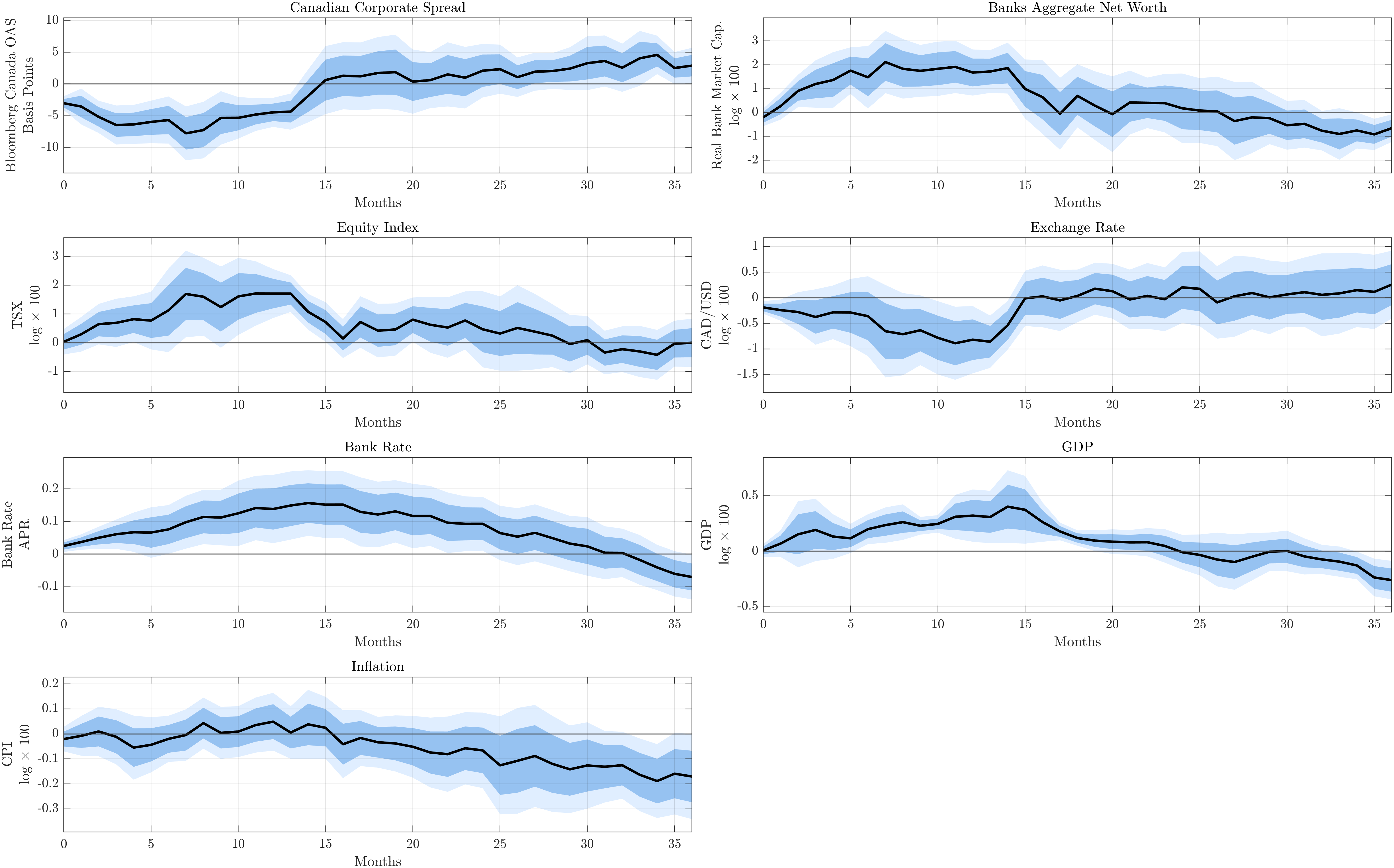}
\label{fig:benchmark_macro_lp_supply_shock}
\floatfoot{\textbf{Notes:} This figure reports local projection impulse responses to a one-standard-deviation favorable purged credit-supply bank net-worth shock identified using the median rotational-angle procedure. The shock is constructed from timing-adjusted earnings-announcement stock-price reactions and aggregated using lagged market-capitalization shares. Responses are estimated using Equation \eqref{eq:lp_benchmark}. For variables in logs, responses are cumulative changes relative to the month before the shock and are measured in percentage points. The Bank of Canada policy rate and the Canadian corporate OAS are reported in forward levels. Dark and light shaded areas denote 68 and 90 percent confidence intervals, respectively, computed using Newey--West standard errors.}
\end{figure}
%%%%%%%%%%%%%%%%%%%%%%%%%%%%%%%%%%%%%%%%%%%%%%%%%%%%%%%%%%%%%%%%%%%%%%

The response of corporate spreads provides the most direct test of the credit-supply interpretation. Following a favorable bank net-worth shock, the Canadian corporate OAS declines on impact and remains below its pre-shock level for roughly one year. The decline reaches approximately 5 to 8 basis points during the first several months. This response is consistent with the model's prediction that an increase in intermediary net worth relaxes balance-sheet constraints, expands credit supply, and compresses lending spreads.

Bank valuations and asset prices also move in the direction predicted by the model. Real aggregate market capitalization of the six large Canadian banks increases immediately and remains elevated for more than a year, peaking at around 2 percent. This persistent increase suggests that earnings-announcement stock-price reactions capture news about expected bank profitability and lending capacity rather than purely transitory price movements. The broader Canadian equity index also increases, with the response peaking at roughly 1.5 to 2 percent after about one year. Thus, the shock appears to ease financial conditions beyond the banking sector itself.

The exchange-rate and policy-rate responses are also consistent with an improvement in domestic financial conditions. The Canadian dollar appreciates against the U.S. dollar over the first year after the shock before the response gradually reverses. The Bank of Canada policy rate increases gradually, peaking after roughly one year. This pattern is consistent with an endogenous policy-rate response to improved financial conditions and stronger domestic activity.

The real-side responses indicate that bank net-worth shocks propagate to aggregate activity. GDP rises following the shock, with the response peaking after about one year. The increase is economically modest but persistent over the first part of the horizon, consistent with the view that easier credit conditions stimulate real activity gradually. Consumer prices respond little on impact and decline later in the horizon, suggesting that the shock operates first through financial conditions and real activity, while price dynamics adjust more slowly.

\paragraph{Credit-market responses.}
The credit-market responses provide additional evidence on the transmission
mechanism. Figure \ref{fig:financial_prices_supply_shock} reports responses
for several interest-rate spreads: government bond yields at different
maturities relative to the Bank of Canada policy rate, and one- and five-year
mortgage rates relative to the policy rate. These spreads are not used in the
construction of the shock. A favorable bank net-worth shock lowers medium- and
longer-maturity government bond spreads and reduces mortgage spreads over the
first part of the horizon. This pattern is consistent with an easing of broader
financial conditions beyond the corporate OAS used in the shock construction.

%%%%%%%%%%%%%%%%%%%%%%%%%%%%%%%%%%%%%%%%%%%%%%%%%%%%%%%%%%%%%%%%%%%%%%
\begin{figure}[H]
\centering
\includegraphics[width=0.95\textwidth]{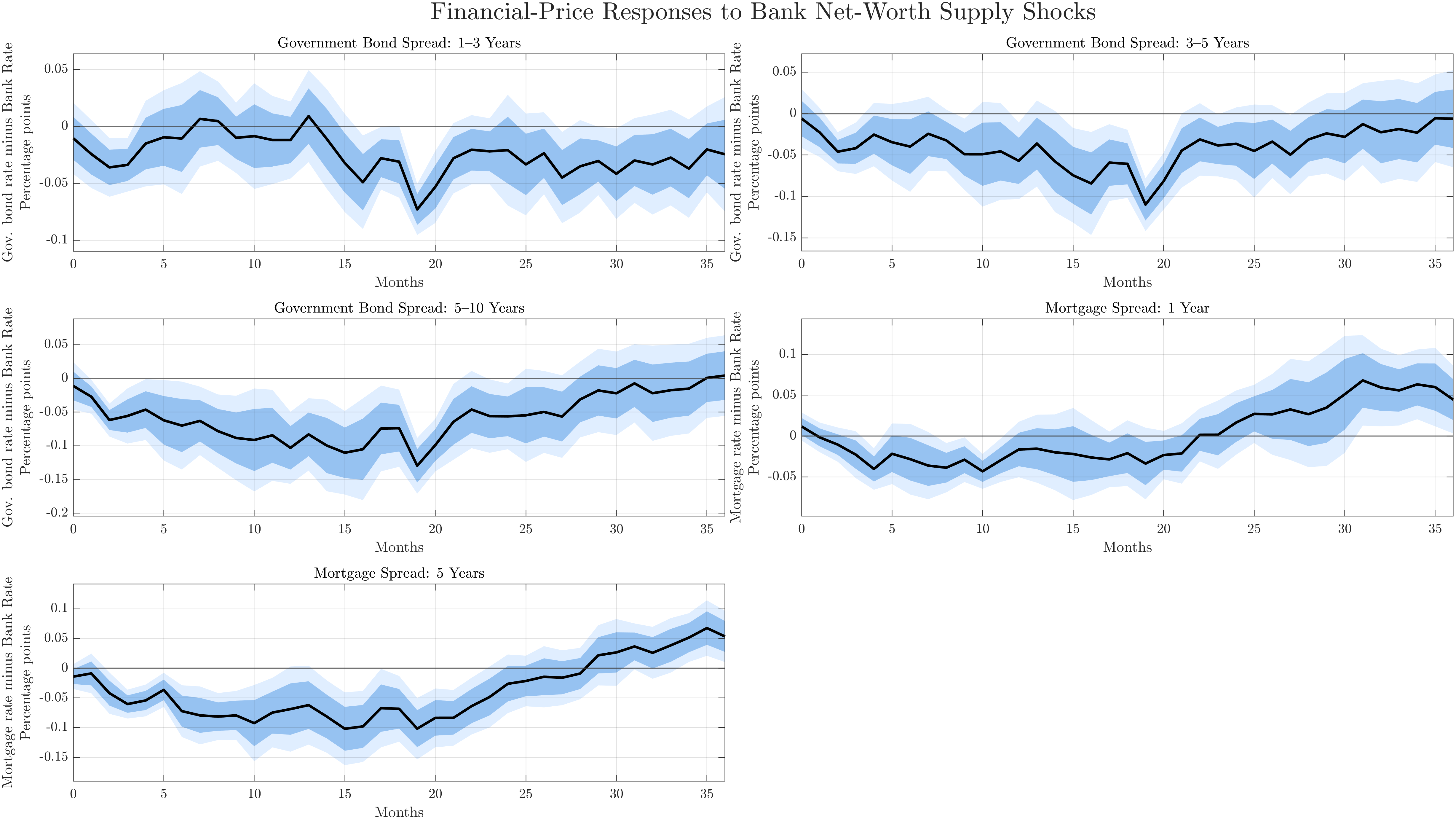}
\caption{Financial-Price Responses to Bank Net-Worth Supply Shocks}
\label{fig:financial_prices_supply_shock}
\floatfoot{\textbf{Notes:} This figure reports local projection impulse responses of additional financial-price variables to a one-standard-deviation favorable credit-supply bank net-worth shock identified using the median rotational-angle procedure. The outcomes are government bond spreads at one- to three-year, three- to five-year, and five- to ten-year maturities, measured relative to the Bank of Canada policy rate, and one- and five-year mortgage spreads, also measured relative to the policy rate. The specification includes the benchmark Canadian macro-financial controls, a COVID-period dummy, and six lags of the corresponding outcome. Responses are measured in percentage points. Dark and light shaded areas denote 68 and 90 percent confidence intervals, respectively, computed using Newey--West standard errors.}
\end{figure}
%%%%%%%%%%%%%%%%%%%%%%%%%%%%%%%%%%%%%%%%%%%%%%%%%%%%%%%%%%%%%%%%%%%%%%

Figure \ref{fig:financial_quantities_supply_shock} reports the corresponding
responses of credit quantities. This exercise is particularly useful for
interpreting the shock because credit quantities are not used in the sign
restriction. Business lending rises gradually after a favorable bank
net-worth shock and remains elevated over the medium run. Total loans also
increase during the first year, while residential mortgages respond positively
in the short run before reversing later in the horizon. Personal loans respond
more weakly and imprecisely. The stronger response of business credit is
consistent with the mechanism in the model: favorable news about bank balance
sheets eases credit supply and supports lending to the real economy.

%%%%%%%%%%%%%%%%%%%%%%%%%%%%%%%%%%%%%%%%%%%%%%%%%%%%%%%%%%%%%%%%%%%%%%
\begin{figure}[H]
\centering
\includegraphics[width=0.95\textwidth]{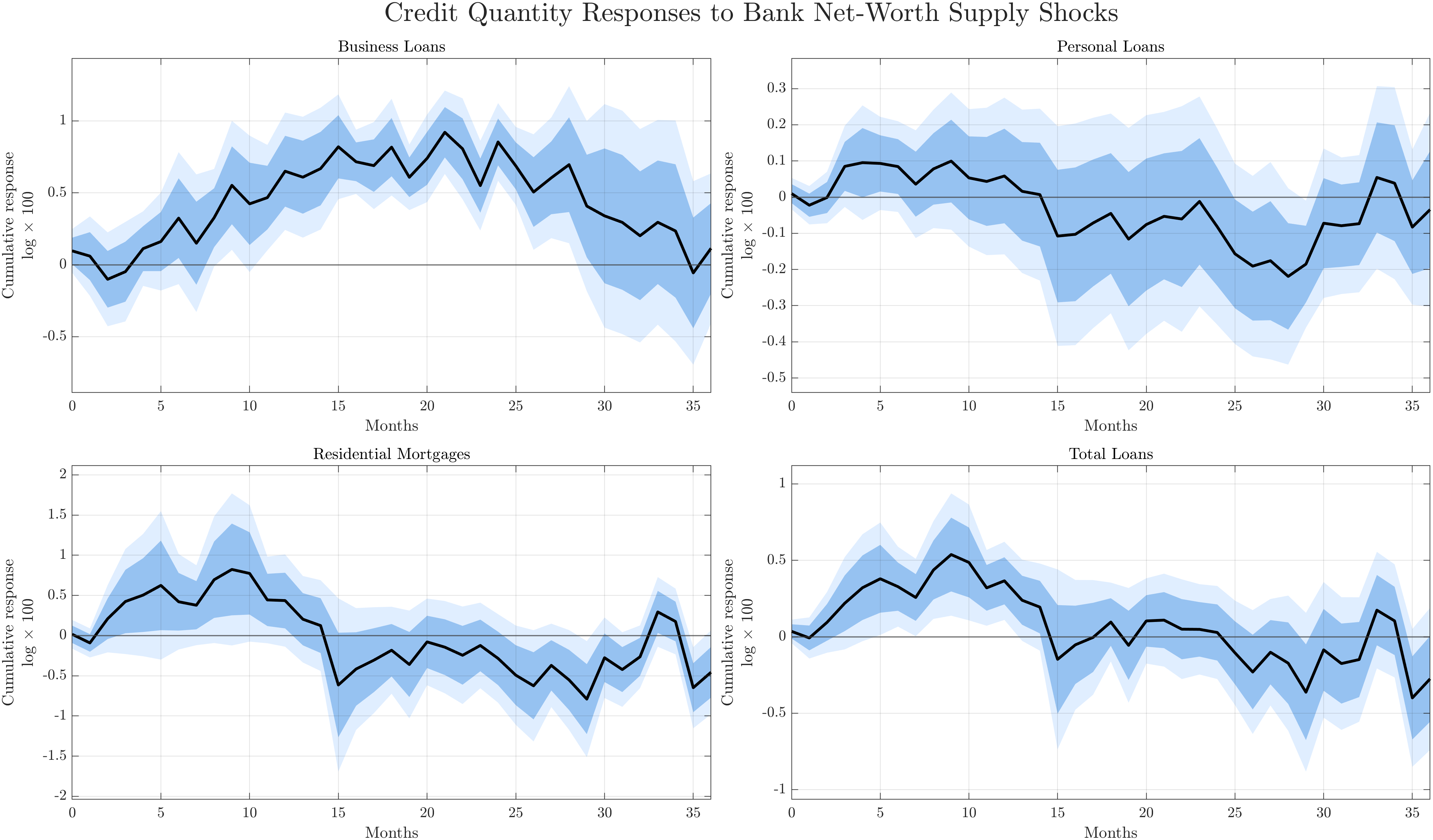}
\caption{Credit Quantity Responses to Bank Net-Worth Supply Shocks}
\label{fig:financial_quantities_supply_shock}
\floatfoot{\textbf{Notes:} This figure reports local projection impulse responses of credit quantities to a one-standard-deviation favorable credit-supply bank net-worth shock identified using the median rotational-angle procedure. The outcomes are business loans, personal loans, residential mortgages, and total loans. Credit quantities are transformed as $100$ times log levels, and responses are cumulative changes relative to the month before the shock. The specification includes the benchmark Canadian macro-financial controls, a COVID-period dummy, and six lags of the corresponding credit quantity. Dark and light shaded areas denote 68 and 90 percent confidence intervals, respectively, computed using Newey--West standard errors.}
\end{figure}
%%%%%%%%%%%%%%%%%%%%%%%%%%%%%%%%%%%%%%%%%%%%%%%%%%%%%%%%%%%%%%%%%%%%%%

Taken together, the benchmark responses support the paper's central mechanism.
Favorable purged credit-supply news about Canadian banks lowers corporate
borrowing spreads, reduces broader credit spreads, raises bank and equity
valuations, appreciates the Canadian dollar, and is followed by an increase in
real activity. The accompanying rise in business credit and total credit
provides additional evidence that the shock operates through credit supply
rather than only through asset prices. These results suggest that bank
net-worth shocks are not merely bank-level valuation shocks: in a concentrated
banking system, they propagate to aggregate financial conditions, credit
quantities, and macroeconomic outcomes.

%%%%%%%%%%%%%%%%%%%%%%%%%%%%%%%%%%%%%%%%%%%%%%%%%%%%%%%%%%%%%%%%%%%%%%%%%%%%%%%%%%%%%%%%
\subsection{Additional Evidence} \label{subsec:additional_evidence}

The benchmark results use the purged credit-supply component of bank equity news identified through the median rotational-angle procedure. In this subsection, we provide three complementary exercises. First, we estimate the same local projections using the raw market-capitalization-weighted bank equity surprise, without purging contaminating information using corporate spread co-movement. This exercise asks whether bank equity news before the sign-restriction step contains similar macroeconomic information. Second, we estimate responses using the credit-supply shock obtained from the simpler poor man's sign-restriction approach. This exercise assesses whether the benchmark results are driven by the particular implementation of the median rotational-angle procedure. Third, we use the benchmark purged credit-supply shock as an external instrument for Canadian corporate spreads to study the macroeconomic effects of bank-induced movements in borrowing costs.

%%%%%%%%%%%%%%%%%%%%%%%%%%%%%%%%%%%%%%%%%%%%%%%%%%%%%%%%%%%%%%%%%%%%%%
\paragraph{Raw bank equity surprises.}

Figure \ref{fig:macro_lp_total_bank_shock} reports the responses to a
one-standard-deviation increase in the raw market-capitalization-weighted bank
equity surprise, \(v_t\), without purging the shock using the sign-restriction
procedure. The raw surprise contains useful information about Canadian banks:
real aggregate bank market capitalization rises after the shock, indicating
that the event-window stock-price reactions capture news about bank valuations.
However, the broader macro-financial responses are less clearly aligned with
the credit-supply mechanism than in the benchmark specification.

In particular, the response of Canadian corporate spreads is weaker and less
systematic than in the purged credit-supply shock. The OAS declines only with a
lag and the response is not as persistent or precisely estimated as in the
benchmark. Broader equity prices also do not move in the same direction as bank
valuations: the TSX falls over much of the horizon, even though aggregate bank
market capitalization increases. GDP rises in the medium run, but the
overall pattern is less tightly connected to an easing of broad financial
conditions than in the benchmark results.

This comparison illustrates why the co-movement between bank equity reactions
and corporate spreads is informative. Raw bank equity news mixes several types
of information released during earnings announcements. Some of this news may
reflect improvements in bank balance-sheet strength and lending capacity, but
other components may reflect borrower fundamentals, expected loan demand,
aggregate conditions, or risk premia. These components can raise bank
valuations without generating the broader easing in credit conditions predicted
by the model. The sign-restriction step isolates the component of bank equity
news that moves bank valuations and corporate borrowing spreads in opposite
directions. The sharper responses in the benchmark specification therefore
support the interpretation that this co-movement helps separate credit-supply
news from the other information contained in raw bank equity surprises.

%%%%%%%%%%%%%%%%%%%%%%%%%%%%%%%%%%%%%%%%%%%%%%%%%%%%%%%%%%%%%%%%%%%%%%
\begin{figure}[H]
\centering
\includegraphics[width=0.95\textwidth]{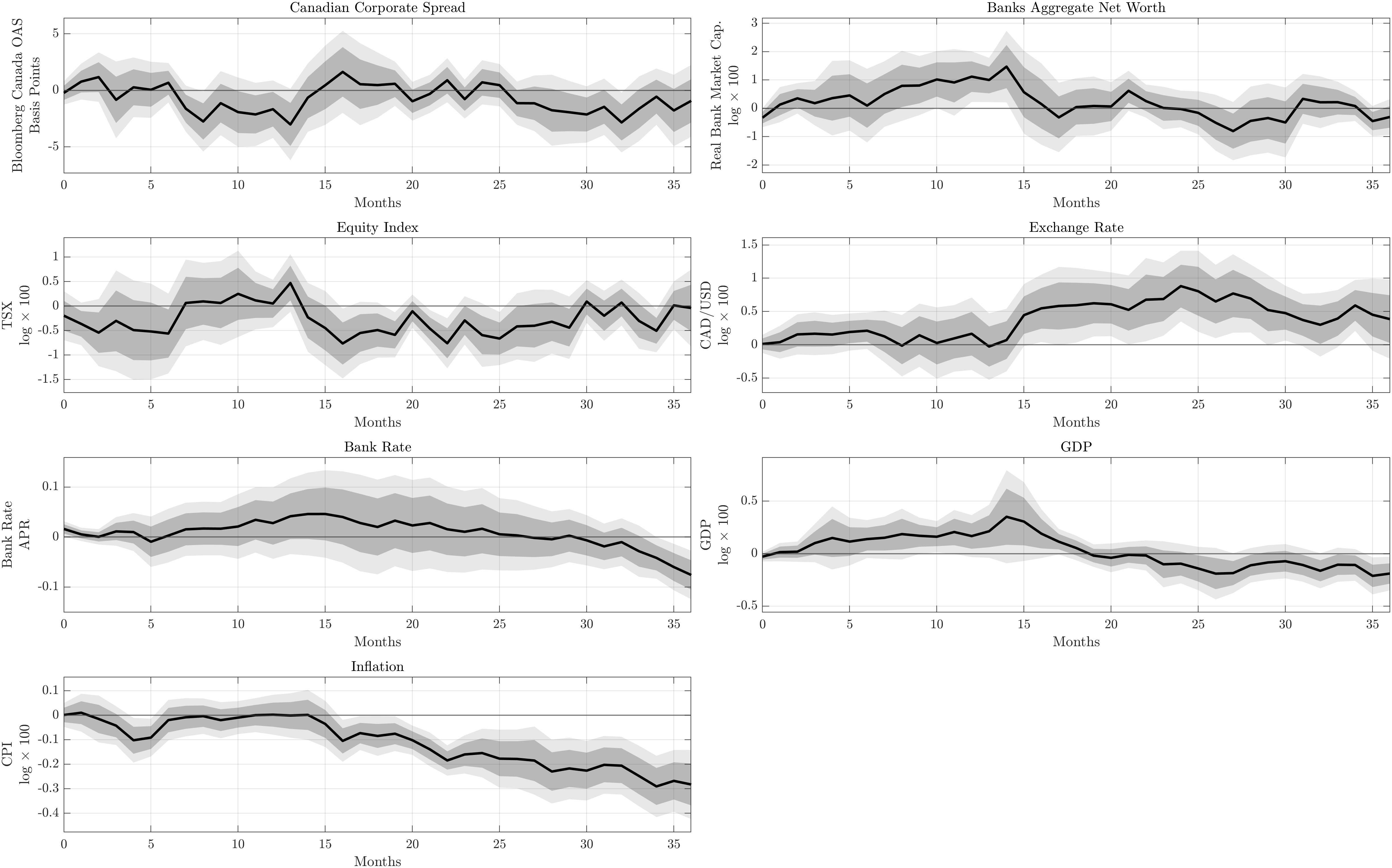}
\caption{Macroeconomic Propagation of Raw Bank Equity Surprises}
\label{fig:macro_lp_total_bank_shock}
\floatfoot{\textbf{Notes:} This figure reports local projection impulse responses to a one-standard-deviation increase in the raw market-capitalization-weighted bank equity surprise, without purging the surprise using the sign-restriction procedure. The raw surprise is constructed from timing-adjusted earnings-announcement stock-price reactions and aggregated using lagged market-capitalization shares. Responses are estimated using Equation \eqref{eq:lp_benchmark}. For variables in logs, responses are cumulative changes relative to the month before the shock and are measured in percentage points. The Bank of Canada policy rate and the Canadian corporate OAS are reported in forward levels. Dark and light shaded areas denote 68 and 90 percent confidence intervals, respectively, computed using Newey--West standard errors.}
\end{figure}
%%%%%%%%%%%%%%%%%%%%%%%%%%%%%%%%%%%%%%%%%%%%%%%%%%%%%%%%%%%%%%%%%%%%%%

%%%%%%%%%%%%%%%%%%%%%%%%%%%%%%%%%%%%%%%%%%%%%%%%%%%%%%%%%%%%%%%%%%%%%%
\paragraph{Alternative sign-restriction implementation.}

Figure \ref{fig:macro_lp_pm_supply_shock} reports the responses obtained using the poor man's sign-restriction credit-supply shock. This shock classifies event-window observations as credit-supply news when bank stock prices and corporate spreads move in opposite directions. Thus, favorable credit-supply news corresponds to an increase in bank equity prices together with a decline in the Canadian corporate OAS. This procedure is less structured than the median rotational-angle decomposition, but it imposes the same economic sign restriction motivated by the model.

The responses are close to the benchmark results. A favorable poor man's credit-supply shock lowers corporate spreads in the first year after the shock, raises real aggregate bank market capitalization, increases equity prices, and is followed by an expansion in GDP. The Bank of Canada policy rate rises gradually, consistent with an endogenous policy-rate response to improved financial conditions and stronger activity. The exchange-rate and inflation responses are also broadly similar to those obtained with the median rotational-angle shock.

These results show that the benchmark findings are not driven by the particular rotational implementation of the sign restrictions. Instead, the main propagation patterns are present when credit-supply news is identified using a simpler classification rule based directly on the sign of the co-movement between bank equity prices and corporate spreads. Together with the raw-shock results, this evidence supports the main interpretation of the paper: earnings-announcement news about Canadian banks contains macroeconomically relevant information, and the component associated with improvements in bank credit supply propagates to broader financial conditions and real activity.

%%%%%%%%%%%%%%%%%%%%%%%%%%%%%%%%%%%%%%%%%%%%%%%%%%%%%%%%%%%%%%%%%%%%%%
\begin{figure}[H]
\centering
\includegraphics[width=0.95\textwidth]{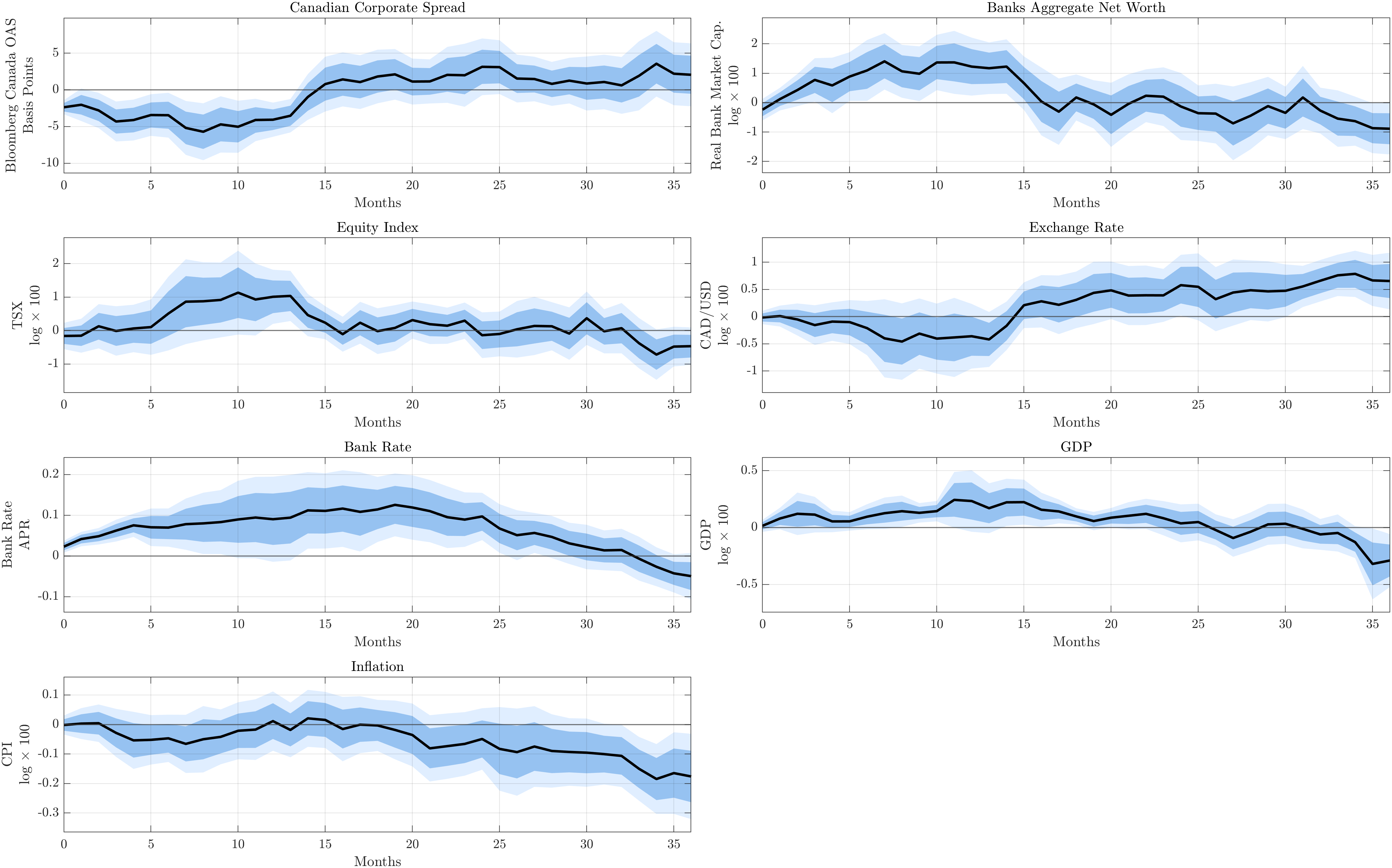}
\caption{Macroeconomic Propagation of Poor Man's Credit-Supply Bank Net-Worth Shocks}
\label{fig:macro_lp_pm_supply_shock}
\floatfoot{\textbf{Notes:} This figure reports local projection impulse responses to a one-standard-deviation favorable credit-supply bank net-worth shock identified using the poor man's sign-restriction approach. The shock is constructed from timing-adjusted earnings-announcement stock-price reactions and aggregated using lagged market-capitalization shares. Credit-supply observations are classified using the sign of the co-movement between bank equity-price reactions and event-window changes in the Canadian corporate OAS. Responses are estimated using Equation \eqref{eq:lp_benchmark}. For variables in logs, responses are cumulative changes relative to the month before the shock and are measured in percentage points. The Bank of Canada policy rate and the Canadian corporate OAS are reported in forward levels. Dark and light shaded areas denote 68 and 90 percent confidence intervals, respectively, computed using Newey--West standard errors.}
\end{figure}

%%%%%%%%%%%%%%%%%%%%%%%%%%%%%%%%%%%%%%%%%%%%%%%%%%%%%%%%%%%%%%%%%%%%%%
\paragraph{Instrumenting Canadian corporate spreads.}

The previous exercises estimate the reduced-form dynamic effects of bank equity surprises. We now provide additional evidence on the credit-spread channel by using the identified purged credit-supply bank net-worth shock as an external instrument for Canadian corporate spreads. This exercise asks how Canadian macro-financial variables respond to movements in corporate borrowing spreads that are induced by bank net-worth news.

For each horizon \(h=0,\dots,36\), we estimate the local projection IV specification:
\begin{equation}
Y_{t+h}^{(h)}
=
\alpha_h
+
\theta_h \widehat{OAS}_t
+
\Gamma_h'X_{t-1}
+
u_{t+h}^{(h)},
\label{eq:lp_iv_oas}
\end{equation}
where \(OAS_t\) is the Canadian corporate option-adjusted spread and is instrumented using the normalized median rotational-angle purged credit-supply bank net-worth shock, \(v_t^{CS}\). The first stage is:
\begin{equation}
OAS_t
=
\pi v_t^{CS}
+
\Lambda'X_{t-1}
+
\eta_t.
\label{eq:first_stage_oas}
\end{equation}
The control vector \(X_{t-1}\) is the same as in the benchmark local projections and includes six monthly lags of Canadian macro-financial variables and the COVID-period dummy. Standard errors are computed using Newey--West corrections with 36 lags.

The first stage has the expected sign and is statistically relevant across horizons. Favorable credit-supply bank net-worth shocks predict lower Canadian corporate spreads, consistent with the sign restriction used to construct the shock. The horizon-specific first-stage \(F\)-statistics range from 10.18 to 13.15, with a median of 12.20. Thus, the instrument generates relevant variation in Canadian corporate spreads, although the IV results should be interpreted as complementary evidence rather than as the paper's primary identification exercise.

Because favorable bank net-worth shocks reduce corporate spreads, the IV responses in Figure \ref{fig:macro_lp_iv_supply_shock} are normalized as responses to a bank-induced tightening in Canadian corporate credit conditions. Equivalently, the figure reports the effects of an increase in the Canadian corporate OAS induced by adverse bank net-worth news.

%%%%%%%%%%%%%%%%%%%%%%%%%%%%%%%%%%%%%%%%%%%%%%%%%%%%%%%%%%%%%%%%%%%%%%
\begin{figure}[H]
\centering
\includegraphics[width=0.95\textwidth]{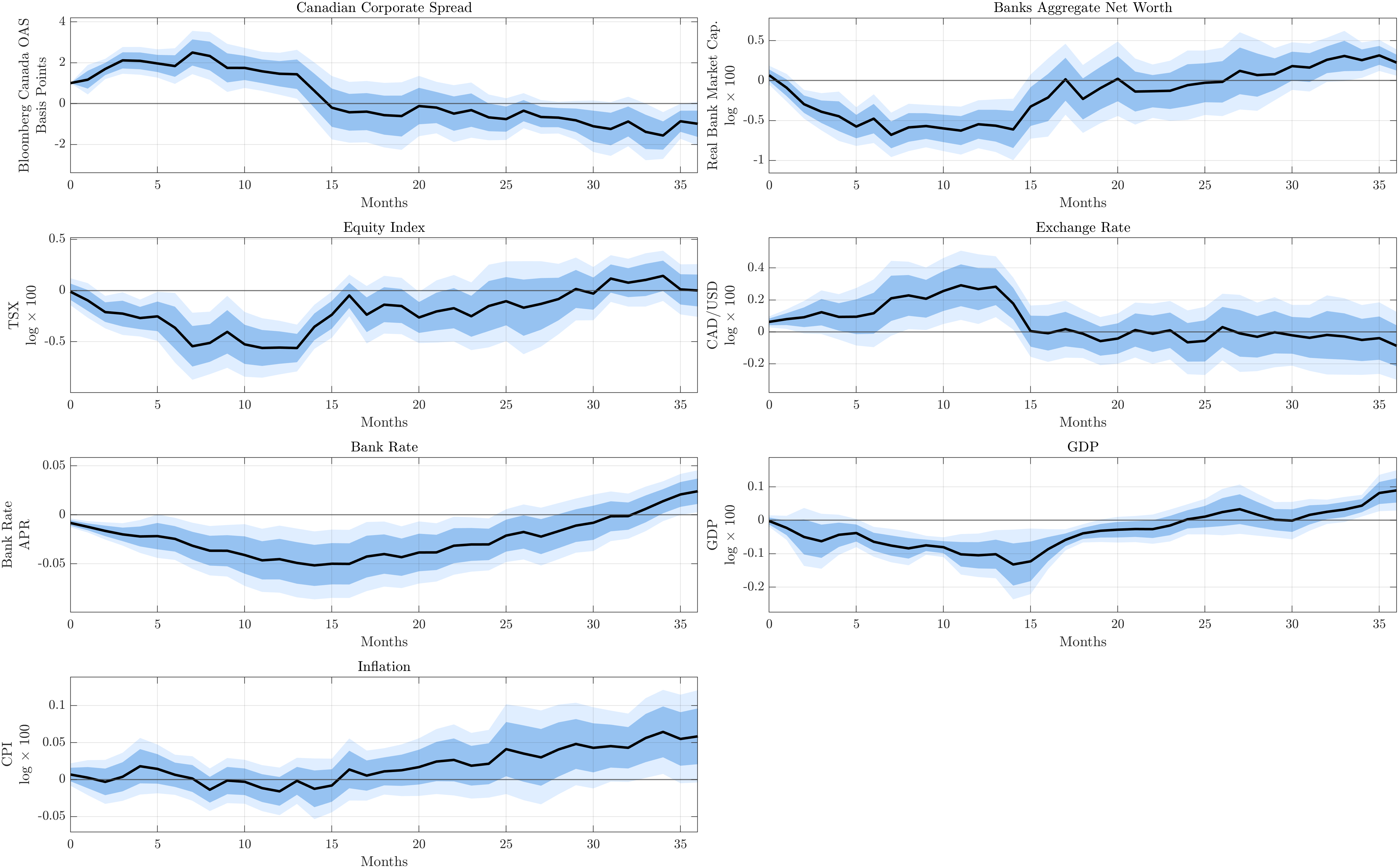}
\caption{Macroeconomic Effects of Bank-Induced Corporate Spread Movements}
\label{fig:macro_lp_iv_supply_shock}
\floatfoot{\textbf{Notes:} This figure reports local projection IV impulse responses to an increase in the Canadian corporate OAS instrumented by the median rotational-angle purged credit-supply bank net-worth shock. The first stage instruments the current Canadian corporate OAS with the normalized purged credit-supply bank net-worth shock constructed from timing-adjusted earnings-announcement stock-price reactions and aggregated using lagged market-capitalization shares. Responses are estimated using Equation \eqref{eq:lp_iv_oas}. For variables in logs, responses are cumulative changes relative to the month before the shock and are measured in percentage points. The Bank of Canada policy rate and the Canadian corporate OAS are reported in forward levels. Dark and light shaded areas denote 68 and 90 percent confidence intervals, respectively, computed using Newey--West standard errors.}
\end{figure}
%%%%%%%%%%%%%%%%%%%%%%%%%%%%%%%%%%%%%%%%%%%%%%%%%%%%%%%%%%%%%%%%%%%%%%

The IV responses are consistent with the credit-supply interpretation. A bank-induced increase in corporate spreads is followed by a decline in real aggregate bank market capitalization, a fall in the broader equity index, and a contraction in real activity. Bank market capitalization falls during the first year after the shock, indicating that the spread tightening is associated with a deterioration in bank valuations. Equity prices also decline, suggesting that tighter bank-induced credit conditions spill over to broader asset markets.

The real-side responses point to contractionary macroeconomic effects. GDP falls after the bank-induced spread increase, with the decline concentrated over the first year. The Bank of Canada policy rate falls gradually, consistent with an endogenous policy-rate response to weaker financial conditions and lower real activity. The exchange-rate response implies a depreciation of the Canadian dollar against the U.S. dollar during the first year, while inflation increases gradually later in the horizon.

Taken together, the IV evidence provides complementary support for the interpretation of the benchmark results. The reduced-form responses show that favorable credit-supply bank net-worth shocks lower corporate spreads and stimulate activity. The IV exercise shows the same mechanism from the opposite direction: movements in Canadian corporate spreads induced by bank net-worth news have contractionary effects on bank valuations, equity prices, and real activity. This supports the view that corporate borrowing spreads are an important transmission channel through which bank net-worth shocks propagate to the Canadian macroeconomy.

%%%%%%%%%%%%%%%%%%%%%%%%%%%%%%%%%%%%%%%%%%%%%%%%%%%%%%%%%%%%%%%%%%
%%%%%%%%%%%%%%%%%%%%%%%%%%%%%%%%%%%%%%%%%%%%%%%%%%%%%%%%%%%%%%%%%%
%%%%%%%%%%%%%%%%%%%%%%%%%%%%%%%%%%%%%%%%%%%%%%%%%%%%%%%%%%%%%%%%%%
\section{Additional Results \& Robustness Checks } \label{sec:robustness}

This section summarizes additional exercises that assess the robustness of the benchmark propagation results and explore further dimensions of the transmission mechanism. We first examine whether the main impulse responses are sensitive to the choice of admissible rotation used to construct the sign-restricted credit-supply shock. We then study alternative empirical specifications, including a more parsimonious lag structure and the inclusion of a deterministic time trend. Next, we examine sample robustness by estimating the responses before 2020 and in the post-2010 period. We also examine whether the results are driven by the largest banks in the sample by reconstructing the shock after excluding RBC, after excluding TD, and after excluding both RBC and TD. Finally, we extend the analysis to additional outcomes, including labor-market variables, goods- and services-producing sectors, and more disaggregated measures of sectoral activity. To keep the main text focused on the core findings, all corresponding impulse-response figures are reported in Appendix \ref{appendix:figures_robustness_additional}.

%%%%%%%%%%%%%%%%%%%%%%%%%%%%%%%%%%%%%%%%%%%%%%%%%%%%%%%%%%%%%%%%%%%%%%
\subsection{Sensitivity to the Admissible Rotation}
\label{subsec:robustness_rotation}

We first examine whether the benchmark responses are sensitive to the choice
of the median admissible rotation used to construct the purged credit-supply
bank net-worth shock. The sign restrictions in Section \ref{sec:shock_construction}
do not point-identify a unique decomposition of the raw bank equity surprise.
Instead, they define a set of admissible rotations. Our benchmark shock uses
the median admissible rotation, \(w=0.50\). To assess whether the main
propagation patterns depend on this particular choice, we re-estimate the
benchmark local projections using two alternative rotations from the central
part of the admissible set: \(w=0.25\) and \(w=0.75\). These alternatives
preserve the same sign restrictions, sample, outcomes, controls, horizon,
COVID-period dummy, and Newey--West inference procedure as the benchmark.

Appendix Figures \ref{fig:appendix_rotation_p25} and
\ref{fig:appendix_rotation_p75} report the corresponding impulse responses.
The results are broadly consistent with the benchmark median-rotation
responses. For both alternative rotations, favorable credit-supply bank
net-worth shocks lower Canadian corporate spreads during the first year after
the shock, raise real aggregate bank market capitalization, increase broader
equity prices, and are followed by a positive response of GDP over the medium
run. The magnitudes differ across rotations, as expected given that the
rotations place different weights on the raw bank equity surprise and the OAS
innovation, but the central financial and real-side propagation patterns
remain intact. The \(w=0.25\) rotation is closer to the raw bank equity
surprise, while the \(w=0.75\) rotation places relatively more weight on the
spread component. The similarity of the main responses across these
interquartile rotations suggests that the benchmark results are not driven by
the particular choice of the median admissible angle.

%%%%%%%%%%%%%%%%%%%%%%%%%%%%%%%%%%%%%%%%%%%%%%%%%%%%%%%%%%%%%%%%%%%%%% 
\subsection{Specification Robustness} \label{subsec:robustness_specification}

We first examine whether the benchmark propagation results are sensitive to the specification of the local projections. The benchmark specification includes six monthly lags of Canadian macro-financial controls. While this lag structure is intended to flexibly absorb predictable dynamics in Canadian financial and macroeconomic variables, it is useful to verify that the main findings are not driven by the particular number of lags included in the control vector. We therefore re-estimate the benchmark local projections using only two monthly lags of the Canadian controls. This more parsimonious specification preserves the same shock, outcomes, sample, horizon, COVID-period dummy, and Newey--West inference procedure as the benchmark.

The results are reported in Appendix Figure \ref{fig:appendix_robust_lags2}. They are broadly consistent with the benchmark responses. A favorable credit-supply bank net-worth shock lowers Canadian corporate spreads during the first year after the shock, raises real aggregate bank market capitalization, increases broader equity prices, and is followed by a positive response of real activity. The timing and magnitudes differ somewhat from the benchmark specification, but the central pattern remains unchanged: favorable bank balance-sheet news eases financial conditions and propagates to aggregate activity. This suggests that the benchmark findings are not an artifact of including six lags of the control variables.

As a second specification check, we augment the benchmark local projections with a linear deterministic time trend. This exercise addresses the possibility that the benchmark responses reflect low-frequency trends in Canadian financial markets, bank valuations, corporate spreads, or macroeconomic activity. The trend specification keeps the six monthly lags of Canadian macro-financial controls and adds a deterministic linear trend to the control vector.

Appendix Figure \ref{fig:appendix_robust_trend} shows that the main results are also robust to including the linear trend. Corporate spreads decline after a favorable credit-supply bank net-worth shock, real aggregate bank market capitalization rises, and the broader equity market responds positively. GDP increases over the medium run, while the policy rate rises gradually, consistent with an endogenous response to stronger financial conditions and real activity. Overall, the trend specification delivers propagation patterns that are close to the benchmark results. This supports the interpretation that the estimated responses are driven by high-frequency bank net-worth news rather than by spurious low-frequency co-movement.

%%%%%%%%%%%%%%%%%%%%%%%%%%%%%%%%%%%%%%%%%%%%%%%%%%%%%%%%%%%%%%%%%%%%%%
\subsection{Sample Robustness}
\label{subsec:robustness_sample}

We next examine whether the benchmark results are driven by particular parts of the sample. The baseline sample begins in late 2002, when the Bloomberg Canadian corporate OAS series becomes available, and extends through early 2026. This period includes several distinct macro-financial environments, including the global financial crisis, the post-crisis low-rate period, the COVID-19 shock, and the subsequent tightening cycle. We therefore estimate two additional sample restrictions. First, we re-estimate the benchmark specification using the pre-2020 sample, which excludes the COVID period and the most recent post-pandemic observations. Second, we estimate the responses using the post-2010 sample, which focuses on the period after the global financial crisis.

The pre-2020 results are reported in Appendix Figure \ref{fig:appendix_robust_pre2020}. The responses are close to the benchmark estimates. A favorable credit-supply bank net-worth shock lowers Canadian corporate spreads during the first year after the shock, raises real aggregate bank market capitalization, increases broader equity prices, and is followed by a positive response of GDP. The exchange-rate response is somewhat larger than in the full sample, while the policy-rate response remains positive over the medium run. Overall, the pre-2020 estimates indicate that the benchmark propagation patterns are not driven by the COVID episode or by the unusual macro-financial dynamics of the post-pandemic period.

Appendix Figure \ref{fig:appendix_robust_post2010} reports the responses for the post-2010 sample. This exercise asks whether the main results survive when we focus on the post-global-financial-crisis period, a sample characterized by a different regulatory environment, lower interest rates, and a more stable Canadian banking system. The financial responses remain consistent with the credit-supply interpretation: corporate spreads fall after a favorable bank net-worth shock, while real aggregate bank market capitalization rises. The response of GDP is also positive over the medium run, although the equity-price response is less pronounced than in the full sample. The policy-rate response is positive initially and then turns negative later in the horizon, suggesting some differences in monetary-policy dynamics in the post-2010 period.

Taken together, these sample restrictions support the robustness of the main results. The core pattern---favorable bank net-worth news lowers corporate borrowing spreads, raises bank valuations, and is followed by stronger real activity---is present both before 2020 and after 2010. The magnitudes and persistence vary across samples, as expected given the different macro-financial environments, but the central propagation mechanism remains intact.

%%%%%%%%%%%%%%%%%%%%%%%%%%%%%%%%%%%%%%%%%%%%%%%%%%%%%%%%%%%%%%%%%%%%%%
\subsection{Robustness to Excluding the Largest Banks}
\label{subsec:robustness_large_banks}

We also examine whether the benchmark results are driven by the largest banks in the sample. This is an important concern because the Canadian banking sector is highly concentrated, and the market-capitalization-weighted shock places larger weight on the largest institutions. If the benchmark responses were mechanically driven by one or two dominant banks, the interpretation of the shock as aggregate Canadian bank net-worth news would be weaker.

To address this concern, we reconstruct the shock series after excluding RBC, after excluding TD, and after excluding both RBC and TD. In each case, we repeat the shock-construction procedure using the remaining banks and re-estimate the benchmark local projections with the same outcomes, controls, normalization, horizon, COVID-period dummy, and Newey--West inference procedure as in the baseline specification. Appendix Figures \ref{fig:appendix_no_rbc}, \ref{fig:appendix_no_td}, and \ref{fig:appendix_no_rbc_td} report the corresponding impulse responses.

The results are close to the benchmark estimates. In all three exercises, a favorable credit-supply bank net-worth shock lowers Canadian corporate spreads during the first year after the shock, raises real aggregate bank market capitalization, increases the broader equity index, and is followed by a positive response of GDP over the medium run. The policy-rate response is also similar to the benchmark, rising gradually after the shock before declining later in the horizon. The exchange-rate and inflation responses display some variation across exclusions, but the central financial and real-activity patterns remain intact.

These results indicate that the benchmark propagation mechanism is not driven only by RBC, only by TD, or by the joint influence of the two largest banks. Instead, the responses survive when the shock is reconstructed from the remaining institutions. This supports the interpretation that the identified shock captures broader news about Canadian bank balance sheets rather than idiosyncratic variation from a single dominant bank.

%%%%%%%%%%%%%%%%%%%%%%%%%%%%%%%%%%%%%%%%%%%%%%%%%%%%%%%%%%%%%%%%%%%%%%
\subsection{Sectoral Propagation}
\label{subsec:robustness_sectoral}

We next study whether the real effects of bank net-worth shocks are concentrated in particular sectors of the Canadian economy. The benchmark results show that a favorable credit-supply bank net-worth shock is followed by an increase in aggregate real activity. This subsection decomposes that response by estimating local projections for sectoral measures of GDP. The goal is to assess whether the real-side propagation is broad-based or instead concentrated in sectors that are more sensitive to credit conditions, investment, inventories, or external demand.

We conduct the exercise in two steps. First, we distinguish between durable goods, non-durable goods, and services. Second, we study a more disaggregated set of sectors: industrial production, oil, mining, petroleum and gas, construction, retail trade, wholesale trade, and finance and insurance. In each case, the sectoral outcome is transformed as $100$ times the log of the corresponding activity index, and the dependent variable is the cumulative change from the month before the shock to horizon $h$. The regressions use the same normalized median rotational-angle credit-supply bank net-worth shock as in the benchmark specification. They also include the benchmark Canadian macro-financial controls, the COVID-period dummy, and six lags of the corresponding sectoral outcome. Thus, each sectoral response is estimated conditional on both aggregate Canadian macro-financial dynamics and the sector's own lagged dynamics.

The goods-versus-services responses are reported in Appendix Figure \ref{fig:appendix_goods_services}. The results suggest that the aggregate GDP response is driven more strongly by goods-producing sectors than by services. Durable goods display the largest and most cyclical response: output rises during the first year after a favorable bank net-worth shock, peaks around the medium-run horizon, and then reverses later in the horizon. Non-durable goods also increase initially, although the response is smaller. Services respond positively but more modestly and with less persistence. This pattern is consistent with the interpretation that credit-supply shocks operate more strongly through sectors whose production is more sensitive to financing conditions, inventory accumulation, and durable expenditure.

Appendix Figure \ref{fig:appendix_sectoral_gdp} provides additional sectoral detail. Industrial production increases after the shock and peaks around one year later, closely matching the timing of the aggregate GDP response. Oil, mining, petroleum and gas display a large positive response, suggesting that resource-related activity contributes importantly to the aggregate real-side propagation. Construction, retail trade, and wholesale trade also increase over the first year, consistent with stronger domestic demand and easier credit conditions. Finance and insurance responds positively but more moderately, with wider confidence bands later in the horizon.

Overall, the sectoral results reinforce the interpretation of the benchmark macroeconomic responses. Favorable credit-supply news about Canadian banks does not only affect financial prices; it also propagates to real activity in economically meaningful sectors. The response is strongest in goods-producing and resource-related sectors, while services respond more modestly. These patterns are consistent with a transmission mechanism in which bank net-worth shocks ease financial conditions, support credit-sensitive expenditure, and stimulate real activity across several parts of the Canadian economy.

%%%%%%%%%%%%%%%%%%%%%%%%%%%%%%%%%%%%%%%%%%%%%%%%%%%%%%%%%%%%%%%%%%%%%%
\subsection{Labor-Market Responses}
\label{subsec:robustness_labor_market}

Finally, we examine whether the real effects of bank net-worth shocks extend to labor-market outcomes. The benchmark results show that favorable credit-supply bank net-worth shocks raise real activity over the medium run. If this increase in activity reflects a meaningful expansion in production rather than only movements in asset prices or measured GDP, we should also observe improvements in labor-market conditions. We therefore estimate additional local projections for total employment, the unemployment rate, and total hours worked. The results are reported in Appendix Figure \ref{fig:appendix_labor_market}.

The specification follows the structure of the sectoral exercises. Employment and total hours worked are transformed as $100$ times log levels, and their impulse responses are measured as cumulative log changes relative to the month before the shock. The unemployment rate is measured in levels, and the dependent variable is the change in the unemployment rate relative to the month before the shock. Each regression includes the benchmark Canadian macro-financial controls, the COVID-period dummy, and six lags of the corresponding labor-market variable. The shock is the same normalized median rotational-angle credit-supply bank net-worth shock used in the benchmark specification.

Appendix Figure \ref{fig:appendix_labor_market} shows that favorable credit-supply bank net-worth shocks are followed by an improvement in labor-market conditions. Employment and total hours worked rise during the first year after the shock, with both responses peaking around one year after the event. The unemployment rate falls over the same horizon, reaching its largest decline when employment and hours are near their peak. These responses are consistent with the real-side propagation documented above: easier credit conditions are followed not only by higher GDP, but also by higher labor input and lower unemployment.

The effects are not permanent. Employment and hours gradually return toward zero and become negative later in the horizon, while the unemployment rate eventually rises. This reversal is consistent with the hump-shaped GDP and sectoral responses documented in the benchmark and additional-results exercises. Overall, the labor-market evidence reinforces the interpretation that bank net-worth shocks propagate beyond financial markets: favorable credit-supply news about Canadian banks is followed by stronger real activity, higher labor input, and lower unemployment in the medium run.

% -------------------------------------------------
% Conclusions
% -------------------------------------------------
\section{Conclusions} \label{sec:conclusions}

This paper studies whether news about banks' balance sheets propagates to aggregate financial conditions and macroeconomic activity. We construct a high-frequency measure of Canadian bank net-worth shocks using stock-price reactions around the earnings announcements of the six large Canadian banks. The empirical strategy is motivated by a simple model in which higher intermediary net worth raises bank equity valuations, expands credit supply, and lowers corporate borrowing spreads. This prediction guides our use of the co-movement between bank equity prices and Canadian corporate spreads to purge raw bank equity surprises from contaminating information and isolate the credit-supply component of bank net-worth news.

The Canadian setting is useful because the banking sector is highly concentrated. Earnings announcements by a small number of large institutions therefore reveal information about a quantitatively important part of the domestic intermediation sector. We show that these announcements are meaningful information events: announcement days feature substantially larger stock-price movements than non-announcement days, and timing-adjusted stock-price reactions are strongly related to Bloomberg earnings surprises. We aggregate bank-level surprises using lagged market-capitalization shares and identify the purged credit-supply shock using sign restrictions motivated by the model.

The main results show that favorable credit-supply bank net-worth shocks have economically meaningful aggregate effects. A one-standard-deviation favorable shock lowers Canadian corporate spreads, raises the real aggregate market capitalization of Canadian banks, increases broader equity prices, appreciates the Canadian dollar, and is followed by a gradual expansion in real activity. These responses are consistent with the intermediary balance-sheet channel: positive news about bank net worth eases corporate financial conditions and propagates to the real economy.

Several additional exercises support this interpretation. Raw bank equity surprises produce similar but less sharply interpretable responses, while shocks identified using a simpler poor man's sign-restriction procedure generate results close to the benchmark median-rotation shocks. In an IV local projection exercise, we use the purged credit-supply shock as an instrument for Canadian corporate spreads and find that bank-induced spread increases reduce bank valuations, lower equity prices, and contract real activity. The main results are also robust to alternative lag structures, deterministic trends, and sample restrictions. Additional outcomes show that the effects extend to labor markets and sectoral activity, with stronger responses in goods-producing and resource-related sectors.

Overall, the evidence suggests that bank earnings announcements contain macroeconomically relevant information. In a concentrated banking system, news about the profitability and capitalization of large banks is not merely bank-level valuation news: it affects corporate borrowing conditions, asset prices, and real activity. The findings highlight intermediary net worth as a source of aggregate fluctuations and show that high-frequency bank earnings news can be used to identify credit-supply shocks in settings where banks play a central role in financial intermediation.

\bigskip

\bigskip

\bigskip

\bigskip

\bigskip

\bigskip

\paragraph{Declaration of AI Use.}
The authors used OpenAI's ChatGPT as an auxiliary tool during manuscript preparation. Its use was limited to language editing, improving exposition and organization, and assisting with the preparation and revision of \LaTeX, Stata, and MATLAB code. The authors did not use AI tools to generate original data, conduct independent empirical analysis, create research results, or determine the paper's conclusions. All AI-assisted material was reviewed, edited, and verified by the authors, who take full responsibility for the content of the manuscript.

% -------------------------------------------------
% References
% -------------------------------------------------
\newpage
\bibliography{references.bib}

%%%%%%%%%%%%%%%%%%%%%%%%%%%%%%%%%%%%%%%%%%%%%%%%%%%%%%%%%%%%%%%%%%%%%%
% -------------------------------------------------
% Appendix
% -------------------------------------------------

\newpage
\appendix

%%%%%%%%%%%%%%%%%%%%%%%%%%%%%%%%%%%%%%%%%%%%%%%%%%%%%%%%%%%%%%%%%%%%%%
\section{Equilibrium Conditions and Proofs}
\label{appendix:proofs}

This appendix provides the derivations behind the equilibrium conditions and
comparative statics in Section \ref{sec:illustrative_model}. The main text
uses the model as an illustrative framework for the empirical strategy. Here,
we provide the formal details. We first characterize the maintained
equilibrium region in which banks supply positive credit, leverage constraints
bind, banks issue positive external equity, and equilibrium lending spreads
are positive. We then derive the symmetric Cournot equilibrium and prove the
comparative-static results used in the main text.

%%%%%%%%%%%%%%%%%%%%%%%%%%%%%%%%%%%%%%%%%%%%%%%%%%%%%%%%%%%%%%%%%%%%%%
\subsection{Maintained Equilibrium Region}
\label{appendix_constraints}

Bank \(i\)'s problem before imposing the binding leverage constraint is:
\[
\max_{\ell_i,e_i,d_i}
\quad
\Pi_i
=
\left[
R(L)-R_D
\right]\ell_i
+
R_D(n_i+e_i)
-
\frac{\kappa}{2}e_i^2,
\]
subject to the balance-sheet identity
\[
\ell_i=d_i+n_i+e_i,
\]
and the leverage constraint
\[
\ell_i\leq \lambda(n_i+e_i).
\]
The inverse demand for credit is
\[
R(L)=\bar A(1-L),
\]
where \(L=\sum_{j=1}^{N}\ell_j\). The deposit rate is pinned down by
households' Euler equation:
\[
R_D=\frac{1}{\beta}.
\]

Let \(\mu_i\geq0\) denote the Lagrange multiplier on the leverage constraint.
The Lagrangian is:
\[
\mathcal L_i
=
\left[
R(L)-R_D
\right]\ell_i
+
R_D(n_i+e_i)
-
\frac{\kappa}{2}e_i^2
+
\mu_i
\left[
\lambda(n_i+e_i)-\ell_i
\right].
\]
The first-order condition with respect to lending is:
\[
\frac{\partial \mathcal L_i}{\partial \ell_i}
=
R(L)
+
R'(L)\ell_i
-
R_D
-
\mu_i
=
0.
\]
The first-order condition with respect to external equity issuance is:
\[
\frac{\partial \mathcal L_i}{\partial e_i}
=
R_D
-
\kappa e_i
+
\lambda\mu_i
=
0.
\]
Therefore, when the leverage constraint binds, the multiplier satisfies:
\[
\mu_i
=
\frac{\kappa e_i-R_D}{\lambda}.
\]
A strictly positive shadow value of the leverage constraint requires
\[
\mu_i>0
\qquad\Longleftrightarrow\qquad
\kappa e_i>R_D.
\]
Thus, the leverage constraint has a positive shadow value when the marginal
cost of external equity issuance is sufficiently high relative to the deposit
funding cost.

Under a binding leverage constraint,
\[
\ell_i=\lambda(n_i+e_i),
\]
so external equity issuance is:
\[
e_i
=
\frac{\ell_i}{\lambda}-n_i.
\]
Substituting this expression for \(e_i\), together with the expression for
\(\mu_i\), into the lending first-order condition gives the same first-order
condition obtained below after substituting the binding leverage constraint
directly into profits. Thus, the Lagrangian formulation and the reduced
loan-supply problem are equivalent in the maintained binding region.

In the symmetric equilibrium characterized below,
\[
\ell_i=\ell^*,
\qquad
n_i=n,
\qquad
L^*=N\ell^*.
\]
Equilibrium lending per bank is:
\[
\ell^*
=
\frac{
\bar A
-
R_D\left(1-\frac{1}{\lambda}\right)
+
\frac{\kappa}{\lambda}n
}{
\bar A(N+1)
+
\frac{\kappa}{\lambda^2}
}.
\]
It is useful to define
\[
B_0
\equiv
\bar A
-
R_D\left(1-\frac{1}{\lambda}\right),
\qquad
D
\equiv
\bar A(N+1)
+
\frac{\kappa}{\lambda^2}.
\]
Then equilibrium lending can be written as
\[
\ell^*
=
\frac{
B_0+\frac{\kappa}{\lambda}n
}{D}.
\]
Since \(D>0\), positive lending requires
\[
B_0+\frac{\kappa}{\lambda}n>0.
\]

Positive external equity issuance requires
\[
e^*>0.
\]
Using
\[
e^*
=
\frac{\ell^*}{\lambda}-n,
\]
this condition is equivalent to
\[
\ell^*>\lambda n.
\]
Substituting the expression for \(\ell^*\) gives
\[
\frac{
B_0+\frac{\kappa}{\lambda}n
}{D}
>
\lambda n.
\]
Rearranging yields
\[
n
<
\bar n_e
\equiv
\frac{
B_0
}{
\lambda \bar A(N+1)
}.
\]
Thus, banks issue positive external equity when initial net worth is
sufficiently low relative to profitable lending opportunities. If \(n\geq0\),
a nonempty region with positive lending and positive external equity requires
\(B_0>0\). In that case, positive lending holds for all \(n\geq0\), while
positive external equity imposes the upper bound \(n<\bar n_e\).

For the leverage constraint to bind with a strictly positive multiplier, we
require
\[
\mu^*>0.
\]
Using
\[
\mu^*
=
\frac{\kappa e^*-R_D}{\lambda},
\]
this is equivalent to
\[
e^*>\frac{R_D}{\kappa}.
\]
Substituting \(e^*=\ell^*/\lambda-n\), this is equivalent to
\[
\frac{\ell^*}{\lambda}-n
>
\frac{R_D}{\kappa}.
\]
This condition imposes a stricter upper bound on \(n\) than the requirement
\(e^*>0\). Finally, equilibrium lending spreads are
\[
S^*
=
R^*-R_D,
\]
where
\[
R^*
=
\bar A(1-L^*).
\]
Positive equilibrium lending spreads require
\[
S^*
=
\bar A(1-L^*)-R_D
>0.
\]

Throughout the model analysis, we restrict attention to parameterizations for
which these conditions hold jointly: positive lending, positive external
equity issuance, a positive shadow value of the leverage constraint, and
positive equilibrium lending spreads. Equivalently, we work in an interior
region in which the leverage constraint binds, banks use both internal net
worth and costly external equity, and the loan market clears at a positive
spread over the deposit rate.

%%%%%%%%%%%%%%%%%%%%%%%%%%%%%%%%%%%%%%%%%%%%%%%%%%%%%%%%%%%%%%%%%%%%%%
\subsection{Cournot Equilibrium}

This subsection derives the symmetric Cournot equilibrium used in the main
text. Let
\[
L=\ell_i+L_{-i},
\qquad
L_{-i}\equiv\sum_{j\neq i}\ell_j.
\]
Under a binding leverage constraint, bank \(i\)'s profits can be written as
\[
\Pi_i
=
\left[
R(\ell_i+L_{-i})-R_D
\right]\ell_i
+
R_D
\left(
\frac{\ell_i}{\lambda}
\right)
-
\frac{\kappa}{2}
\left(
\frac{\ell_i}{\lambda}-n_i
\right)^2.
\]
Using \(R(L)=\bar A(1-L)\), this becomes
\[
\Pi_i
=
\left[
\bar A(1-\ell_i-L_{-i})-R_D
\right]\ell_i
+
R_D
\left(
\frac{\ell_i}{\lambda}
\right)
-
\frac{\kappa}{2}
\left(
\frac{\ell_i}{\lambda}-n_i
\right)^2.
\]
Bank \(i\) chooses \(\ell_i\) taking \(L_{-i}\) as given. The first-order
condition is:
\[
\frac{\partial \Pi_i}{\partial \ell_i}
=
\bar A(1-\ell_i-L_{-i})
-
\bar A\ell_i
-
R_D
+
\frac{R_D}{\lambda}
-
\frac{\kappa}{\lambda}
\left(
\frac{\ell_i}{\lambda}-n_i
\right)
=0.
\]
Equivalently,
\[
\bar A
-
\bar A L_{-i}
-
2\bar A\ell_i
-
R_D\left(1-\frac{1}{\lambda}\right)
-
\frac{\kappa}{\lambda}
\left(
\frac{\ell_i}{\lambda}-n_i
\right)
=0.
\]

In a symmetric equilibrium,
\[
\ell_i=\ell,
\qquad
n_i=n,
\qquad
L_{-i}=(N-1)\ell.
\]
Substituting these conditions into the first-order condition gives
\[
\bar A
-
\bar A(N+1)\ell
-
R_D
\left(
1-\frac{1}{\lambda}
\right)
-
\frac{\kappa}{\lambda}
\left(
\frac{\ell}{\lambda}-n
\right)
=0.
\]
Solving for \(\ell\) yields
\[
\ell^*
=
\frac{
\bar A
-
R_D\left(1-\frac{1}{\lambda}\right)
+
\frac{\kappa}{\lambda}n
}{
\bar A(N+1)
+
\frac{\kappa}{\lambda^2}
}.
\]
Aggregate lending, the equilibrium lending rate, and the equilibrium lending
spread are therefore
\[
L^*=N\ell^*,
\qquad
R^*=\bar A(1-L^*),
\qquad
S^*=R^*-R_D.
\]

%%%%%%%%%%%%%%%%%%%%%%%%%%%%%%%%%%%%%%%%%%%%%%%%%%%%%%%%%%%%%%%%%%%%%%
\subsection{Comparative Static: Common Net Worth and Credit Supply}

\begin{proposition}
In the maintained equilibrium region, a common increase in bank net worth
increases equilibrium lending per bank and aggregate credit supply.
\end{proposition}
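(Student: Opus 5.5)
The argument works directly from the closed-form symmetric Cournot equilibrium derived in the previous subsection. In the maintained equilibrium region the leverage constraint binds and $e^*>0$. Hence the reduced loan-supply problem, obtained by substituting $e_i=\ell_i/\lambda-n_i$ into profits, is equivalent to the full Lagrangian problem, as shown in Appendix \ref{appendix_constraints}. Bank $i$'s objective is strictly concave in $\ell_i$, since its second derivative is
\[
-2\bar A-\frac{\kappa}{\lambda^2}<0.
\]
The first-order condition therefore characterizes each bank's best response, and the symmetric equilibrium is the unique solution of the linear equation
\[
B_0-D\ell+\frac{\kappa}{\lambda}n=0,
\qquad
B_0\equiv\bar A-R_D\left(1-\frac{1}{\lambda}\right),
\qquad
D\equiv\bar A(N+1)+\frac{\kappa}{\lambda^2},
\]
which gives $\ell^*=(B_0+\frac{\kappa}{\lambda}n)/D$.

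The key step is to differentiate this expression in the common net-worth level $n$, holding $\bar A$, $R_D$, $\lambda$, $\kappa$ and $N$ fixed. This yields
\[
\frac{\partial \ell^*}{\partial n}=\frac{\kappa/\lambda}{D}.
\]
I would then check the signs: $\kappa>0$, $\lambda>1$, $\bar A=\lambda_f A>0$ and $N\geq1$ imply $D>0$, so the derivative is strictly positive. Since $L^*=N\ell^*$, it follows that $\partial L^*/\partial n=N\,\partial\ell^*/\partial n>0$.

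The only delicate point is making sure the comparative static stays inside the maintained region, so that the closed form remains valid. All the defining conditions are strict inequalities and continuous in $n$: positive lending, $e^*>0$, $\mu^*>0$, and $S^*>0$. So for a marginal increase in $n$ from an interior point, the equilibrium remains in the region and the derivative computed above is the relevant one. For a finite increase, I would note that $\ell^*$ is affine in $n$, so the conclusion holds for any change that keeps $n$ inside the region, for instance $n<\bar n_e$ together with the stricter $\mu^*>0$ bound.

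I would also add an economic interpretation. In the first-order condition, $n$ enters only through the marginal external-equity cost term $-\frac{\kappa}{\lambda}(\ell/\lambda-n)$, so higher $n$ lowers the marginal cost of lending at every $\ell$ and shifts each bank's best response outward.
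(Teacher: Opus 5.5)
Your proof is correct and takes essentially the same route as the paper: differentiate the closed-form symmetric equilibrium $\ell^*=(B_0+\frac{\kappa}{\lambda}n)/D$ in the common level $n$, sign $\frac{\kappa/\lambda}{D}>0$ from $\kappa>0$, $\lambda>0$ and $D>0$, and multiply by $N$ to get $\partial L^*/\partial n>0$. Your extra checks — strict concavity of the reduced objective, and the argument that a marginal change in $n$ keeps the equilibrium inside the maintained region — do not appear in the paper's proof, but they are sound and only add rigor.
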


\begin{proof}
Equilibrium lending per bank is
\[
\ell^*
=
\frac{
\bar A
-
R_D\left(1-\frac{1}{\lambda}\right)
+
\frac{\kappa}{\lambda}n
}{
\bar A(N+1)
+
\frac{\kappa}{\lambda^2}
}.
\]
Differentiating with respect to the common net-worth level \(n\) gives
\[
\frac{\partial \ell^*}{\partial n}
=
\frac{
\kappa/\lambda
}{
\bar A(N+1)+\kappa/\lambda^2
}.
\]
Since
\[
\kappa>0,
\qquad
\lambda>0,
\qquad
\bar A(N+1)+\frac{\kappa}{\lambda^2}>0,
\]
we have
\[
\frac{\partial \ell^*}{\partial n}>0.
\]
Aggregate lending is
\[
L^*=N\ell^*.
\]
Therefore
\[
\frac{\partial L^*}{\partial n}
=
N
\frac{\partial \ell^*}{\partial n}
>0.
\]
Thus, higher common bank net worth increases equilibrium lending per bank and
aggregate credit supply.
\end{proof}

%%%%%%%%%%%%%%%%%%%%%%%%%%%%%%%%%%%%%%%%%%%%%%%%%%%%%%%%%%%%%%%%%%%%%%
\subsection{Comparative Static: Common Net Worth and Lending Spreads}

\begin{proposition}
In the maintained equilibrium region, a common increase in bank net worth
lowers the equilibrium corporate borrowing spread.
\end{proposition}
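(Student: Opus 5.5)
The argument chains the preceding proposition through the inverse credit-demand schedule and the fixed deposit rate. I would write the equilibrium spread as a composition of explicit functions of the common net-worth level \(n\):
\[
S^*(n)=\bar A\bigl(1-L^*(n)\bigr)-R_D,
\qquad
L^*(n)=N\ell^*(n),
\]
where \(\ell^*(n)\) is the closed-form symmetric Cournot solution derived above. The key observation is that \(R_D=1/\beta\) comes from the households' Euler equation. It therefore does not depend on \(n\), or on any bank-side variable. Hence \(\partial S^*/\partial n=\partial R^*/\partial n\).

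The steps, in order, are as follows. First, invoke the previous proposition, which gives
\[
\frac{\partial L^*}{\partial n}=N\,\frac{\kappa/\lambda}{\bar A(N+1)+\kappa/\lambda^2}>0.
\]
Second, differentiate the inverse demand \(R^*=\bar A(1-L^*)\) to obtain
\[
\frac{\partial R^*}{\partial n}=-\bar A\,\frac{\partial L^*}{\partial n}.
\]
Third, note that \(\bar A=\lambda_f A>0\), because \(\lambda_f\in(0,1)\) and \(A>0\). The product of \(-\bar A\) with a strictly positive quantity is therefore strictly negative. Fourth, conclude that
\[
\frac{\partial S^*}{\partial n}=-\bar A N\frac{\kappa/\lambda}{\bar A(N+1)+\kappa/\lambda^2}<0.
\]

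The one step needing care is the restriction to the maintained equilibrium region. The closed-form \(\ell^*\) is valid only while the leverage constraint binds with a positive multiplier, external equity is positive, lending is positive, and \(S^*>0\). I would argue as follows. These conditions are strict inequalities on continuous functions of \(n\), so they define an open set. At any interior point the derivative computed from the closed form is therefore the relevant comparative static. A marginal increase in \(n\) keeps the economy in the region, except at its boundary, where for example \(e^*>R_D/\kappa\) would eventually fail as \(n\) rises. Stating the result locally, for marginal changes within the region, avoids any issue there. Beyond this, no real obstacle is expected, since the sign follows mechanically once \(\bar A>0\), \(\kappa>0\), \(\lambda>1\) and \(N\ge1\) are recorded.
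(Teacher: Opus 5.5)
Your proposal is correct and follows essentially the same route as the paper: differentiate $R^*=\bar A(1-L^*)$, invoke $\partial L^*/\partial n>0$ from the preceding proposition together with $\bar A>0$, and use that $R_D=1/\beta$ does not depend on $n$, so that $\partial S^*/\partial n=\partial R^*/\partial n<0$. Your remark that the maintained region is defined by strict inequalities, so the closed-form derivative is the relevant local comparative static, is a harmless refinement that the paper leaves implicit.
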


\begin{proof}
The equilibrium corporate borrowing spread is
\[
S^*
=
R^*-R_D,
\]
where
\[
R^*
=
\bar A(1-L^*).
\]
Differentiating the equilibrium lending rate with respect to common bank net
worth gives
\[
\frac{\partial R^*}{\partial n}
=
-\bar A
\frac{\partial L^*}{\partial n}.
\]
From the previous proposition,
\[
\frac{\partial L^*}{\partial n}>0.
\]
Since \(\bar A>0\), it follows that
\[
\frac{\partial R^*}{\partial n}<0.
\]
The deposit rate is
\[
R_D=\frac{1}{\beta},
\]
and is therefore independent of bank net worth. Hence,
\[
\frac{\partial S^*}{\partial n}
=
\frac{\partial R^*}{\partial n}
<0.
\]
Thus, an increase in common bank net worth compresses equilibrium corporate
borrowing spreads.
\end{proof}

%%%%%%%%%%%%%%%%%%%%%%%%%%%%%%%%%%%%%%%%%%%%%%%%%%%%%%%%%%%%%%%%%%%%%%
\subsection{Comparative Static: Bank-Specific Net Worth and Bank Equity Values}

\begin{proposition}
In the maintained equilibrium region, a bank-specific increase in net worth
raises the equity valuation of the bank receiving the net-worth shock.
\end{proposition}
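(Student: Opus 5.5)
The plan is to treat the bank-specific shock as a marginal perturbation \(dn_i\) around the symmetric equilibrium of the previous subsection. I would compute the equilibrium response of all banks' lending from the linear system of Cournot first-order conditions. I would then sign the total derivative of bank \(i\)'s optimized profits using the envelope theorem, and finally map profits into equity values through \(V_i=\beta\mathbb E[\Pi_i^*]\).

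\emph{Step 1 (well-posedness).} I first check that each bank's reduced problem is strictly concave in \(\ell_i\), since \(\partial^2\Pi_i/\partial\ell_i^2=-2\bar A-\kappa/\lambda^2<0\). This makes the first-order condition displayed in the Cournot Equilibrium subsection sufficient. I also note that all equilibrium objects are continuous in \((n_1,\dots,n_N)\). Because the maintained region is defined by strict inequalities (positive lending, \(e^*>0\), \(\mu^*>0\), \(S^*>0\)), it remains valid for small \(dn_i\). Hence the binding-constraint formulation can legitimately be differentiated.

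\emph{Step 2 (linear Cournot system).} I differentiate bank \(j\)'s first-order condition with respect to \(n_i\). Writing \(a=\bar A+\kappa/\lambda^2\) and using \(L_{-j}=L-\ell_j\), the condition for each \(j\neq i\) becomes
\[
-\bar A\,dL-a\,d\ell_j=0,
\]
while bank \(i\)'s own condition picks up the extra term \(+\kappa/\lambda\):
\[
-\bar A\,dL-a\,d\ell_i+\kappa/\lambda=0 .
\]
Summing over all \(N\) banks gives \((N\bar A+a)\,dL=\kappa/\lambda\). Since \(N\bar A+a=D\), this yields \(\partial L^*/\partial n_i=(\kappa/\lambda)/D\). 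For each competitor, \(d\ell_j=-(\bar A/a)\,dL\), so
\[
\frac{\partial L_{-i}^*}{\partial n_i}=-(N-1)\frac{(\kappa/\lambda)\bar A}{aD}<0,
\]
which is strategic substitutability.

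\emph{Step 3 (envelope argument).} Optimized profits are \(\Pi_i^*=\Pi_i(\ell_i^*,L_{-i}^*,n_i)\). Because \(\partial\Pi_i/\partial\ell_i=0\) at the optimum, the total derivative is
\[
\frac{d\Pi_i^*}{dn_i}=\frac{\partial\Pi_i}{\partial n_i}+\frac{\partial\Pi_i}{\partial L_{-i}}\frac{\partial L_{-i}^*}{\partial n_i}.
\]
The direct term is \(\kappa(\ell_i^*/\lambda-n_i)=\kappa e_i^*\). The strategic term is \(-\bar A\ell_i^*\,\partial L_{-i}^*/\partial n_i\). The first is positive because \(e_i^*>0\) in the maintained region. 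The second is positive because \(\ell_i^*>0\) and \(\partial L_{-i}^*/\partial n_i<0\) by Step 2. Hence \(d\Pi_i^*/dn_i>0\). Since \(\beta>0\) and the model is deterministic, \(dV_i/dn_i=\beta\, d\Pi_i^*/dn_i>0\).

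I expect the main obstacle to be Step 2 together with its justification in Step 1. One must be careful that the perturbation breaks symmetry, so the symmetric closed form for \(\ell^*\) cannot simply be differentiated. The full \(N\)-equation linear system has to be solved, and the summation trick above is what I would use first to do so. One must also make sure the regime (binding leverage constraint, positive issuance) does not switch, which is what licenses using the reduced objective. Once these two points are secured, the sign argument in Step 3 is immediate.
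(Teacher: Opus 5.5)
Your proposal is correct and follows the paper's proof essentially step for step. You differentiate the linear Cournot first-order conditions to obtain $\partial L_{-i}^*/\partial n_i=-(N-1)(\kappa/\lambda)\bar A/(aD)<0$, apply the envelope theorem to get $d\Pi_i^*/dn_i=\kappa e_i^*+\bar A\ell_i^*(N-1)(\kappa/\lambda)\bar A/(aD)>0$, and map this into $V_i$. The differences are only cosmetic: you solve the system by summing the $N$ equations instead of writing the Sherman--Morrison inverse of $aI_N+\bar A\mathbf 1\mathbf 1'$. Your Step 1 (strict concavity, and persistence of the maintained region under a small asymmetric perturbation) also makes explicit a justification the paper leaves implicit.
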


\begin{proof}
The empirical design uses bank-level earnings announcements. The relevant
valuation comparative static is therefore a bank-specific perturbation in
\(n_i\), evaluated around the symmetric equilibrium.

The Cournot first-order conditions can be written as a linear system in loan
quantities. For each bank \(k\),
\[
\left(
2\bar A+\frac{\kappa}{\lambda^2}
\right)\ell_k
+
\bar A
\sum_{j\neq k}\ell_j
=
\bar A
-
R_D\left(1-\frac{1}{\lambda}\right)
+
\frac{\kappa}{\lambda}n_k.
\]
Define
\[
a
\equiv
\bar A+\frac{\kappa}{\lambda^2},
\qquad
D
\equiv
\bar A(N+1)+\frac{\kappa}{\lambda^2}
=
a+N\bar A.
\]
The coefficient matrix of the Cournot system is
\[
a I_N+\bar A \mathbf 1\mathbf 1',
\]
whose inverse is
\[
\left(a I_N+\bar A \mathbf 1\mathbf 1'\right)^{-1}
=
\frac{1}{a}I_N
-
\frac{\bar A}{aD}\mathbf 1\mathbf 1'.
\]
A marginal increase in \(n_i\) therefore implies
\[
\frac{\partial \ell_i^*}{\partial n_i}
=
\frac{\kappa}{\lambda}
\frac{
a+(N-1)\bar A
}{
aD
}
>0,
\]
and, for each \(j\neq i\),
\[
\frac{\partial \ell_j^*}{\partial n_i}
=
-
\frac{\kappa}{\lambda}
\frac{
\bar A
}{
aD
}
<0.
\]
Thus, the bank receiving the net-worth shock expands lending, while rival
banks reduce lending because loan quantities are strategic substitutes. The
aggregate lending response is
\[
\frac{\partial L^*}{\partial n_i}
=
\frac{\partial \ell_i^*}{\partial n_i}
+
\sum_{j\neq i}
\frac{\partial \ell_j^*}{\partial n_i}
=
\frac{\kappa/\lambda}{D}
>0.
\]
Since \(R^*=\bar A(1-L^*)\), the equilibrium lending rate and spread fall:
\[
\frac{\partial S^*}{\partial n_i}
=
\frac{\partial R^*}{\partial n_i}
=
-\bar A
\frac{\partial L^*}{\partial n_i}
<0.
\]

It remains to show that the value of bank \(i\) rises. Under a binding
leverage constraint, bank \(i\)'s optimized profits are
\[
\Pi_i^*
=
\left[
R(L^*)-R_D
\right]\ell_i^*
+
R_D
\left(
\frac{\ell_i^*}{\lambda}
\right)
-
\frac{\kappa}{2}
\left(
\frac{\ell_i^*}{\lambda}-n_i
\right)^2.
\]
When differentiating optimized profits with respect to \(n_i\), the envelope
theorem eliminates the effect of bank \(i\)'s own optimal lending response.
However, competitors' lending also responds and therefore affects the loan
rate faced by bank \(i\). Hence,
\[
\frac{d\Pi_i^*}{dn_i}
=
\frac{\partial \Pi_i}{\partial n_i}
+
\frac{\partial \Pi_i}{\partial L_{-i}}
\frac{\partial L_{-i}^*}{\partial n_i}.
\]
The direct effect is
\[
\frac{\partial \Pi_i}{\partial n_i}
=
\kappa
\left(
\frac{\ell_i^*}{\lambda}-n_i
\right)
=
\kappa e_i^*.
\]
The effect through competitors' lending is
\[
\frac{\partial \Pi_i}{\partial L_{-i}}
=
R'(L^*)\ell_i^*
=
-\bar A\ell_i^*,
\]
and
\[
\frac{\partial L_{-i}^*}{\partial n_i}
=
\sum_{j\neq i}
\frac{\partial \ell_j^*}{\partial n_i}
=
-(N-1)
\frac{\kappa}{\lambda}
\frac{\bar A}{aD}
<0.
\]
Therefore,
\[
\frac{d\Pi_i^*}{dn_i}
=
\kappa e_i^*
+
\bar A\ell_i^*
(N-1)
\frac{\kappa}{\lambda}
\frac{\bar A}{aD}
>0,
\]
where the inequality uses \(e_i^*>0\) in the maintained equilibrium region.

Bank equity values reflect the present discounted value of expected bank
profits:
\[
V_i
=
\beta \mathbb E[\Pi_i^*].
\]
The derivative above is positive in every state in the maintained region.
Thus, differentiating under the expectation gives
\[
\frac{dV_i}{dn_i}
=
\beta
\mathbb E
\left[
\frac{d\Pi_i^*}{dn_i}
\right]
>0.
\]
A bank-specific increase in net worth therefore raises the equity valuation
of the bank receiving the shock.
\end{proof}

%%%%%%%%%%%%%%%%%%%%%%%%%%%%%%%%%%%%%%%%%%%%%%%%%%%%%%%%%%%%%%%%%%%%%%
\subsection{Summary of the Identification Implication}

The comparative statics above imply that favorable news about bank net worth
raises bank equity valuations and lowers corporate borrowing spreads. A common
increase in bank net worth expands aggregate lending and compresses spreads:
\[
n \uparrow
\quad\Rightarrow\quad
L^* \uparrow,
\qquad
S^* \downarrow.
\]
A bank-specific increase in net worth raises the equity valuation of the bank
receiving the shock, increases aggregate lending, and lowers spreads:
\[
n_i \uparrow
\quad\Rightarrow\quad
V_i \uparrow,
\qquad
L^* \uparrow,
\qquad
S^* \downarrow.
\]
This latter comparative static is the one most closely aligned with the
earnings-announcement design, since earnings releases reveal news about a
particular bank. It delivers the sign pattern used in the empirical strategy:
favorable credit-supply news about bank net worth raises bank equity
valuations and lowers corporate borrowing spreads. The raw stock-price
reaction around an earnings announcement may also contain information about
expected loan demand, aggregate fundamentals, borrower quality, or risk
premia. The empirical strategy therefore uses the co-movement between bank
equity prices and Canadian corporate spreads to isolate the component of bank
equity news most closely aligned with the credit-supply mechanism in the
model.

%%%%%%%%%%%%%%%%%%%%%%%%%%%%%%%%%%%%%%%%%%%%%%%%%%%%%%%%%%%%%%%%%%%%%%
\newpage
\section{Additional Details on Banks' Surprises}
\label{appendix:additional_details_surprises}

This appendix provides additional details on the construction and validation of the bank-level equity surprises used in the empirical analysis.

Table \ref{tab:appendix_bank_price_summary_tests} reports bank-level summary statistics comparing stock-price movements on earnings-announcement days and non-announcement days. For this descriptive comparison, we use the same close-to-close stock-return window for both groups, so that announcement and non-announcement days are compared on an equal basis. For each bank, we report the close-to-close price change, its absolute value, and its square, together with tests for differences in means, variances, and medians across announcement and non-announcement days. These statistics provide supporting evidence that earnings announcements are information events for Canadian bank equity prices. The high-frequency shocks used in the empirical analysis are constructed separately using timing-adjusted event-window stock-price reactions around earnings announcements.

%%%%%%%%%%%%%%%%%%%%%%%%%%%%%%%%%%%%%%%%%%%%%%%%%%%%%%%%%%%%%%%%%%%%%%
\begin{landscape}
\begin{table}[p]
\centering
\caption{Bank-Level Stock-Price Movements Around Earnings Announcements: Same-Window Comparison}
\label{tab:appendix_bank_price_summary_tests}
\begin{threeparttable}
\scriptsize
\setlength{\tabcolsep}{3pt}
\resizebox{\linewidth}{!}{
\begin{tabular}{llccccccccccc}
\toprule
& & \multicolumn{4}{c}{Non-Announcement Days} 
& \multicolumn{4}{c}{Announcement Days}
& \multicolumn{3}{c}{Tests -- $p$-values} \\
\cmidrule(lr){3-6} \cmidrule(lr){7-10} \cmidrule(lr){11-13}
Bank & Variable 
& $N$ & Mean & Median & SD
& $N$ & Mean & Median & SD
& $t$-test & Variance-test & Median-test \\
\midrule

RBC & Close-to-close price change
& 5,582 & 0.03 & 0.07 & 1.26
& 94 & -0.08 & -0.24 & 2.51
& 0.377 & 0.000 & 0.203 \\

RBC & Absolute close-to-close price change
& 5,582 & 0.81 & 0.56 & 0.97
& 94 & 1.96 & 1.52 & 1.56
& 0.000 & 0.000 & 0.000 \\

RBC & Squared close-to-close price change
& 5,582 & 1.59 & 0.31 & 7.49
& 94 & 6.26 & 2.30 & 8.73
& 0.000 & 0.026 & 0.000 \\

\midrule

Scotiabank & Close-to-close price change
& 5,596 & 0.03 & 0.06 & 1.29
& 80 & -0.40 & -0.52 & 2.32
& 0.004 & 0.000 & 0.004 \\

Scotiabank & Absolute close-to-close price change
& 5,596 & 0.83 & 0.56 & 0.99
& 80 & 1.77 & 1.41 & 1.54
& 0.000 & 0.000 & 0.000 \\

Scotiabank & Squared close-to-close price change
& 5,596 & 1.67 & 0.32 & 7.60
& 80 & 5.47 & 2.00 & 9.66
& 0.000 & 0.001 & 0.000 \\

\midrule

TD & Close-to-close price change
& 5,582 & 0.04 & 0.07 & 1.29
& 94 & -0.05 & 0.13 & 2.31
& 0.528 & 0.000 & 1.000 \\

TD & Absolute close-to-close price change
& 5,582 & 0.84 & 0.56 & 0.99
& 94 & 1.79 & 1.47 & 1.45
& 0.000 & 0.000 & 0.000 \\

TD & Squared close-to-close price change
& 5,582 & 1.68 & 0.32 & 7.71
& 94 & 5.29 & 2.17 & 8.73
& 0.000 & 0.073 & 0.000 \\

\midrule

BMO & Close-to-close price change
& 5,586 & 0.02 & 0.07 & 1.35
& 90 & 0.13 & 0.15 & 2.78
& 0.445 & 0.000 & 0.253 \\

BMO & Absolute close-to-close price change
& 5,586 & 0.85 & 0.57 & 1.05
& 90 & 2.18 & 2.15 & 1.71
& 0.000 & 0.000 & 0.000 \\

BMO & Squared close-to-close price change
& 5,586 & 1.82 & 0.32 & 8.91
& 90 & 7.65 & 4.64 & 12.12
& 0.000 & 0.000 & 0.000 \\

\midrule

National Bank & Close-to-close price change
& 5,585 & 0.04 & 0.08 & 1.37
& 91 & 0.26 & 0.04 & 2.79
& 0.127 & 0.000 & 0.980 \\

National Bank & Absolute close-to-close price change
& 5,585 & 0.84 & 0.56 & 1.08
& 91 & 2.21 & 1.89 & 1.71
& 0.000 & 0.000 & 0.000 \\

National Bank & Squared close-to-close price change
& 5,585 & 1.88 & 0.32 & 10.88
& 91 & 7.79 & 3.58 & 12.31
& 0.000 & 0.079 & 0.000 \\

\midrule

CIBC & Close-to-close price change
& 5,582 & 0.03 & 0.07 & 1.36
& 94 & 0.17 & -0.20 & 3.08
& 0.354 & 0.000 & 0.683 \\

CIBC & Absolute close-to-close price change
& 5,582 & 0.86 & 0.57 & 1.06
& 94 & 2.52 & 2.18 & 1.77
& 0.000 & 0.000 & 0.000 \\

CIBC & Squared close-to-close price change
& 5,582 & 1.85 & 0.32 & 8.68
& 94 & 9.42 & 4.74 & 11.97
& 0.000 & 0.000 & 0.000 \\

\midrule

Pooled & Close-to-close price change
& 33,513 & 0.03 & 0.07 & 1.32
& 543 & 0.01 & -0.10 & 2.65
& 0.750 & 0.000 & 0.181 \\

Pooled & Absolute close-to-close price change
& 33,513 & 0.84 & 0.56 & 1.02
& 543 & 2.08 & 1.78 & 1.64
& 0.000 & 0.000 & 0.000 \\

Pooled & Squared close-to-close price change
& 33,513 & 1.75 & 0.32 & 8.63
& 543 & 7.01 & 3.18 & 10.76
& 0.000 & 0.000 & 0.000 \\

\bottomrule
\end{tabular}
}
\floatfoot{\textit{Notes:} This table reports bank-level summary statistics for stock-price movements on Canadian bank earnings-announcement and non-announcement days. For this descriptive comparison, stock-price changes are measured using the same close-to-close window for both groups, $\Delta p^{CC}_{i,t}=100[\log(P^{close}_{i,t})-\log(P^{close}_{i,t-1})]$. The sample is restricted to observations for which both the stock-price change and the Canadian corporate OAS change are observed. The final three columns report $p$-values from tests comparing announcement and non-announcement days: a two-sample $t$-test for equality of means, a variance test, and a Wilcoxon rank-sum test for differences in medians. The high-frequency bank equity surprises used in the empirical analysis are not constructed from close-to-close returns; they are constructed using timing-adjusted event-window stock-price reactions around earnings announcements.}
\end{threeparttable}
\end{table}
\end{landscape}
%%%%%%%%%%%%%%%%%%%%%%%%%%%%%%%%%%%%%%%%%%%%%%%%%%%%%%%%%%%%%%%%%%%%%%

\bigskip

\bigskip

%%%%%%%%%%%%%%%%%%%%%%%%%%%%%%%%%%%%%%%%%%%%%%%%%%%%%%%%%%%%%%%%%%%%%%
\begin{figure}[p]
\centering

\begin{subfigure}{0.48\textwidth}
\centering
\includegraphics[width=\textwidth]{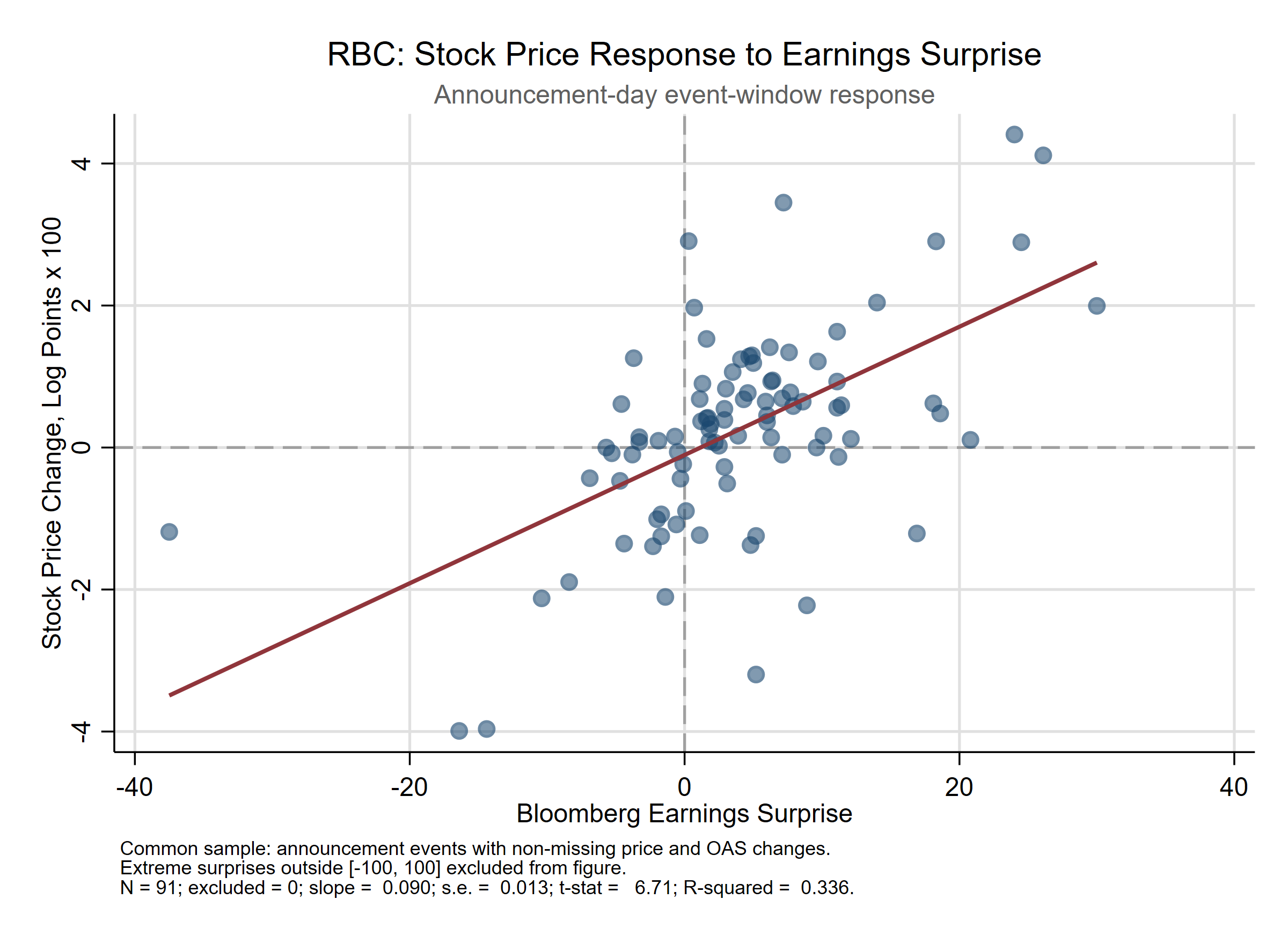}
\caption{RBC}
\end{subfigure}
\hfill
\begin{subfigure}{0.48\textwidth}
\centering
\includegraphics[width=\textwidth]{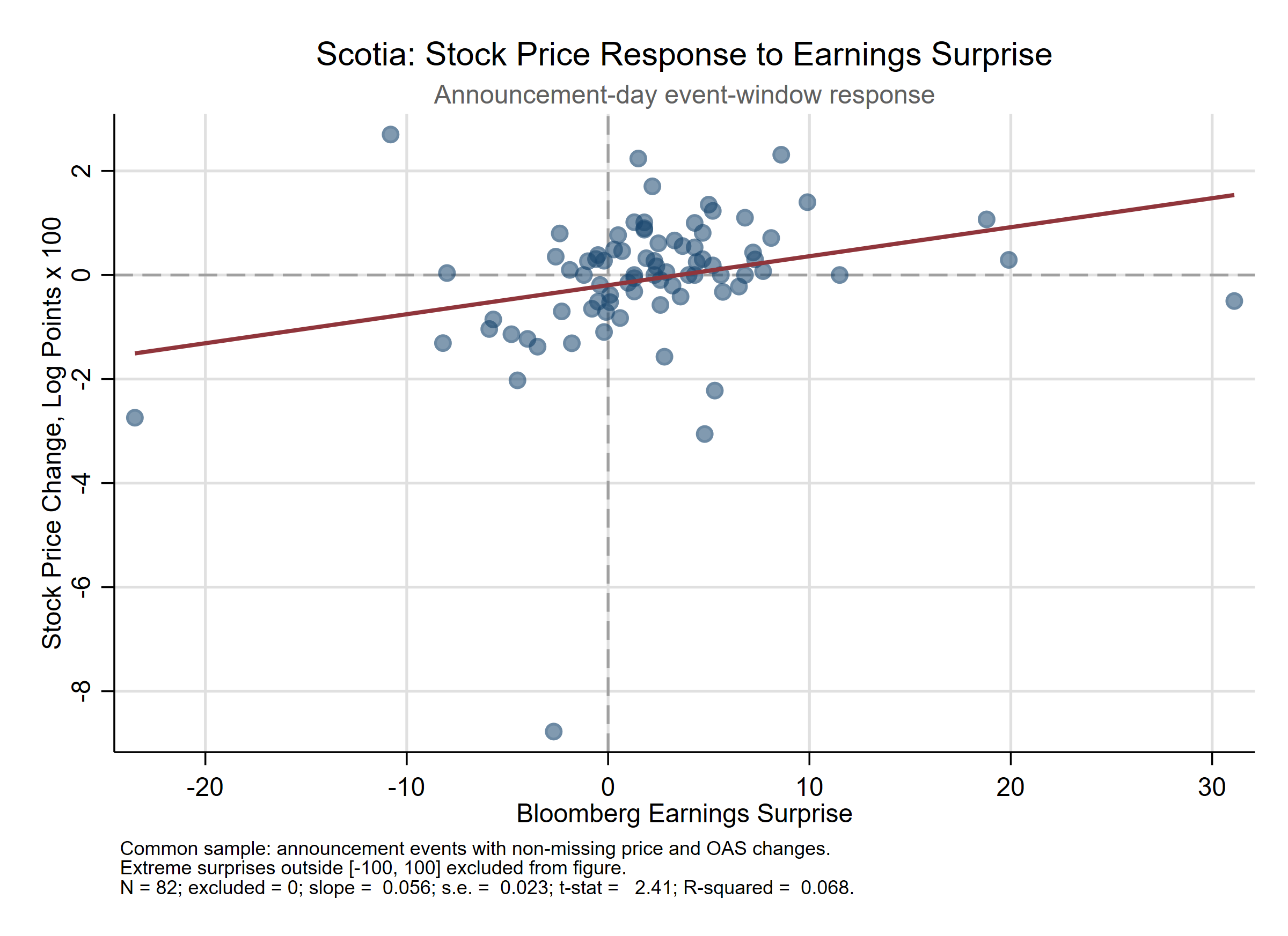}
\caption{Scotiabank}
\end{subfigure}

\vspace{0.35cm}

\begin{subfigure}{0.48\textwidth}
\centering
\includegraphics[width=\textwidth]{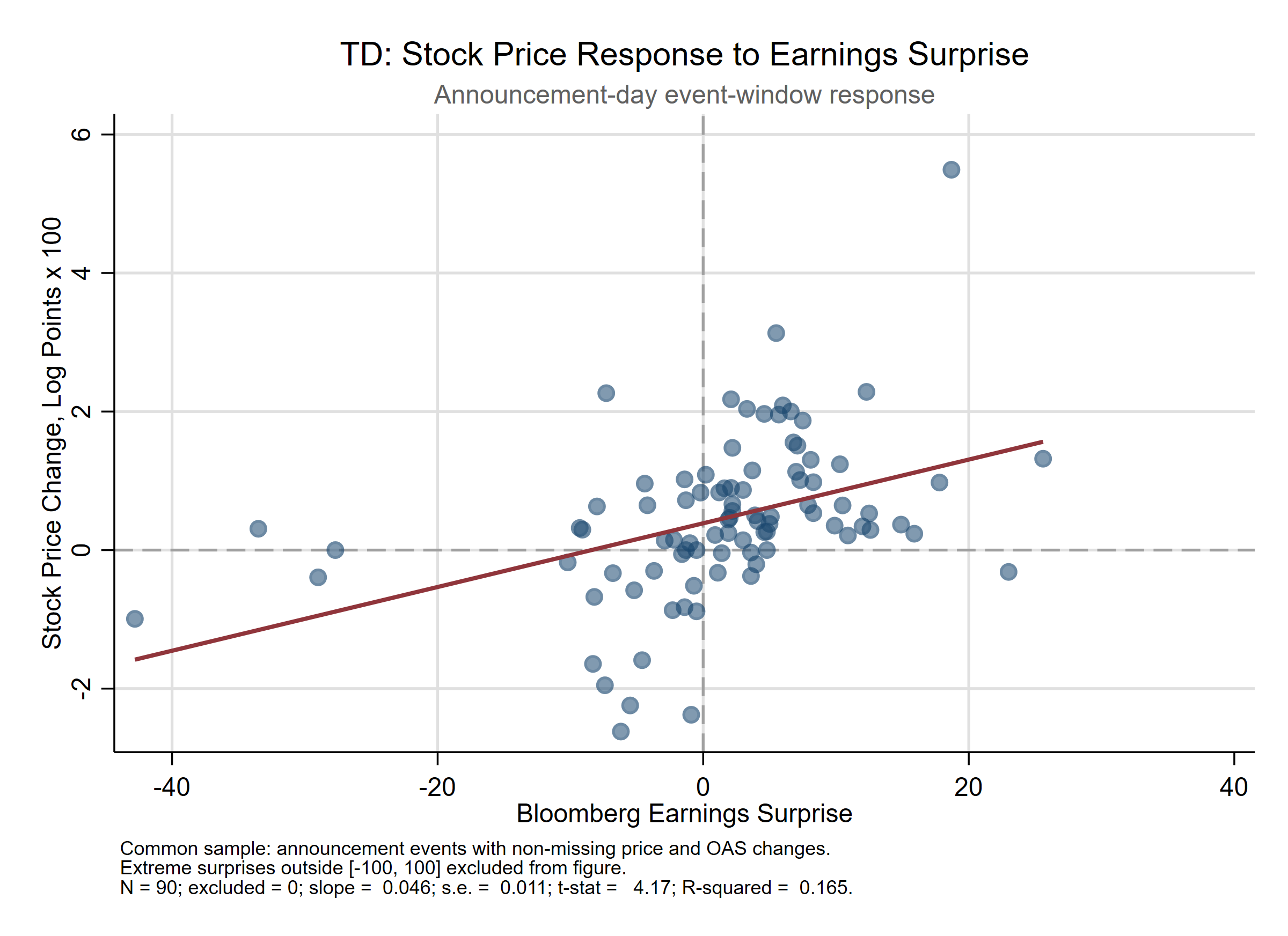}
\caption{TD}
\end{subfigure}
\hfill
\begin{subfigure}{0.48\textwidth}
\centering
\includegraphics[width=\textwidth]{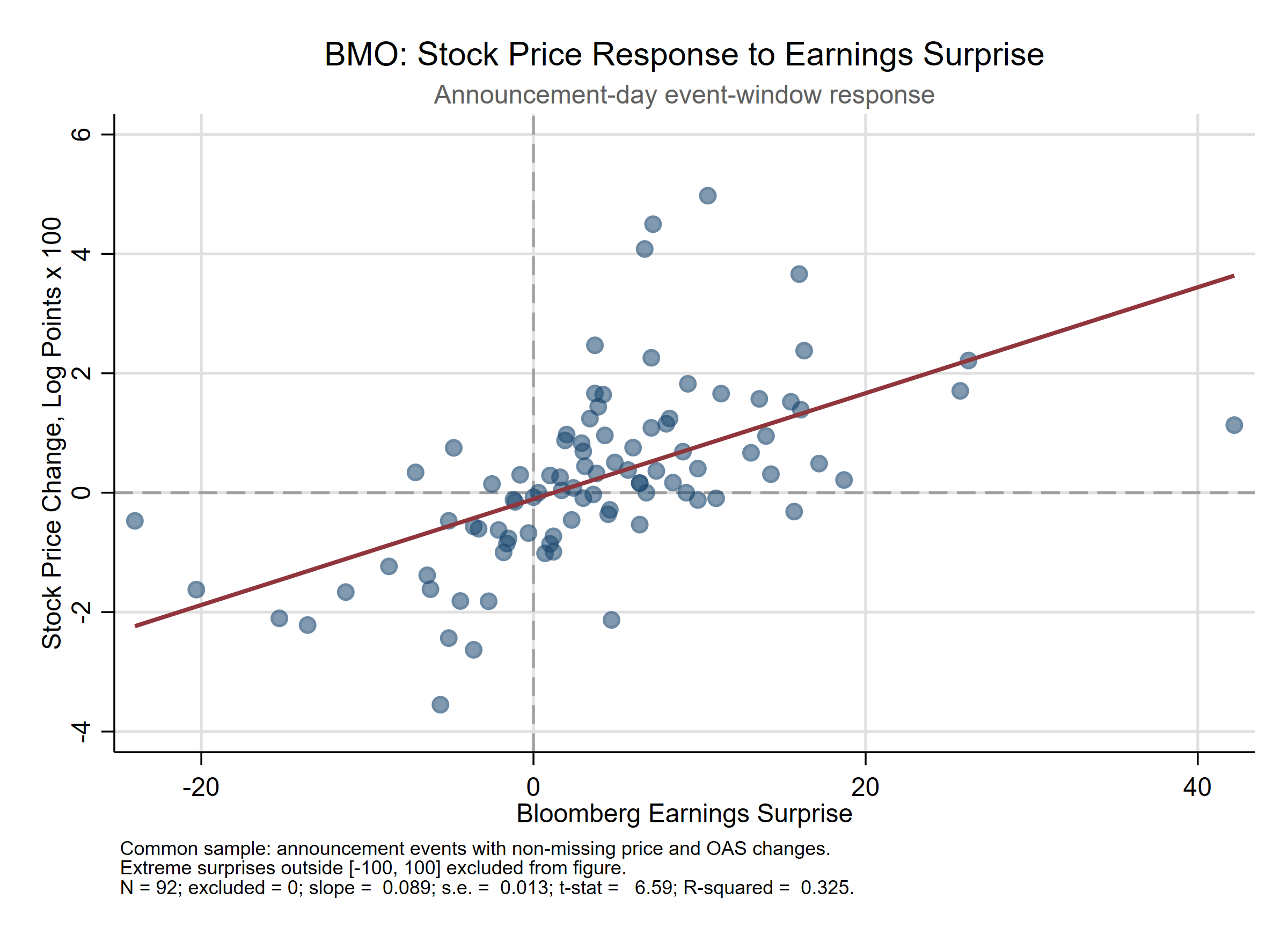}
\caption{BMO}
\end{subfigure}

\vspace{0.35cm}

\begin{subfigure}{0.48\textwidth}
\centering
\includegraphics[width=\textwidth]{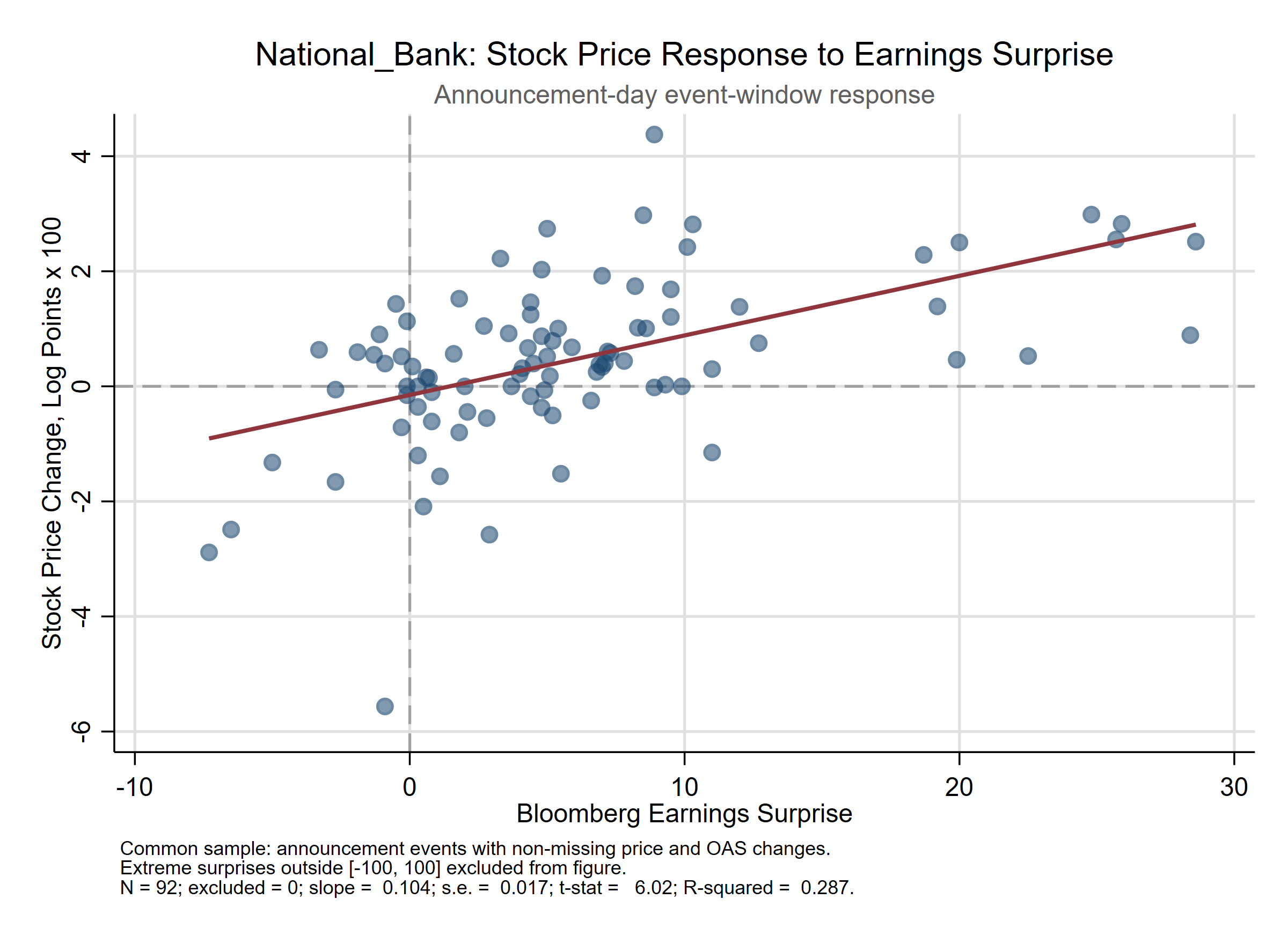}
\caption{National Bank}
\end{subfigure}
\hfill
\begin{subfigure}{0.48\textwidth}
\centering
\includegraphics[width=\textwidth]{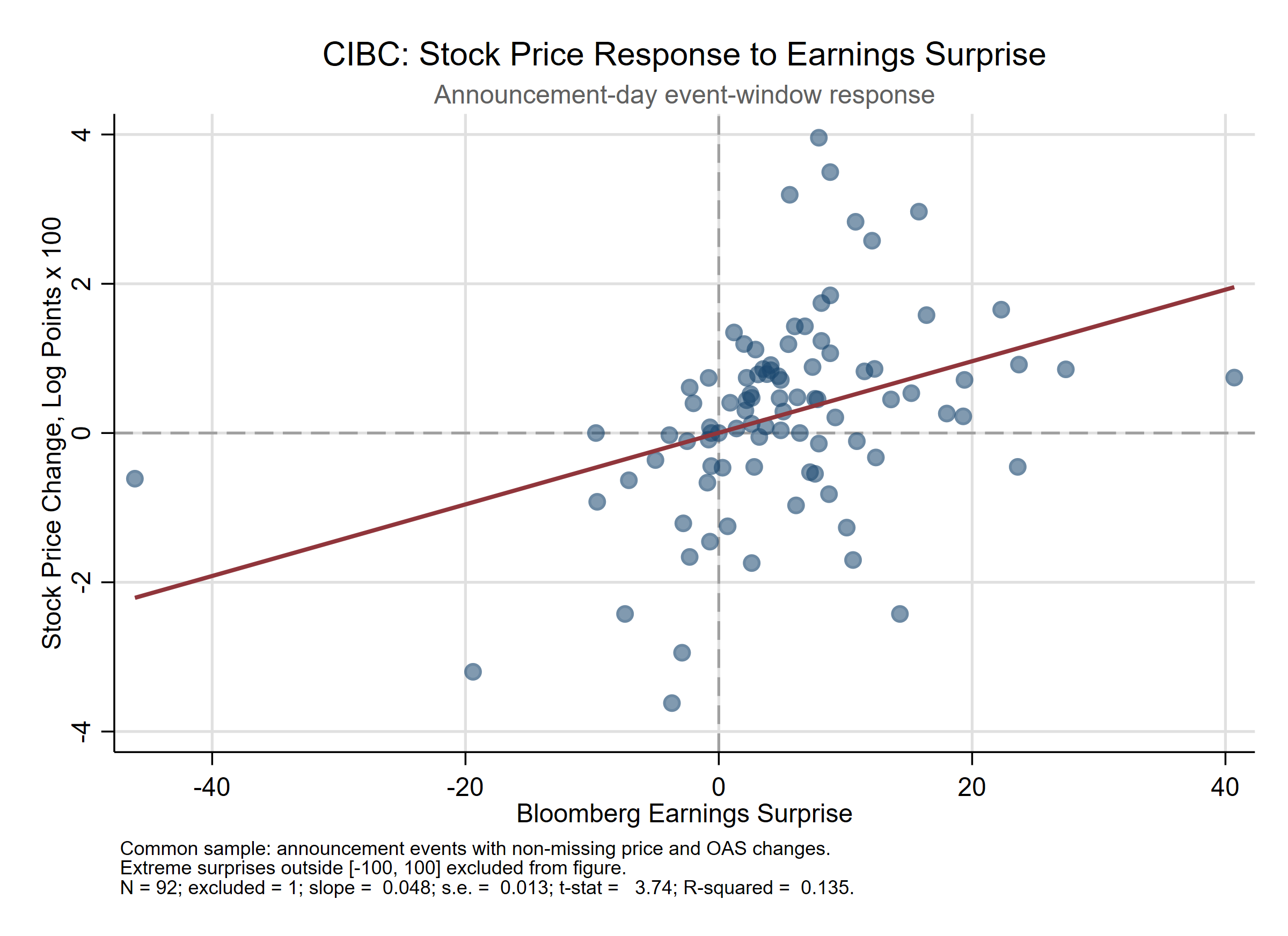}
\caption{CIBC}
\end{subfigure}

\caption{Bank-Level Stock-Price Responses to Bloomberg Earnings Surprises}
\label{fig:scatter_price_surprise_banks}
\floatfoot{\textbf{Notes:} This figure plots announcement-window stock-price changes against Bloomberg earnings surprises separately for each Canadian bank. Stock-price changes are event-window log price changes multiplied by 100. Bloomberg earnings surprises are constructed as the difference between realized earnings and the pre-announcement analyst consensus estimate. The sample is restricted to announcement events with non-missing stock-price and OAS changes. Extreme surprises outside $[-100,100]$ are excluded from the figure. Each panel reports the fitted OLS regression line and associated regression statistics.}
\end{figure}
%%%%%%%%%%%%%%%%%%%%%%%%%%%%%%%%%%%%%%%%%%%%%%%%%%%%%%%%%%%%%%%%%%%%%%

%%%%%%%%%%%%%%%%%%%%%%%%%%%%%%%%%%%%%%%%%%%%%%%%%%%%%%%%%%%%%%%%%%%%%%
\subsection{Serial Correlation of Bank Equity Surprises}
\label{appendix:autocorrelation_surprises}

This appendix examines whether the timing-adjusted announcement-window stock-price reactions display systematic serial correlation. This exercise is useful because the empirical strategy interprets these stock-price reactions as high-frequency surprises around bank earnings announcements. If these surprises were strongly predictable from previous earnings-announcement reactions, this would weaken their interpretation as innovations. We therefore test for serial correlation using event-time lags within each bank. Specifically, for each bank, we order earnings announcements chronologically and test whether the stock-price reaction around a given announcement predicts reactions around subsequent announcements. The sample is restricted to the common set of events with non-missing stock-price and OAS changes.

Table \ref{tab:appendix_autocorr_tests} reports three sets of tests. First, we estimate an AR(1) specification for each bank and test whether the first event-time lag predicts the current announcement-window stock-price reaction. Second, we jointly test whether lags one through four are statistically significant. Third, we report Ljung--Box tests for serial correlation up to four event-time lags. We also report pooled specifications with bank fixed effects, as well as a pooled-sequence robustness check that treats the full sample as a single ordered sequence of events.

%%%%%%%%%%%%%%%%%%%%%%%%%%%%%%%%%%%%%%%%%%%%%%%%%%%%%%%%%%%%%%%%%%%%%%
\begin{table}[H]
\centering
\caption{Autocorrelation Tests for Announcement-Window Stock-Price Reactions}
\label{tab:appendix_autocorr_tests}
\begin{threeparttable}
\scriptsize
\setlength{\tabcolsep}{5pt}
\resizebox{\textwidth}{!}{
\begin{tabular}{lllrcc}
\toprule
Sample & Bank & Test & $N$ & Statistic & $p$-value \\
\midrule

Bank-level & BMO & AR(1) coefficient & 91 & 1.986 & 0.050 \\
Bank-level & BMO & Joint AR(1)--AR(4) & 88 & 1.772 & 0.142 \\
Bank-level & BMO & Ljung--Box Q(4) &  & 7.955 & 0.093 \\

\midrule

Bank-level & CIBC & AR(1) coefficient & 92 & -0.708 & 0.481 \\
Bank-level & CIBC & Joint AR(1)--AR(4) & 89 & 1.054 & 0.385 \\
Bank-level & CIBC & Ljung--Box Q(4) &  & 4.363 & 0.359 \\

\midrule

Bank-level & National Bank & AR(1) coefficient & 91 & -0.913 & 0.364 \\
Bank-level & National Bank & Joint AR(1)--AR(4) & 88 & 1.299 & 0.277 \\
Bank-level & National Bank & Ljung--Box Q(4) &  & 4.788 & 0.310 \\

\midrule

Bank-level & RBC & AR(1) coefficient & 90 & 0.987 & 0.326 \\
Bank-level & RBC & Joint AR(1)--AR(4) & 87 & 0.407 & 0.803 \\
Bank-level & RBC & Ljung--Box Q(4) &  & 2.182 & 0.702 \\

\midrule

Bank-level & Scotiabank & AR(1) coefficient & 81 & -0.682 & 0.497 \\
Bank-level & Scotiabank & Joint AR(1)--AR(4) & 78 & 1.558 & 0.195 \\
Bank-level & Scotiabank & Ljung--Box Q(4) &  & 7.861 & 0.097 \\

\midrule

Bank-level & TD & AR(1) coefficient & 89 & 1.218 & 0.226 \\
Bank-level & TD & Joint AR(1)--AR(4) & 86 & 1.924 & 0.114 \\
Bank-level & TD & Ljung--Box Q(4) &  & 6.526 & 0.163 \\

\midrule

Pooled & All banks & AR(1), bank FE & 534 & 0.604 & 0.572 \\
Pooled & All banks & Joint AR(1)--AR(4), bank FE & 516 & 4.026 & 0.079 \\
Pooled sequence & All banks & Joint AR(1)--AR(4) & 536 & 2.964 & 0.019 \\

\bottomrule
\end{tabular}
}
\floatfoot{\textit{Notes:} This table reports serial-correlation tests for timing-adjusted announcement-window stock-price reactions. Event-time lags are defined within each bank by ordering earnings announcements chronologically. The AR(1) test reports the $t$-statistic and $p$-value for the first event-time lag. The joint AR(1)--AR(4) test reports the $F$-statistic and $p$-value from a joint test that the first four event-time lags are equal to zero. The Ljung--Box test reports the Q-statistic and $p$-value for serial correlation up to four event-time lags. The pooled specifications include bank fixed effects and cluster standard errors by bank. The pooled-sequence test treats the full sample as a single sequence ordered by date and bank and is included only as a descriptive robustness check.}
\end{threeparttable}
\end{table}
%%%%%%%%%%%%%%%%%%%%%%%%%%%%%%%%%%%%%%%%%%%%%%%%%%%%%%%%%%%%%%%%%%%%%%

The results provide little evidence of systematic autocorrelation in announcement-window stock-price reactions. At the bank level, the AR(1) coefficient is statistically insignificant for all banks except BMO, where the evidence is only marginal. The joint AR(1)--AR(4) tests fail to reject the null of no serial correlation for every individual bank, and the Ljung--Box tests also generally fail to reject. The pooled specification with bank fixed effects similarly shows no evidence of a significant AR(1) coefficient. The joint pooled AR(1)--AR(4) test is only marginally significant, while the pooled-sequence test rejects more strongly; however, the latter treats events from different banks as a single sequence and is therefore less directly aligned with the bank-level event-time structure used in the construction of the shocks. Overall, the evidence supports the interpretation of the announcement-window stock-price reactions as high-frequency surprises rather than predictable components of bank equity returns.

%%%%%%%%%%%%%%%%%%%%%%%%%%%%%%%%%%%%%%%%%%%%%%%%%%%%%%%%%%%%%%%%%%%%%%
\begin{figure}[H]
\centering
\includegraphics[width=0.80\textwidth]{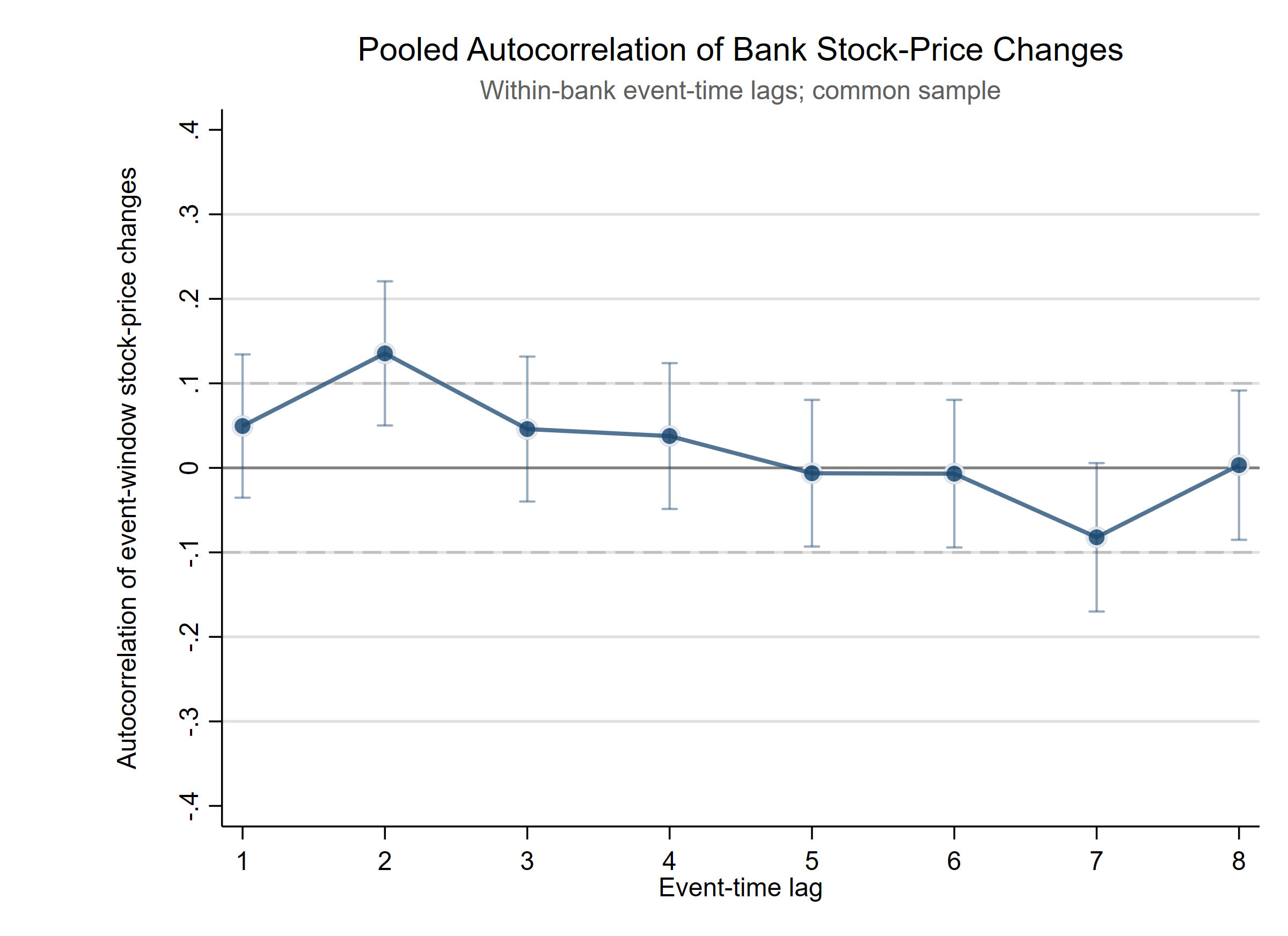}
\caption{Pooled Autocorrelation of Announcement-Window Stock-Price Reactions}
\label{fig:appendix_autocorr_pooled}
\floatfoot{\textbf{Notes:} This figure reports autocorrelations of timing-adjusted announcement-window stock-price reactions using within-bank event-time lags. The sample pools all banks and is restricted to earnings-announcement events with non-missing stock-price and OAS changes. Vertical bars report approximate 95 percent confidence intervals.}
\end{figure}

%%%%%%%%%%%%%%%%%%%%%%%%%%%%%%%%%%%%%%%%%%%%%%%%%%%%%%%%%%%%%%%%%%%%%%
%%%%%%%%%%%%%%%%%%%%%%%%%%%%%%%%%%%%%%%%%%%%%%%%%%%%%%%%%%%%%%%%%%%%%%
\begin{figure}[p]
\centering

\begin{subfigure}{0.5\textwidth}
\centering
\includegraphics[width=\textwidth]{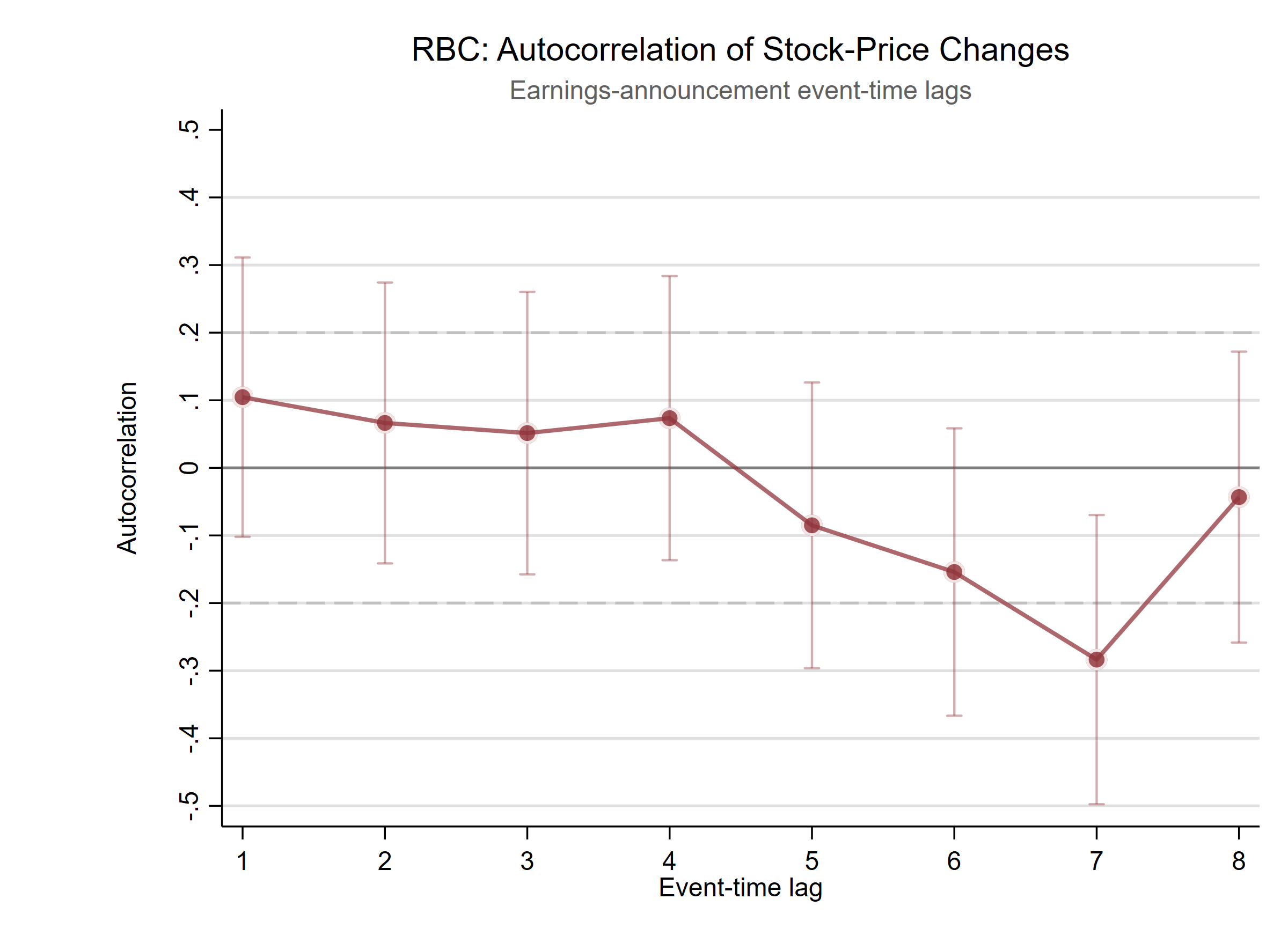}
\caption{RBC}
\end{subfigure}

\vspace{0.25cm}

\begin{subfigure}{0.5\textwidth}
\centering
\includegraphics[width=\textwidth]{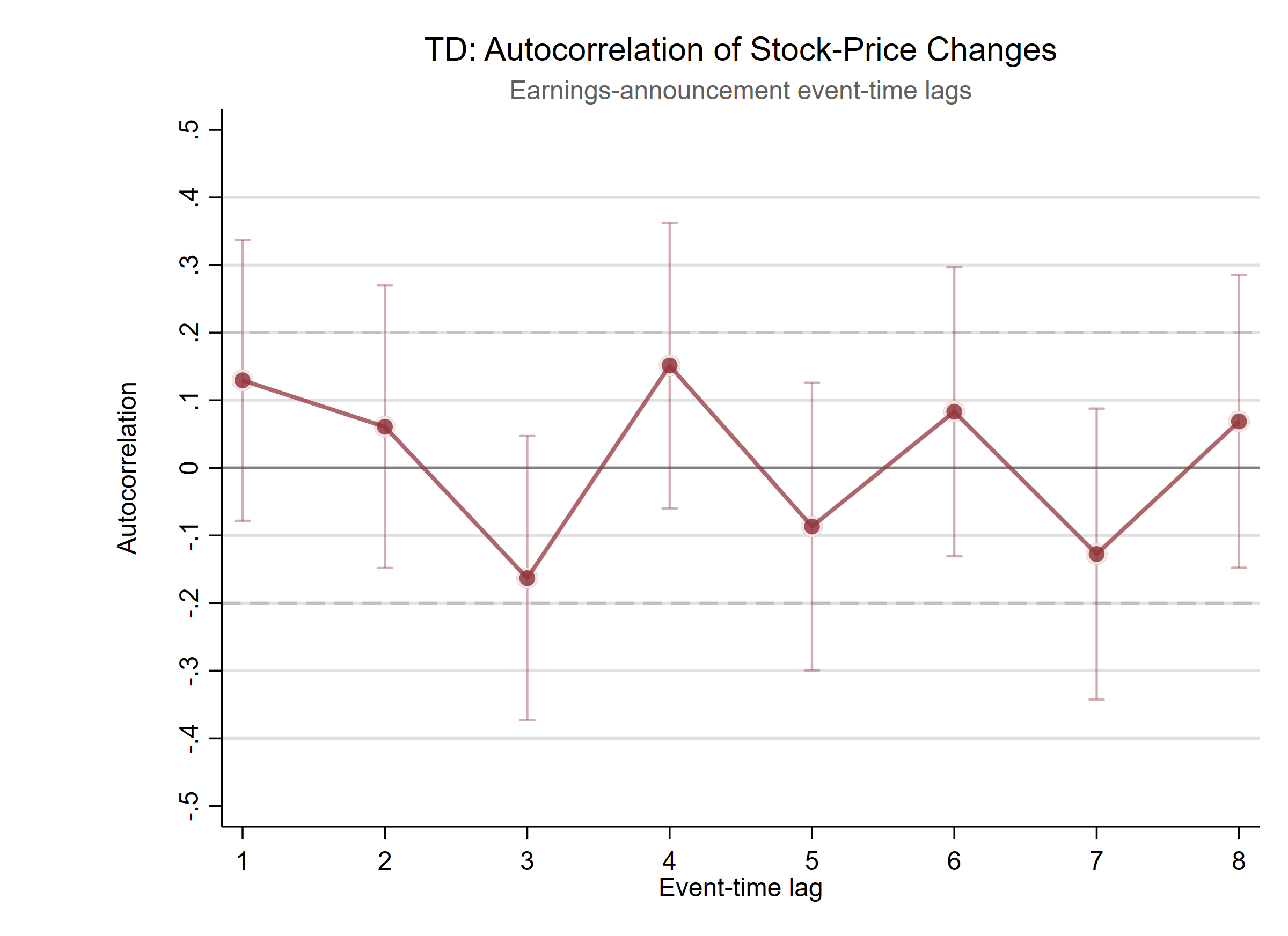}
\caption{TD}
\end{subfigure}

\vspace{0.25cm}

\begin{subfigure}{0.5\textwidth}
\centering
\includegraphics[width=\textwidth]{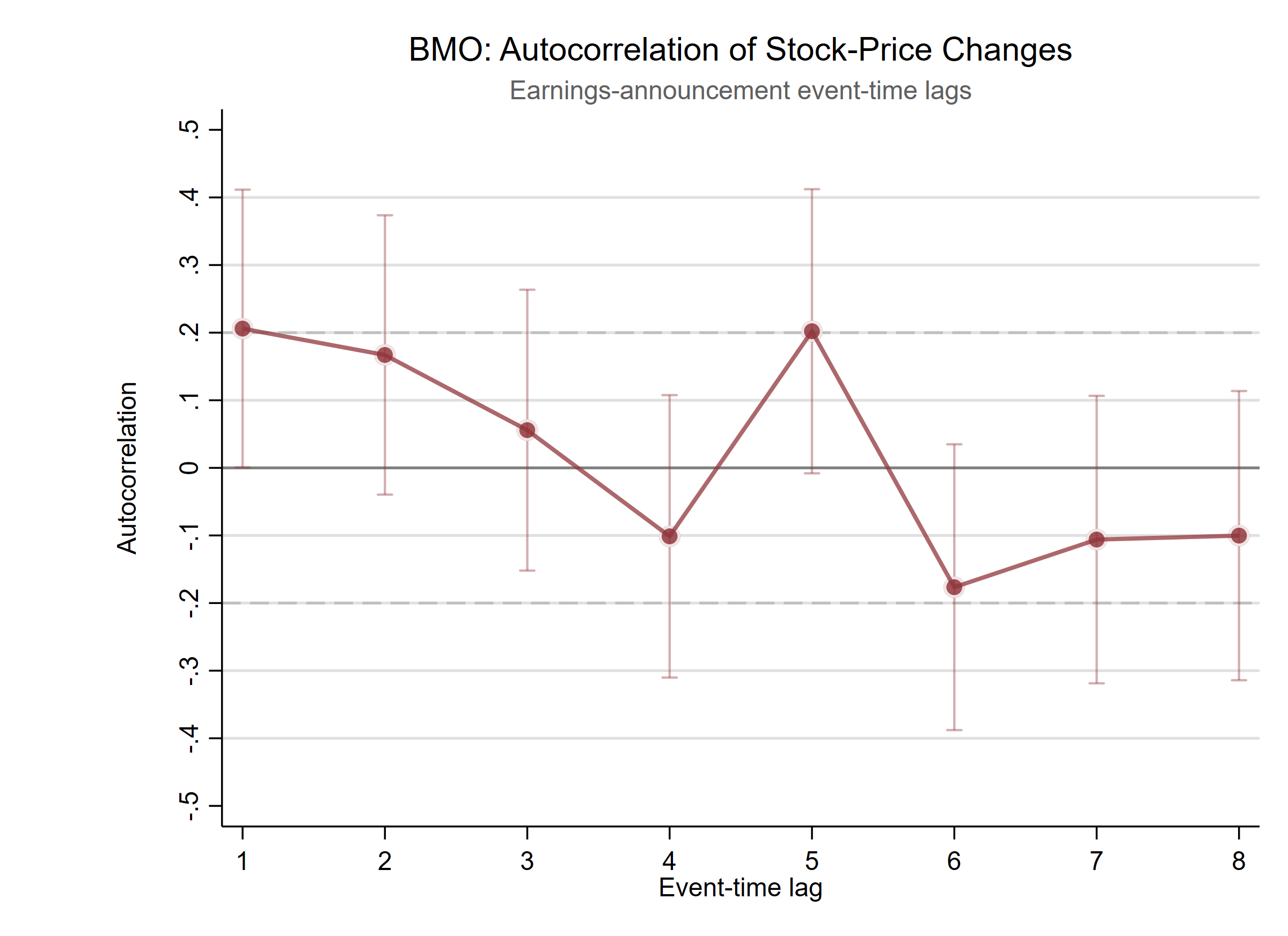}
\caption{BMO}
\end{subfigure}

\caption{Bank-Level Autocorrelation of Announcement-Window Stock-Price Reactions: Large Banks}
\label{fig:appendix_autocorr_bank_panel_1}
\floatfoot{\textbf{Notes:} This figure reports bank-level autocorrelations of timing-adjusted announcement-window stock-price reactions. Event-time lags are defined separately within each bank by ordering earnings announcements chronologically. The sample is restricted to earnings-announcement events with non-missing stock-price and OAS changes. Vertical bars report approximate 95 percent confidence intervals.}
\end{figure}
%%%%%%%%%%%%%%%%%%%%%%%%%%%%%%%%%%%%%%%%%%%%%%%%%%%%%%%%%%%%%%%%%%%%%%

%%%%%%%%%%%%%%%%%%%%%%%%%%%%%%%%%%%%%%%%%%%%%%%%%%%%%%%%%%%%%%%%%%%%%%
\begin{figure}[p]
\centering

\begin{subfigure}{0.5\textwidth}
\centering
\includegraphics[width=\textwidth]{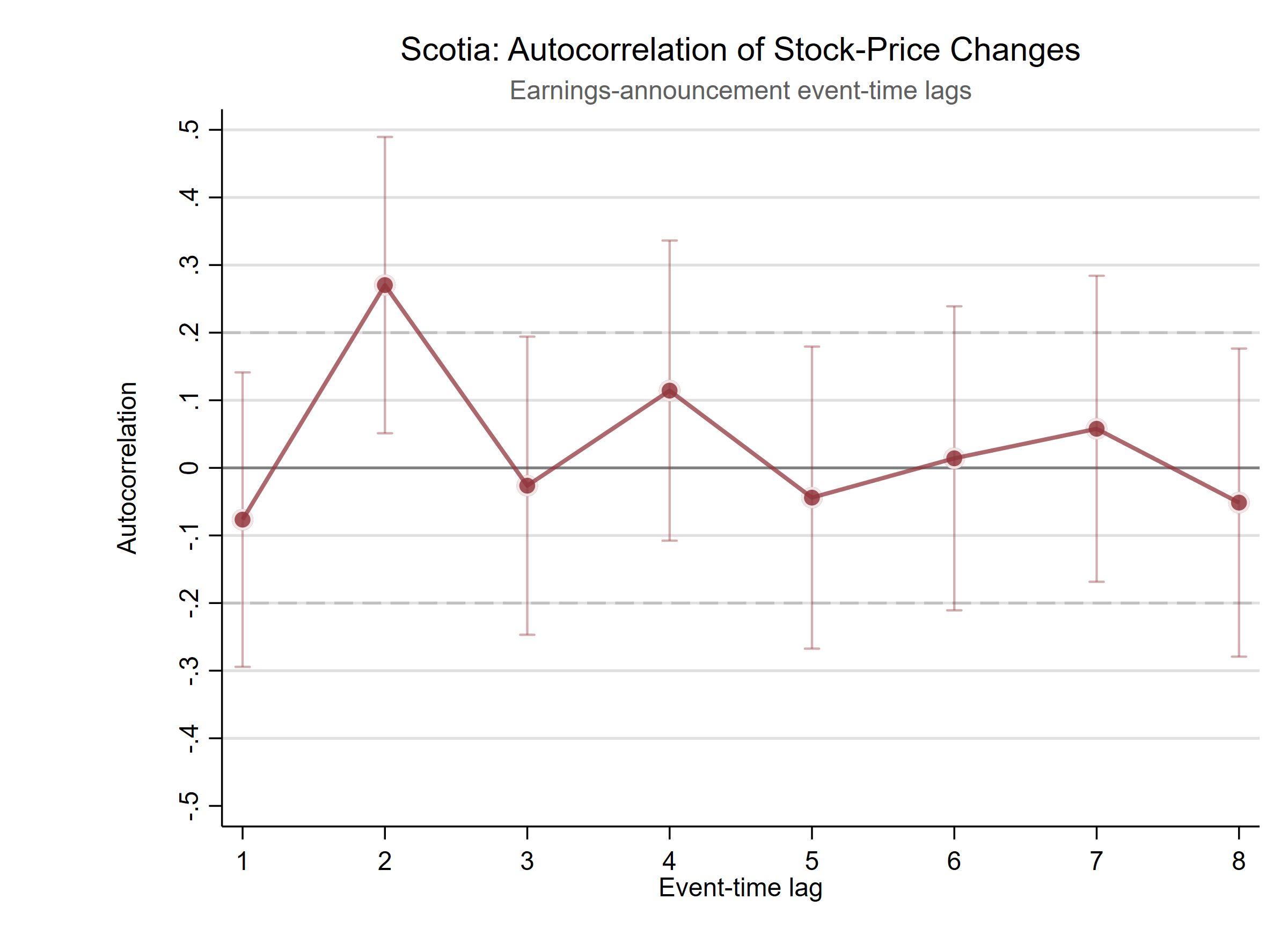}
\caption{Scotiabank}
\end{subfigure}

\vspace{0.25cm}

\begin{subfigure}{0.5\textwidth}
\centering
\includegraphics[width=\textwidth]{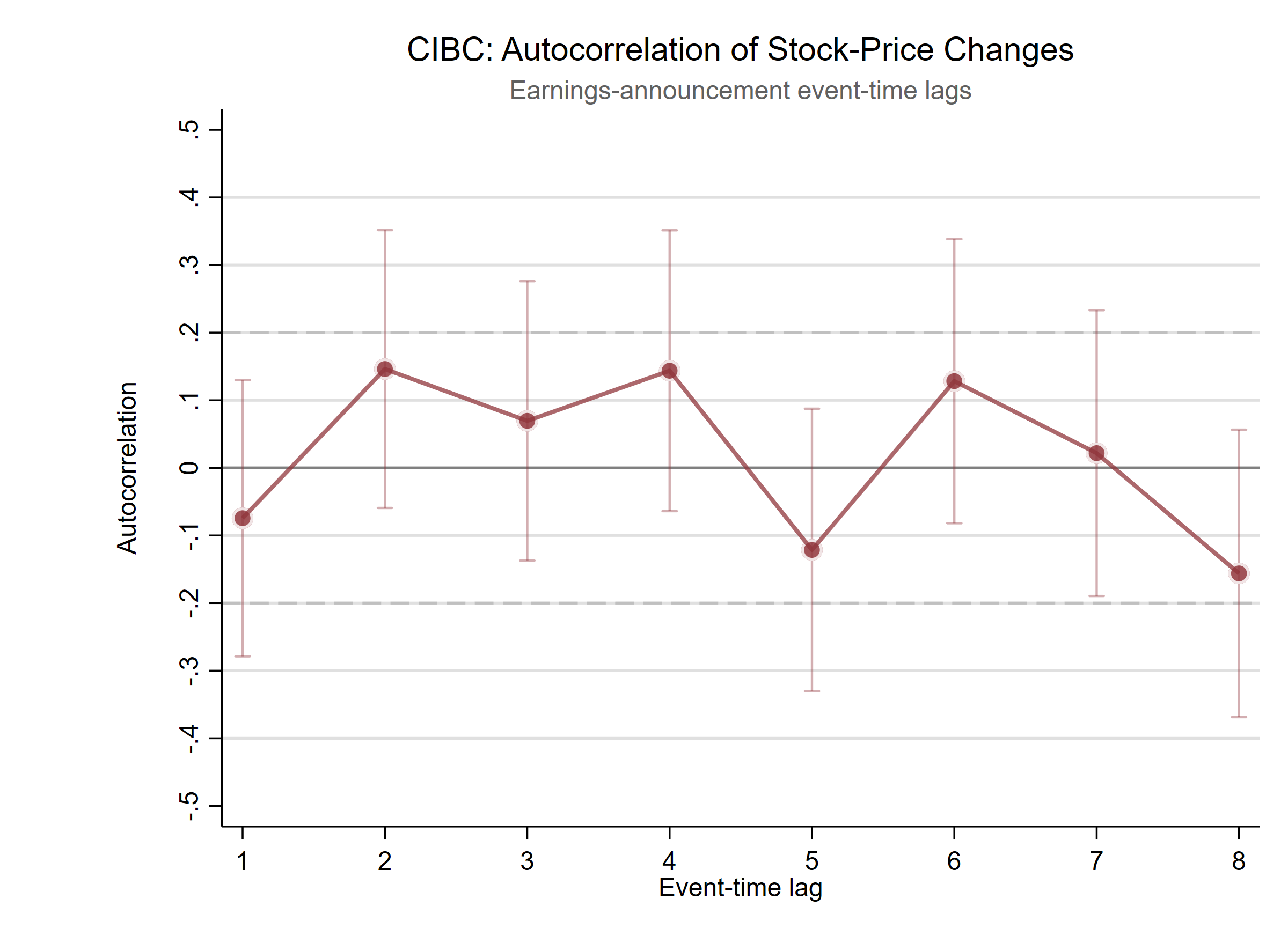}
\caption{CIBC}
\end{subfigure}

\vspace{0.25cm}

\begin{subfigure}{0.5\textwidth}
\centering
\includegraphics[width=\textwidth]{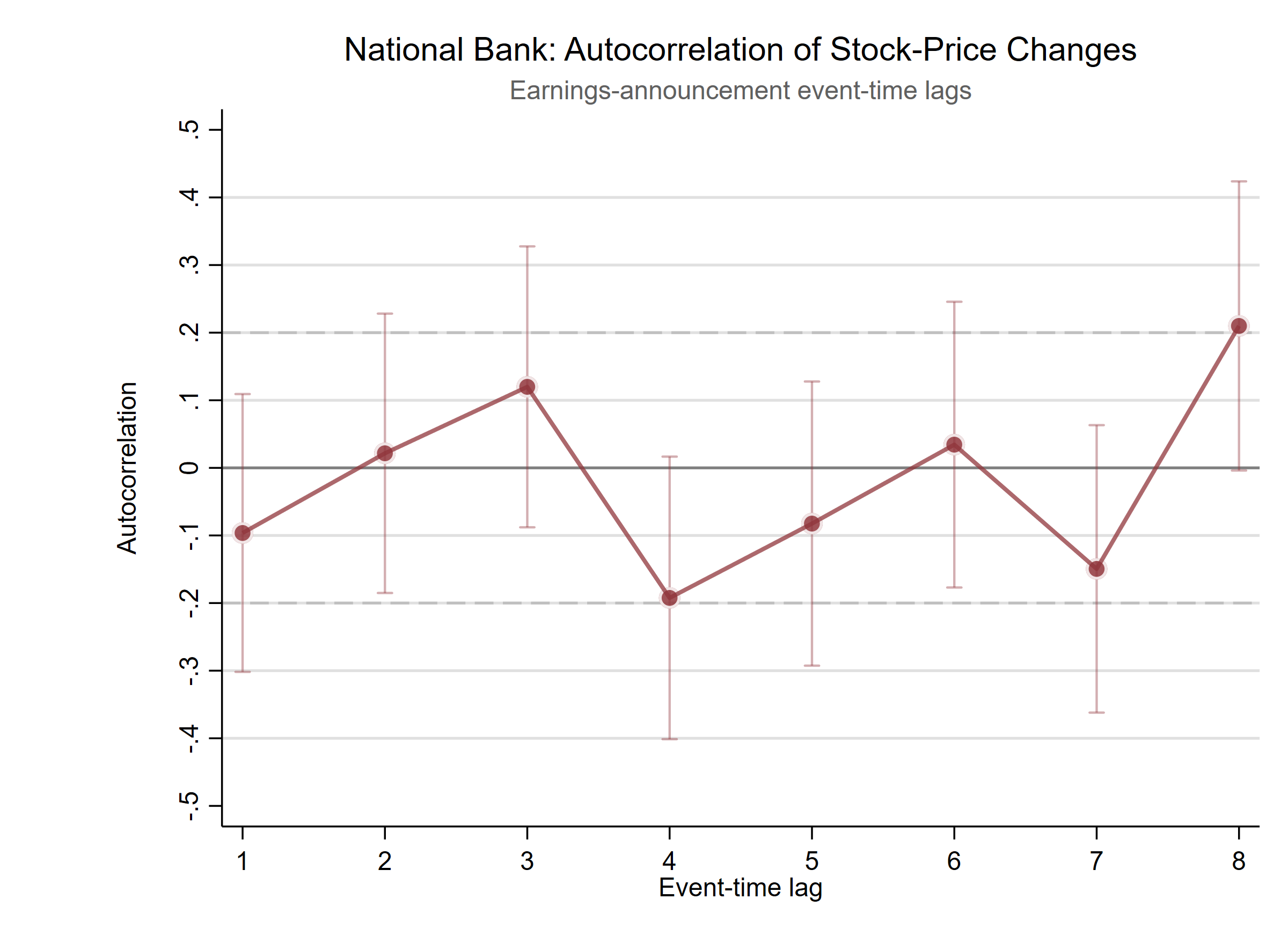}
\caption{National Bank}
\end{subfigure}

\caption{Bank-Level Autocorrelation of Announcement-Window Stock-Price Reactions: Additional Banks}
\label{fig:appendix_autocorr_bank_panel_2}
\floatfoot{\textbf{Notes:} This figure reports bank-level autocorrelations of timing-adjusted announcement-window stock-price reactions. Event-time lags are defined separately within each bank by ordering earnings announcements chronologically. The sample is restricted to earnings-announcement events with non-missing stock-price and OAS changes. Vertical bars report approximate 95 percent confidence intervals.}
\end{figure}
%%%%%%%%%%%%%%%%%%%%%%%%%%%%%%%%%%%%%%%%%%%%%%%%%%%%%%%%%%%%%%%%%%%%%%
%%%%%%%%%%%%%%%%%%%%%%%%%%%%%%%%%%%%%%%%%%%%%%%%%%%%%%%%%%%%%%%%%%%%%%

Figure \ref{fig:appendix_autocorr_pooled} presents the pooled autocorrelation function. The estimated autocorrelations are small and fluctuate around zero across event-time lags. Most estimates lie close to zero and within conventional confidence bands, reinforcing the evidence from Table \ref{tab:appendix_autocorr_tests} that the stock-price reactions do not display strong persistence.

%%%%%%%%%%%%%%%%%%%%%%%%%%%%%%%%%%%%%%%%%%%%%%%%%%%%%%%%%%%%%%%%%%%%%%

Figures \ref{fig:appendix_autocorr_bank_panel_1} and \ref{fig:appendix_autocorr_bank_panel_2} show the corresponding autocorrelation functions separately by bank. The bank-level estimates are generally centered around zero and do not exhibit a consistent pattern of persistence across institutions. Some individual autocorrelations are positive or negative at particular lags, but these movements are not systematic across banks. This pattern is consistent with the interpretation that earnings-announcement stock-price reactions primarily capture event-specific news rather than predictable serial dynamics.

%%%%%%%%%%%%%%%%%%%%%%%%%%%%%%%%%%%%%%%%%%%%%%%%%%%%%%%%%%%%%%%%%%%%%%
%%%%%%%%%%%%%%%%%%%%%%%%%%%%%%%%%%%%%%%%%%%%%%%%%%%%%%%%%%%%%%%%%%%%%%
%%%%%%%%%%%%%%%%%%%%%%%%%%%%%%%%%%%%%%%%%%%%%%%%%%%%%%%%%%%%%%%%%%%%%%
\subsection{Co-Movement of Corporate Spreads \& Bank Equity Surprises}
\label{appendix:comovement_spreads}

This subsection provides additional evidence on the co-movement between Canadian corporate credit spreads and bank equity surprises around earnings announcements. The main text presents the pooled relationship between timing-adjusted announcement-window stock-price reactions and event-window changes in the Canadian corporate OAS. Here, we report the corresponding bank-level scatter plots to assess whether the negative co-movement documented in the pooled sample is driven by a particular institution or is instead a broader feature of the data. These figures provide supporting evidence for the sign-restriction strategy used in the empirical analysis: favorable credit-supply news should raise bank equity prices and lower corporate spreads, while credit-demand news should move bank equity prices and spreads in the same direction.

%%%%%%%%%%%%%%%%%%%%%%%%%%%%%%%%%%%%%%%%%%%%%%%%%%%%%%%%%%%%%%%%%%%%%%
\begin{figure}[p]
\centering

\begin{subfigure}{0.48\textwidth}
\centering
\includegraphics[width=\textwidth]{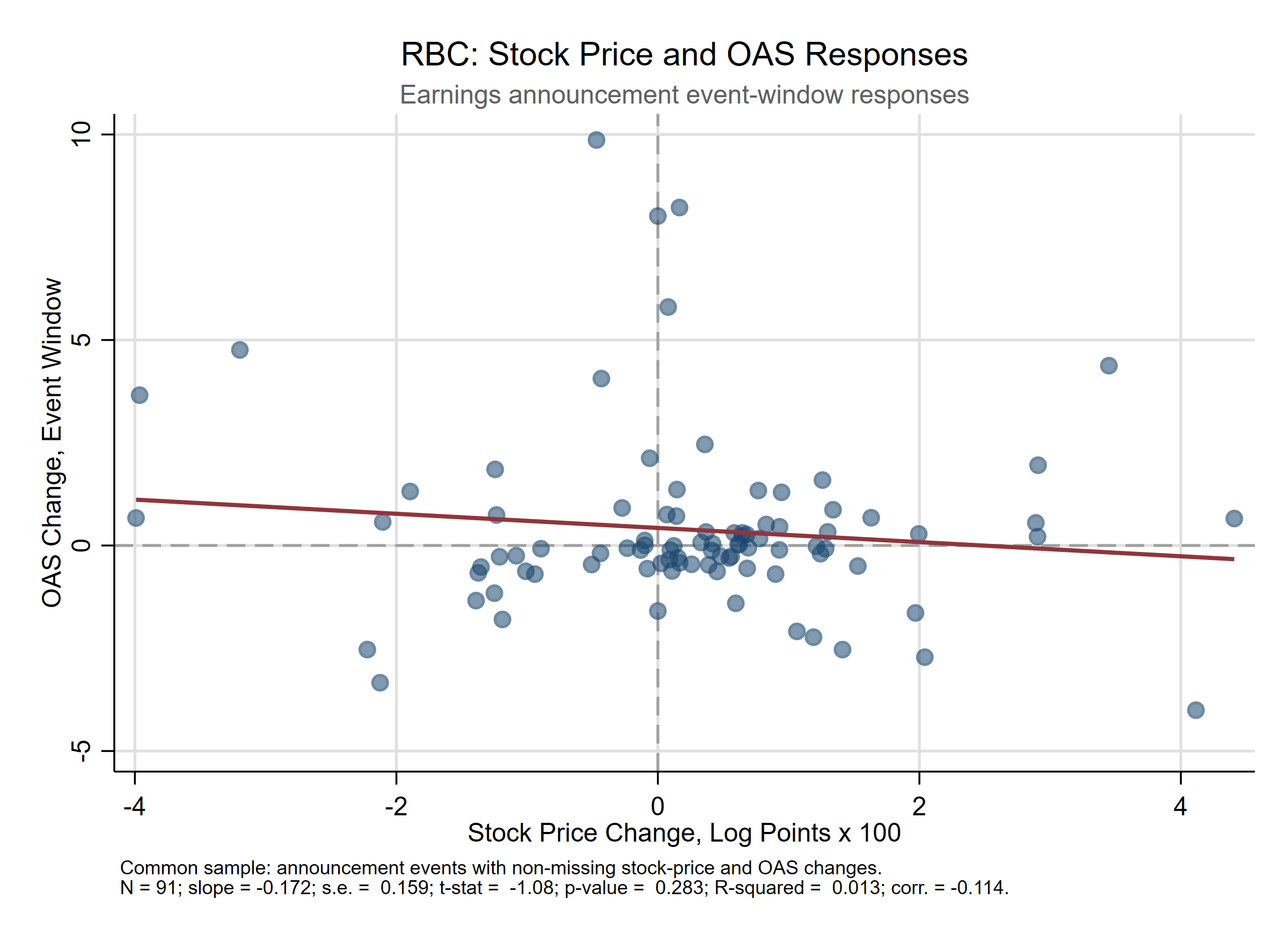}
\caption{RBC}
\end{subfigure}
\hfill
\begin{subfigure}{0.48\textwidth}
\centering
\includegraphics[width=\textwidth]{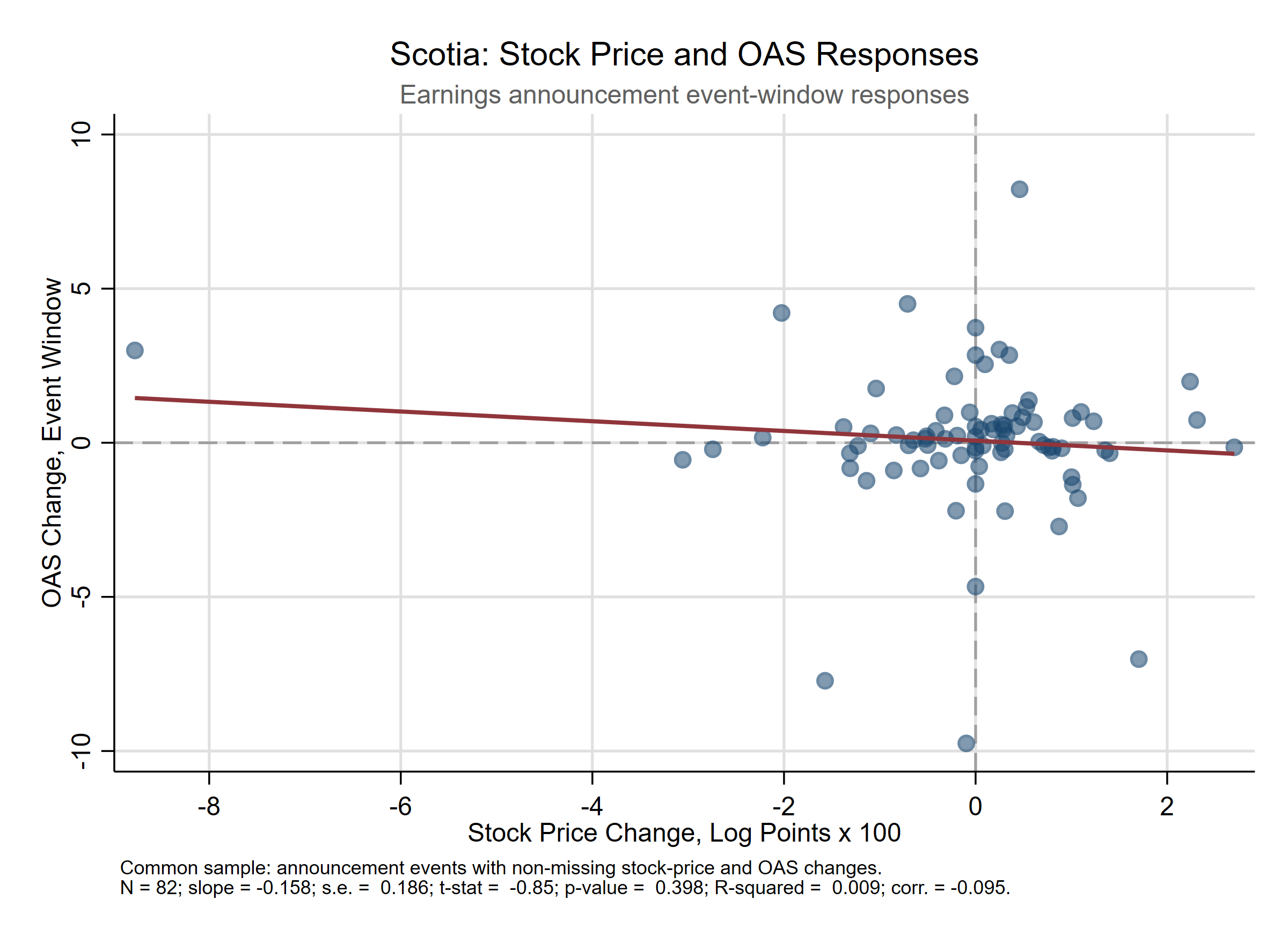}
\caption{Scotiabank}
\end{subfigure}

\vspace{0.35cm}

\begin{subfigure}{0.48\textwidth}
\centering
\includegraphics[width=\textwidth]{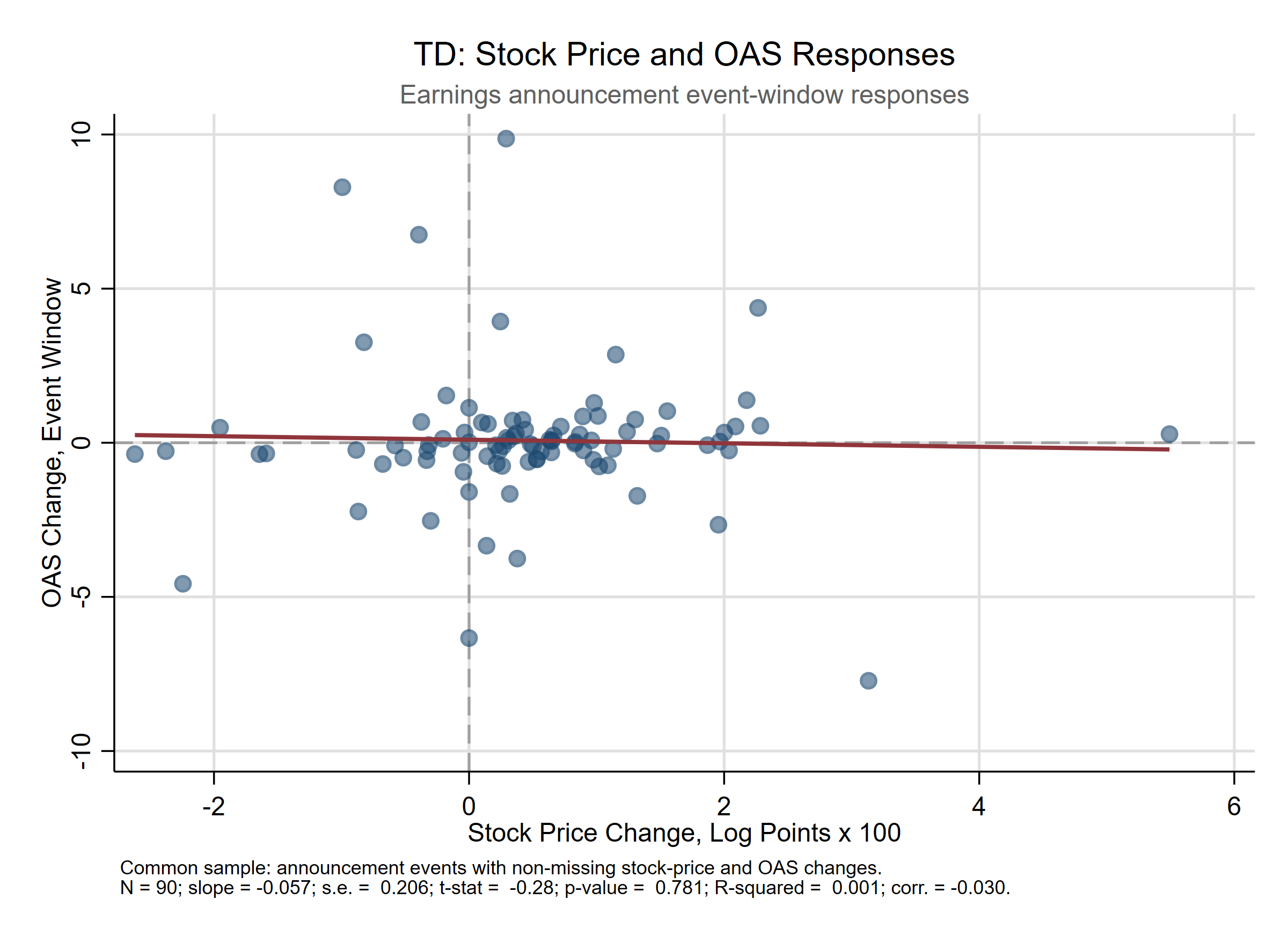}
\caption{TD}
\end{subfigure}
\hfill
\begin{subfigure}{0.48\textwidth}
\centering
\includegraphics[width=\textwidth]{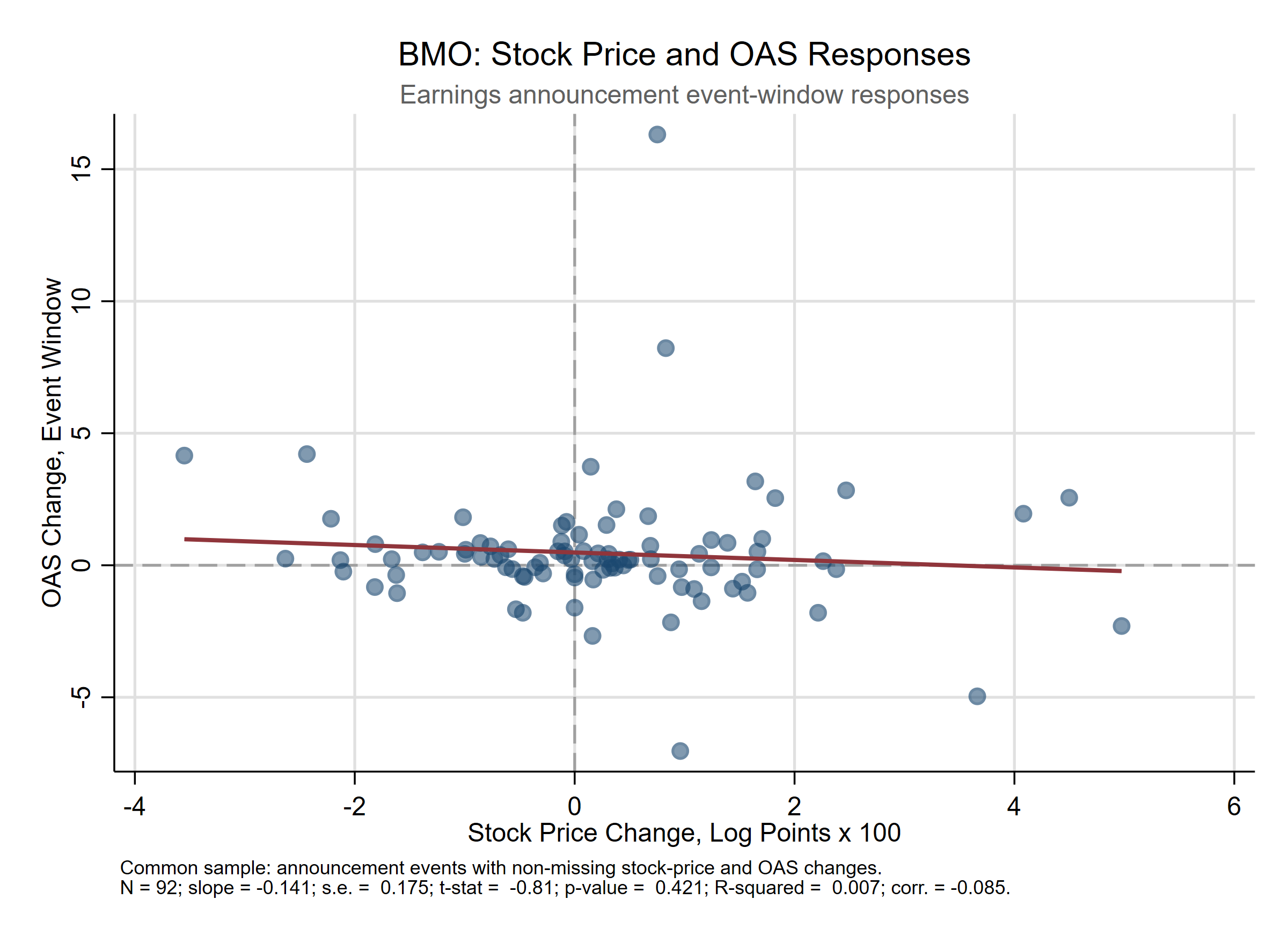}
\caption{BMO}
\end{subfigure}

\vspace{0.35cm}

\begin{subfigure}{0.48\textwidth}
\centering
\includegraphics[width=\textwidth]{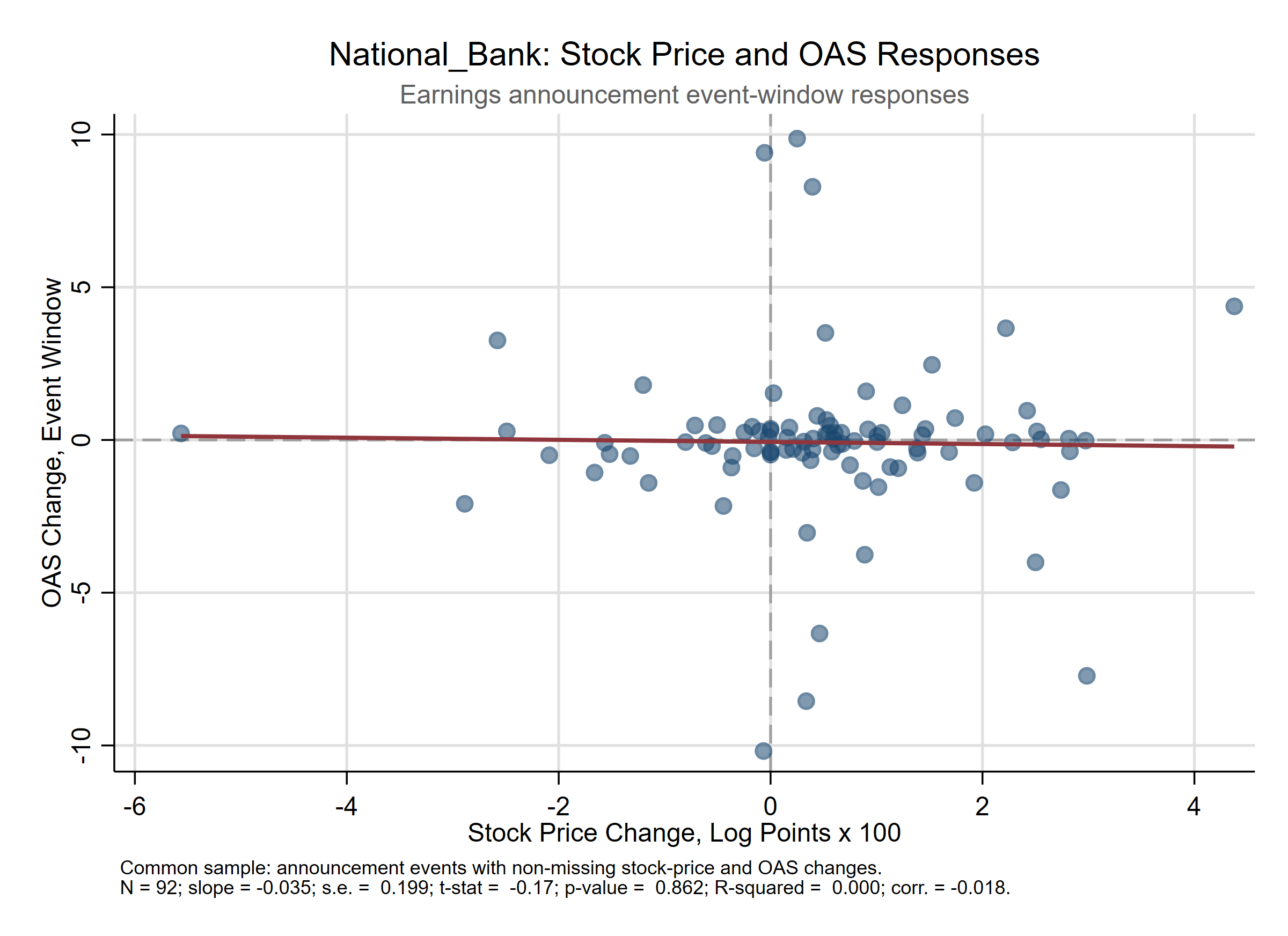}
\caption{National Bank}
\end{subfigure}
\hfill
\begin{subfigure}{0.48\textwidth}
\centering
\includegraphics[width=\textwidth]{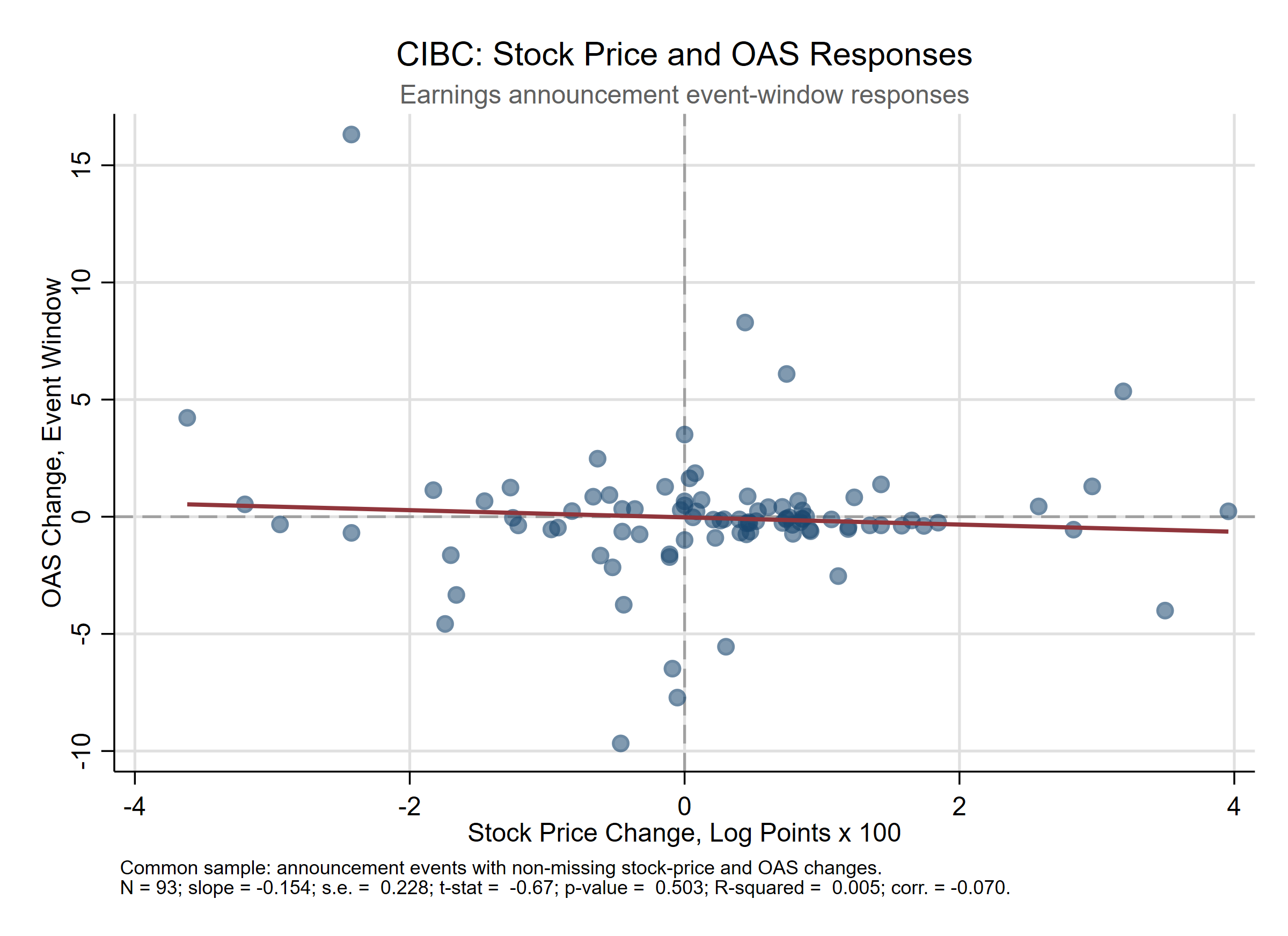}
\caption{CIBC}
\end{subfigure}

\caption{Bank-Level Stock-Price Reactions and Canadian Corporate OAS Changes}
\label{fig:scatter_price_oas_banks}
\floatfoot{\textbf{Notes:} This figure plots event-window changes in Canadian corporate OAS against event-window bank stock-price changes separately for each Canadian bank. Stock-price changes are event-window log price changes multiplied by 100. OAS changes are measured in basis points. The sample is restricted to announcement events with non-missing stock-price and OAS changes. Each panel reports the fitted OLS regression line and associated regression statistics.}
\end{figure}

%%%%%%%%%%%%%%%%%%%%%%%%%%%%%%%%%%%%%%%%%%%%%%%%%%%%%%%%%%%%%%%%%%%%%%
%%%%%%%%%%%%%%%%%%%%%%%%%%%%%%%%%%%%%%%%%%%%%%%%%%%%%%%%%%%%%%%%%%%%%%
\subsection{Admissible Rotations and Sign-Restriction Sensitivity}
\label{appendix:rotation_sensitivity}

This subsection provides additional diagnostics for the sign-restriction
decomposition used to construct the benchmark credit-supply bank net-worth
shock. The main text decomposes the raw market-capitalization-weighted bank
equity surprise, \(v_t\), into a credit-supply component, \(v_t^{CS}\), and a
residual component, \(v_t^C\). The decomposition is applied to the bivariate
event-window vector
\[
z_t
=
\begin{pmatrix}
v_t \\
\Delta OAS_t
\end{pmatrix},
\]
where \(\Delta OAS_t\) is the event-window change in the Canadian corporate
option-adjusted spread. We orthogonalize the innovations in \(z_t\) and
consider the set of two-dimensional rotations that satisfy
\[
\operatorname{cov}\left(v^{CS}_t,\Delta OAS_t\right)<0,
\qquad
\operatorname{cov}\left(v^{C}_t,\Delta OAS_t\right)>0.
\]
The first restriction assigns to the credit-supply component the part of bank
equity news that moves bank valuations and corporate spreads in opposite
directions. The second restriction assigns to the residual component the
variation that moves bank valuations and corporate spreads in the same
direction. Because these restrictions define a set of admissible rotations,
the benchmark shock uses the median admissible rotation.

Table \ref{tab:rotation_diagnostics} reports diagnostics for selected
rotations in the admissible set. The admissible interval is nonempty and
ranges from \(0\) to \(86.8\) degrees. Rotations close to the lower bound are
nearly identical to the raw bank equity surprise: for \(w=0.01\), the
credit-supply component has correlation \(1.00\) with \(v_t\) and correlation
\(-0.08\) with \(\Delta OAS_t\). Rotations close to the upper bound are nearly
identical to the negative OAS innovation: for \(w=0.99\), the credit-supply
component has correlation \(0.08\) with \(v_t\) and correlation \(-1.00\) with
\(\Delta OAS_t\). The median rotation balances the two pieces of
high-frequency information. At \(w=0.50\), the credit-supply component has
correlation \(0.73\) with the raw bank equity surprise, correlation \(-0.73\)
with the OAS change, and correlation \(0.70\) with the credit-supply shock
obtained from the poor man's sign restriction.

%%%%%%%%%%%%%%%%%%%%%%%%%%%%%%%%%%%%%%%%%%%%%%%%%%%%%%%%%%%%%%%%%%%%%%
\begin{table}[H]
\centering
\caption{Admissible Rotation Diagnostics}
\label{tab:rotation_diagnostics}
\begin{threeparttable}
\scriptsize
\setlength{\tabcolsep}{4pt}
\resizebox{\textwidth}{!}{
\begin{tabular}{lccccc}
\toprule
 & $w=0.01$ & $w=0.25$ & $w=0.50$ & $w=0.75$ & $w=0.99$ \\
\midrule
Rotation angle, degrees
& 0.87 & 21.71 & 43.41 & 65.12 & 85.96 \\

Effect on OAS, credit-supply component
& -0.67 & -4.32 & -9.55 & -21.09 & -135.49 \\

Effect on OAS, residual component
& 628.70 & 23.43 & 9.55 & 3.89 & 0.15 \\

Corr. with raw equity surprise
& 1.00 & 0.93 & 0.73 & 0.42 & 0.08 \\

Corr. with OAS change
& -0.08 & -0.43 & -0.73 & -0.93 & -1.00 \\

Corr. with poor man's supply shock
& 0.72 & 0.76 & 0.70 & 0.54 & 0.32 \\

\bottomrule
\end{tabular}
}
\floatfoot{\textit{Notes:} This table reports diagnostics for selected admissible rotations of the bivariate event-window system formed by the raw bank equity surprise and the Canadian corporate OAS change. The admissible interval ranges from \(0\) to \(86.8\) degrees. The benchmark specification uses the median admissible rotation, \(w=0.50\). The reported OAS effects correspond to the entries of the rotation-implied effects matrix \(C\). Correlations are computed at the event level.}
\end{threeparttable}
\end{table}
%%%%%%%%%%%%%%%%%%%%%%%%%%%%%%%%%%%%%%%%%%%%%%%%%%%%%%%%%%%%%%%%%%%%%%

The impulse-response robustness exercises use these diagnostics to choose two
sets of alternative rotations. The first set focuses on the economically
central part of the admissible interval, \(w\in\{0.25,0.50,0.75\}\). These
rotations all combine information from both bank equity surprises and OAS
changes. The second set uses the near-boundary rotations,
\(w\in\{0.01,0.50,0.99\}\), as stress tests. Since the near-boundary rotations
are close to the raw equity surprise and the negative OAS innovation,
respectively, we interpret them as useful diagnostics rather than as equally
plausible benchmark decompositions.

%%%%%%%%%%%%%%%%%%%%%%%%%%%%%%%%%%%%%%%%%%%%%%%%%%%%%%%%%%%%%%%%%%%%%%%%%%%%%%%%%%%%%%%%%%%%%%%%%%%%%%%%%%%%%%%%%%%%%%%%%%%%%%%%%%%
\subsection{Predictability from U.S. bank announcement shocks}
\label{appendix:predictability_us_bank_shocks}

This appendix reports an additional validation exercise that asks whether the
Canadian bank net-worth shocks are predictable from U.S. bank surprises
observed before the Canadian bank earnings announcements. The U.S. shocks are
constructed by \cite{ottonello2025financial} using bank earnings-announcement
information and therefore provide a natural external benchmark for our
Canadian shock series. The exercise addresses the concern that Canadian bank
equity reactions may partly reflect global banking-sector information already
revealed by U.S. bank announcements, rather than new information revealed at
Canadian bank earnings announcements.

For each Canadian bank-shock date \(t\), we attach the last three U.S. bank
surprises observed strictly before \(t\). Let \(t_1(t)\), \(t_2(t)\), and
\(t_3(t)\) denote the most recent, second-most recent, and third-most recent
U.S. bank-shock dates before the Canadian announcement date. We then estimate
regressions of the form
\[
s^{CAN}_t
=
\alpha
+
\sum_{\ell=1}^{3}
\beta_\ell s^{US}_{t_\ell(t)}
+
u_t,
\]
where \(s^{CAN}_t\) is either the raw total Canadian bank-equity surprise or
the median-rotation Canadian credit-supply bank net-worth shock. We estimate
two versions of the test for the preferred Canadian credit-supply shock: one
using the raw U.S. bank surprise and one using the purged U.S. bank surprise.
The sample is restricted to Canadian bank-shock dates before October 1, 2020.
Standard errors are robust.

Table \ref{tab:predictability_us_bank_shocks} reports the results. Column
(1) shows that the raw total Canadian bank-equity surprise displays some mild
predictability from the most recent raw U.S. bank surprise. The coefficient
on the most recent U.S. surprise is positive and statistically significant,
and the joint test for the last three U.S. surprises has a \(p\)-value of
\(0.082\). This is consistent with the view that the raw Canadian bank-equity
surprise may contain a common banking-sector or global financial component.

The preferred Canadian credit-supply shock, however, does not display such predictability. In column (2), the median-rotation supply shock is regressed on the last three raw U.S. bank surprises. The individual coefficients are statistically insignificant, the adjusted \(R^2\) is small, and the joint \(p\)-value is \(0.292\). In column (3), the median-rotation supply shock is regressed on the last three purged U.S. bank surprises. The joint \(p\)-value is \(0.818\). Thus, the preferred Canadian bank net-worth shock is not predictable from recent U.S. bank earnings-news shocks observed before the Canadian earnings announcement.

\begin{table}[H]
\centering
\caption{Predictability of Canadian Bank Net-Worth Shocks from U.S. Bank Announcement Shocks}
\label{tab:predictability_us_bank_shocks}
\begin{threeparttable}
\footnotesize
\def\sym#1{\ifmmode^{#1}\else\(^{#1}\)\fi}
\begin{tabular}{lccc}
\toprule
                    &\multicolumn{1}{c}{(1)}
                    &\multicolumn{1}{c}{(2)}
                    &\multicolumn{1}{c}{(3)}\\
                    &\multicolumn{1}{c}{Total shock}
                    &\multicolumn{1}{c}{Supply shock}
                    &\multicolumn{1}{c}{Supply shock}\\
                    &\multicolumn{1}{c}{Raw U.S. shocks}
                    &\multicolumn{1}{c}{Raw U.S. shocks}
                    &\multicolumn{1}{c}{Purged U.S. shocks}\\
\midrule
Last U.S. bank surprise
                    &       0.357\sym{**}
                    &       0.173
                    &                     \\
                    &     (0.140)
                    &     (0.176)
                    &                     \\
\addlinespace
Second-last U.S. bank surprise
                    &      -0.085
                    &      -0.087
                    &                     \\
                    &     (0.139)
                    &     (0.104)
                    &                     \\
\addlinespace
Third-last U.S. bank surprise
                    &      -0.059
                    &      -0.123
                    &                     \\
                    &     (0.073)
                    &     (0.082)
                    &                     \\
\addlinespace
Last purged U.S. bank surprise
                    &
                    &
                    &       0.004         \\
                    &
                    &
                    &     (0.313)         \\
\addlinespace
Second-last purged U.S. bank surprise
                    &
                    &
                    &       0.062         \\
                    &
                    &
                    &     (0.154)         \\
\addlinespace
Third-last purged U.S. bank surprise
                    &
                    &
                    &      -0.094         \\
                    &
                    &
                    &     (0.105)         \\
\midrule
Observations
                    &         276
                    &         276
                    &         276         \\
Adjusted \(R^2\)
                    &       0.058
                    &       0.025
                    &      -0.003         \\
Joint F-stat
                    &       2.255
                    &       1.250
                    &       0.311         \\
Joint \(p\)-value
                    &       0.082
                    &       0.292
                    &       0.818         \\
\bottomrule
\end{tabular}
\begin{tablenotes}
\footnotesize
\item \textbf{Notes:} This table reports predictability regressions of
Canadian bank net-worth shocks on U.S. bank announcement shocks observed
strictly before the Canadian bank-shock date. For each Canadian shock date,
we attach the last three U.S. bank surprises from \cite{ottonello2025financial}
observed before the Canadian date. Column (1) uses the raw total Canadian
bank-equity surprise as the dependent variable. Columns (2) and (3) use the
median-rotation Canadian credit-supply bank net-worth shock. Columns (1) and
(2) use raw U.S. bank surprises as regressors, while column (3) uses purged
U.S. bank surprises. The sample is restricted to Canadian bank-shock dates
before October 1, 2020. Robust standard errors are reported in parentheses.
The joint \(p\)-value corresponds to the test that the three U.S. bank-shock
coefficients in each column are jointly equal to zero. \(\sym{*}\),
\(\sym{**}\), and \(\sym{***}\) denote statistical significance at the 10, 5,
and 1 percent levels, respectively.
\end{tablenotes}
\end{threeparttable}
\end{table}
%%%%%%%%%%%%%%%%%%%%%%%%%%%%%%%%%%%%%%%%%%%%%%%%%%%%%%%%%%%%%%%%%%%%%%

%%%%%%%%%%%%%%%%%%%%%%%%%%%%%%%%%%%%%%%%%%%%%%%%%%%%%%%%%%%%%%%%%%%%%%
%%%%%%%%%%%%%%%%%%%%%%%%%%%%%%%%%%%%%%%%%%%%%%%%%%%%%%%%%%%%%%%%%%%%%%
%%%%%%%%%%%%%%%%%%%%%%%%%%%%%%%%%%%%%%%%%%%%%%%%%%%%%%%%%%%%%%%%%%%%%%
\newpage
\section{Figures on Robustness Checks \& Additional Results}
\label{appendix:figures_robustness_additional}

%%%%%%%%%%%%%%%%%%%%%%%%%%%%%%%%%%%%%%%%%%%%%%%%%%%%%%%%%%%%%%%%%%%%%%
\subsection{Rotation Sensitivity}
\label{appendix:robustness_rotation}

This subsection reports impulse-response figures associated with the
rotation-sensitivity exercise discussed in Section
\ref{subsec:robustness_rotation}. Figure \ref{fig:appendix_rotation_p25}
reports responses using the \(w=0.25\) admissible rotation, while Figure
\ref{fig:appendix_rotation_p75} reports responses using the \(w=0.75\)
admissible rotation. The benchmark specification in the main text uses the
median admissible rotation, \(w=0.50\). Unless otherwise noted, both exercises
use the same local projection specification as the benchmark: six monthly lags
of Canadian macro-financial controls, a COVID-period dummy, and Newey--West
standard errors.

%%%%%%%%%%%%%%%%%%%%%%%%%%%%%%%%%%%%%%%%%%%%%%%%%%%%%%%%%%%%%%%%%%%%%%
\begin{figure}[p]
\centering
\includegraphics[width=0.95\textwidth]{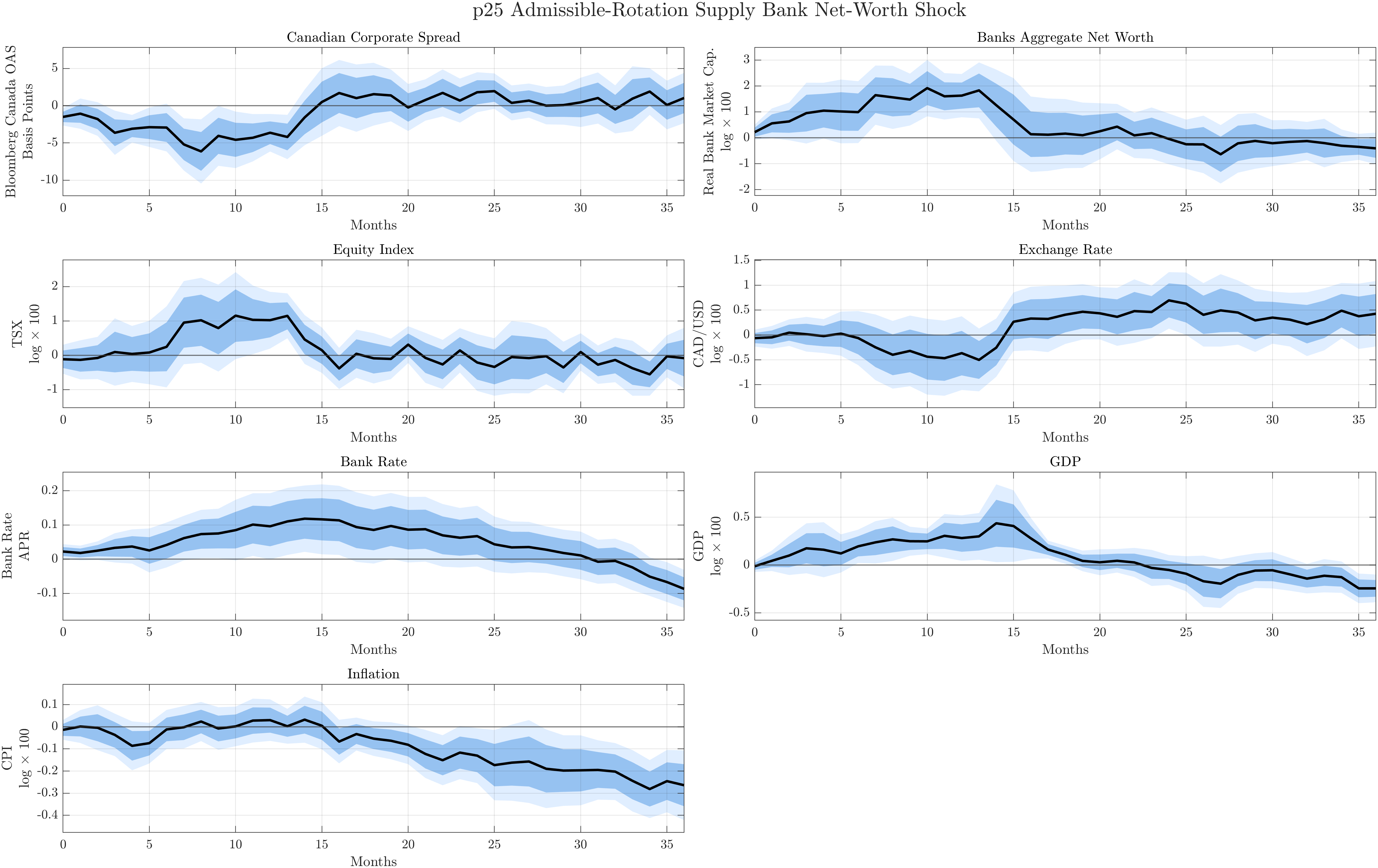}
\caption{Macroeconomic Effects of Bank Net-Worth Supply Shocks: \(w=0.25\) Admissible Rotation}
\label{fig:appendix_rotation_p25}
\floatfoot{\textbf{Notes:} This figure reports local projection impulse responses to a one-standard-deviation favorable credit-supply bank net-worth shock identified using the \(w=0.25\) admissible rotation. The shock is constructed from timing-adjusted earnings-announcement stock-price reactions and event-window changes in the Canadian corporate OAS, and then aggregated to the monthly frequency. The specification includes six monthly lags of Canadian macro-financial controls and a COVID-period dummy. For variables in logs, responses are cumulative changes relative to the month before the shock and are measured in percentage points. The Bank of Canada policy rate and the Canadian corporate OAS are reported in forward levels. Dark and light shaded areas denote 68 and 90 percent confidence intervals, respectively, computed using Newey--West standard errors.}
\end{figure}
%%%%%%%%%%%%%%%%%%%%%%%%%%%%%%%%%%%%%%%%%%%%%%%%%%%%%%%%%%%%%%%%%%%%%%

%%%%%%%%%%%%%%%%%%%%%%%%%%%%%%%%%%%%%%%%%%%%%%%%%%%%%%%%%%%%%%%%%%%%%%
\begin{figure}[p]
\centering
\includegraphics[width=0.95\textwidth]{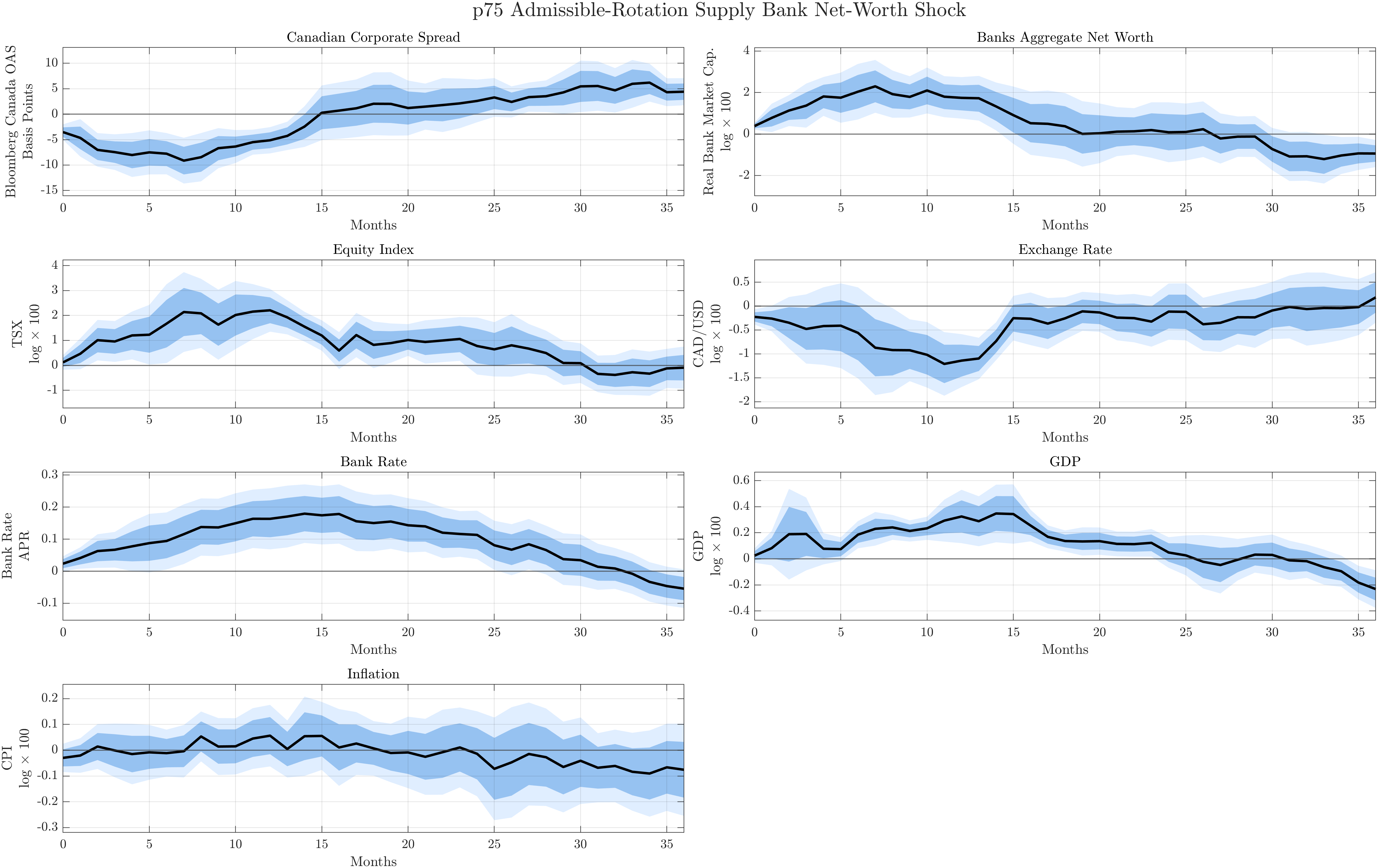}
\caption{Macroeconomic Effects of Bank Net-Worth Supply Shocks: \(w=0.75\) Admissible Rotation}
\label{fig:appendix_rotation_p75}
\floatfoot{\textbf{Notes:} This figure reports local projection impulse responses to a one-standard-deviation favorable credit-supply bank net-worth shock identified using the \(w=0.75\) admissible rotation. The shock is constructed from timing-adjusted earnings-announcement stock-price reactions and event-window changes in the Canadian corporate OAS, and then aggregated to the monthly frequency. The specification includes six monthly lags of Canadian macro-financial controls and a COVID-period dummy. For variables in logs, responses are cumulative changes relative to the month before the shock and are measured in percentage points. The Bank of Canada policy rate and the Canadian corporate OAS are reported in forward levels. Dark and light shaded areas denote 68 and 90 percent confidence intervals, respectively, computed using Newey--West standard errors.}
\end{figure}
%%%%%%%%%%%%%%%%%%%%%%%%%%%%%%%%%%%%%%%%%%%%%%%%%%%%%%%%%%%%%%%%%%%%%%

%%%%%%%%%%%%%%%%%%%%%%%%%%%%%%%%%%%%%%%%%%%%%%%%%%%%%%%%%%%%%%%%%%%%%%
\subsection{Specification Robustness}
\label{appendix:robustness_specification}

This subsection reports the impulse-response figures associated with the specification robustness checks discussed in Section \ref{subsec:robustness_specification}. Figure \ref{fig:appendix_robust_lags2} reports responses from a specification with two monthly lags of Canadian macro-financial controls instead of six. Figure \ref{fig:appendix_robust_trend} reports responses from a specification that augments the benchmark local projections with a linear deterministic trend. Unless otherwise noted, both exercises use the normalized median rotational-angle credit-supply bank net-worth shock and report 68 and 90 percent Newey--West confidence intervals.

%%%%%%%%%%%%%%%%%%%%%%%%%%%%%%%%%%%%%%%%%%%%%%%%%%%%%%%%%%%%%%%%%%%%%%
\begin{figure}[p]
\centering
\includegraphics[width=0.95\textwidth]{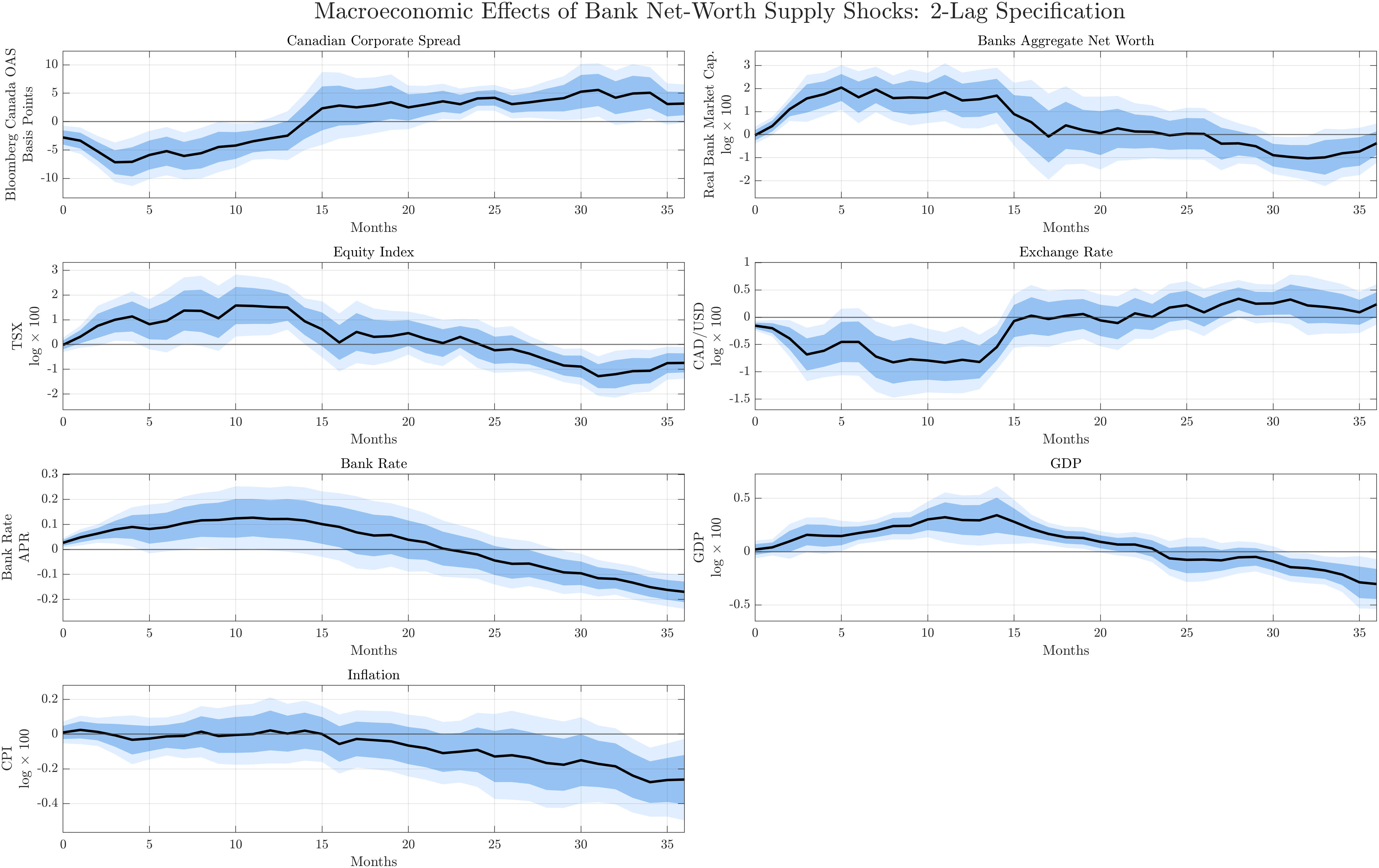}
\caption{Macroeconomic Effects of Bank Net-Worth Supply Shocks: Two-Lag Specification}
\label{fig:appendix_robust_lags2}
\floatfoot{\textbf{Notes:} This figure reports local projection impulse responses to a one-standard-deviation favorable credit-supply bank net-worth shock identified using the median rotational-angle procedure. The specification includes two monthly lags of Canadian macro-financial controls and a COVID-period dummy. The shock is constructed from timing-adjusted earnings-announcement stock-price reactions and aggregated using lagged market-capitalization shares. For variables in logs, responses are cumulative changes relative to the month before the shock and are measured in percentage points. The Bank of Canada policy rate and the Canadian corporate OAS are reported in forward levels. Dark and light shaded areas denote 68 and 90 percent confidence intervals, respectively, computed using Newey--West standard errors.}
\end{figure}
%%%%%%%%%%%%%%%%%%%%%%%%%%%%%%%%%%%%%%%%%%%%%%%%%%%%%%%%%%%%%%%%%%%%%%

%%%%%%%%%%%%%%%%%%%%%%%%%%%%%%%%%%%%%%%%%%%%%%%%%%%%%%%%%%%%%%%%%%%%%%
\begin{figure}[p]
\centering
\includegraphics[width=0.95\textwidth]{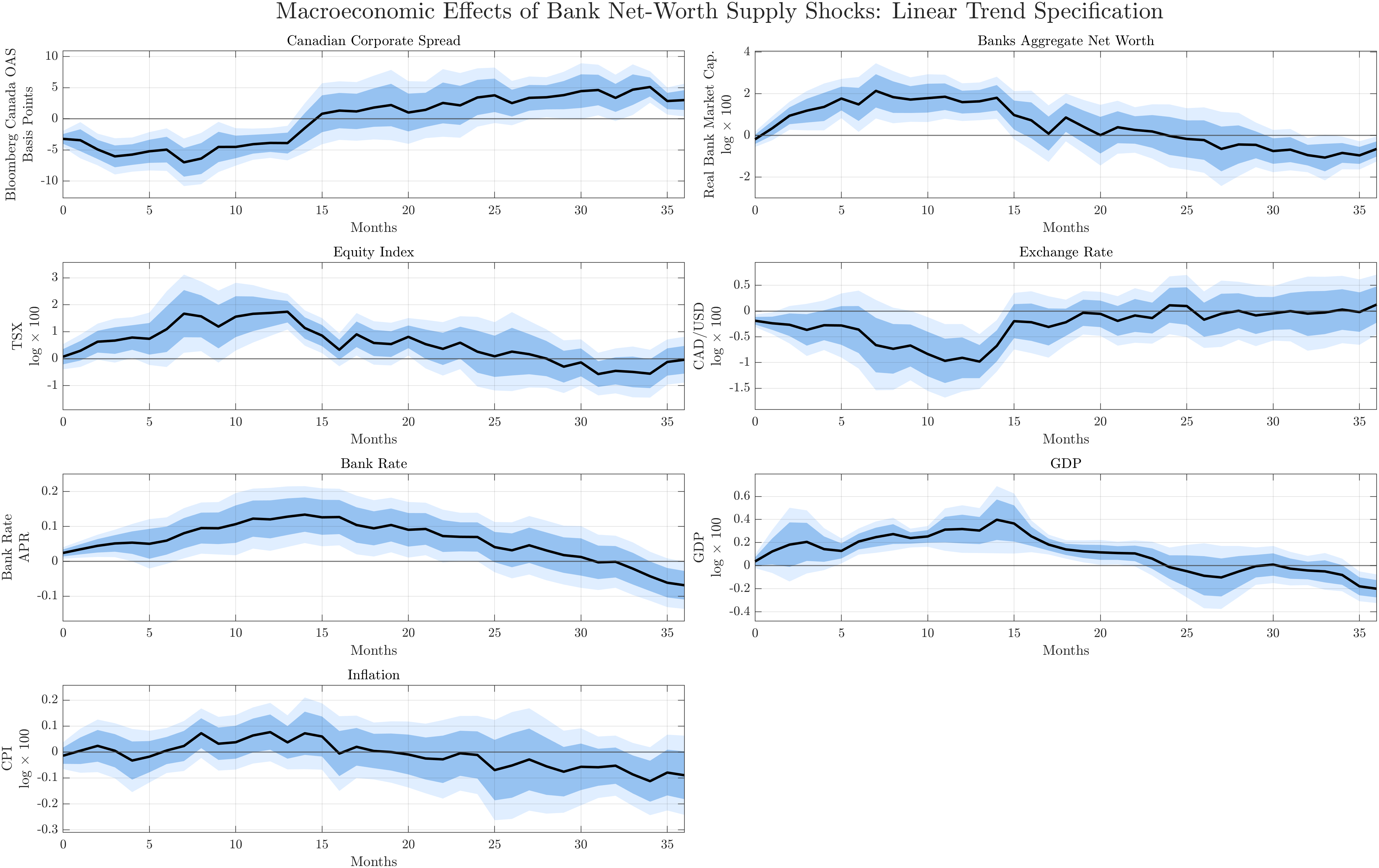}
\caption{Macroeconomic Effects of Bank Net-Worth Supply Shocks: Linear Trend Specification}
\label{fig:appendix_robust_trend}
\floatfoot{\textbf{Notes:} This figure reports local projection impulse responses to a one-standard-deviation favorable credit-supply bank net-worth shock identified using the median rotational-angle procedure. The specification includes six monthly lags of Canadian macro-financial controls, a COVID-period dummy, and a linear deterministic time trend. The shock is constructed from timing-adjusted earnings-announcement stock-price reactions and aggregated using lagged market-capitalization shares. For variables in logs, responses are cumulative changes relative to the month before the shock and are measured in percentage points. The Bank of Canada policy rate and the Canadian corporate OAS are reported in forward levels. Dark and light shaded areas denote 68 and 90 percent confidence intervals, respectively, computed using Newey--West standard errors.}
\end{figure}
%%%%%%%%%%%%%%%%%%%%%%%%%%%%%%%%%%%%%%%%%%%%%%%%%%%%%%%%%%%%%%%%%%%%%%

%%%%%%%%%%%%%%%%%%%%%%%%%%%%%%%%%%%%%%%%%%%%%%%%%%%%%%%%%%%%%%%%%%%%%%
%%%%%%%%%%%%%%%%%%%%%%%%%%%%%%%%%%%%%%%%%%%%%%%%%%%%%%%%%%%%%%%%%%%%%%
\subsection{Sample Robustness}
\label{appendix:robustness_sample}

This subsection reports the impulse-response figures associated with the sample robustness exercises discussed in Section \ref{subsec:robustness_sample}. Figure \ref{fig:appendix_robust_pre2020} reports responses estimated on the pre-2020 sample, excluding the COVID period and the most recent post-pandemic observations. Figure \ref{fig:appendix_robust_post2010} reports responses estimated on the post-2010 sample. Unless otherwise noted, both exercises use the normalized median rotational-angle credit-supply bank net-worth shock and report 68 and 90 percent Newey--West confidence intervals.

%%%%%%%%%%%%%%%%%%%%%%%%%%%%%%%%%%%%%%%%%%%%%%%%%%%%%%%%%%%%%%%%%%%%%%
\begin{figure}[p]
\centering
\includegraphics[width=0.95\textwidth]{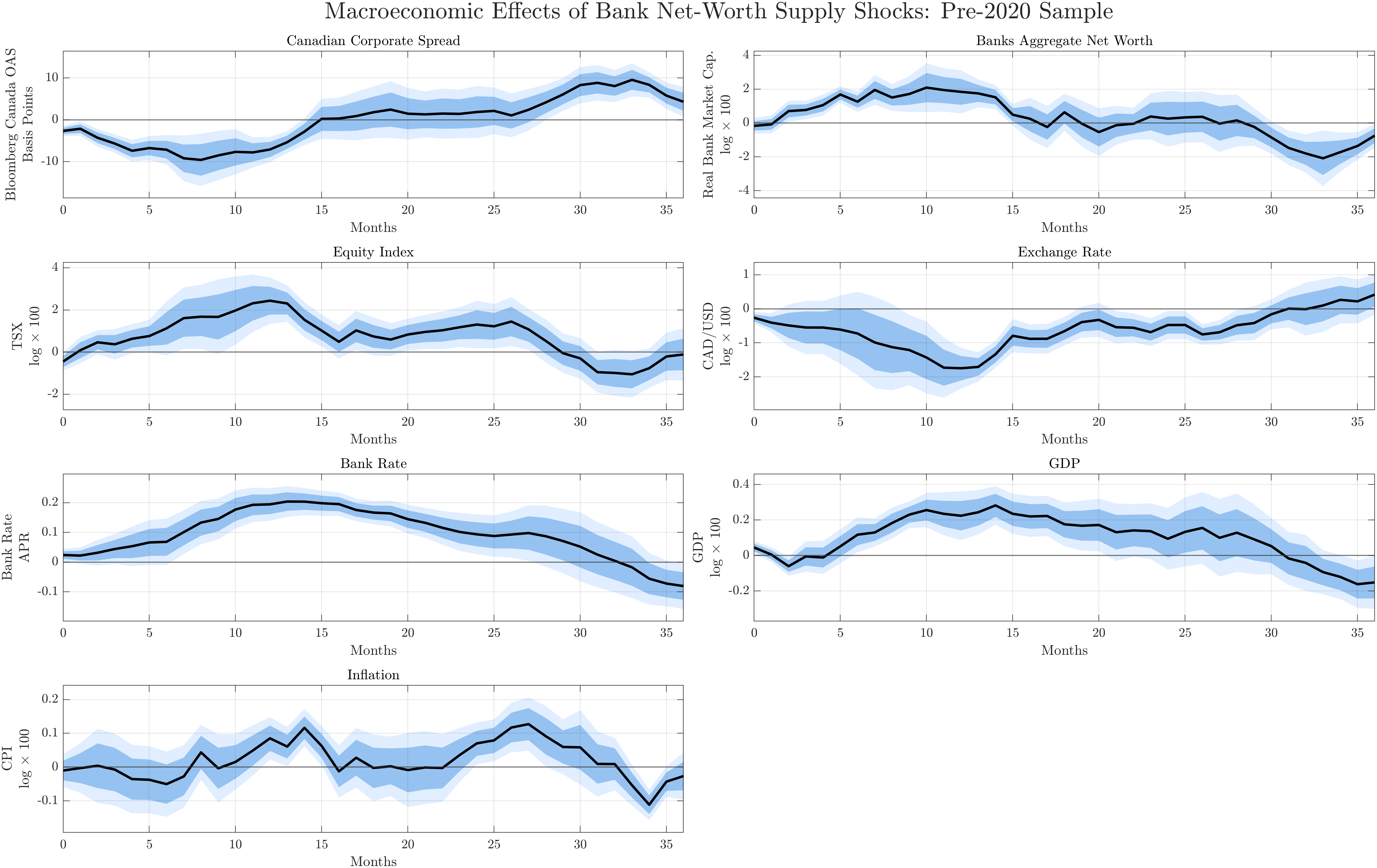}
\caption{Macroeconomic Effects of Bank Net-Worth Supply Shocks: Pre-2020 Sample}
\label{fig:appendix_robust_pre2020}
\floatfoot{\textbf{Notes:} This figure reports local projection impulse responses to a one-standard-deviation favorable credit-supply bank net-worth shock identified using the median rotational-angle procedure. The sample is restricted to observations before 2020. The specification includes six monthly lags of Canadian macro-financial controls. The shock is constructed from timing-adjusted earnings-announcement stock-price reactions and aggregated using lagged market-capitalization shares. For variables in logs, responses are cumulative changes relative to the month before the shock and are measured in percentage points. The Bank of Canada policy rate and the Canadian corporate OAS are reported in forward levels. Dark and light shaded areas denote 68 and 90 percent confidence intervals, respectively, computed using Newey--West standard errors.}
\end{figure}
%%%%%%%%%%%%%%%%%%%%%%%%%%%%%%%%%%%%%%%%%%%%%%%%%%%%%%%%%%%%%%%%%%%%%%

%%%%%%%%%%%%%%%%%%%%%%%%%%%%%%%%%%%%%%%%%%%%%%%%%%%%%%%%%%%%%%%%%%%%%%
\begin{figure}[p]
\centering
\includegraphics[width=0.95\textwidth]{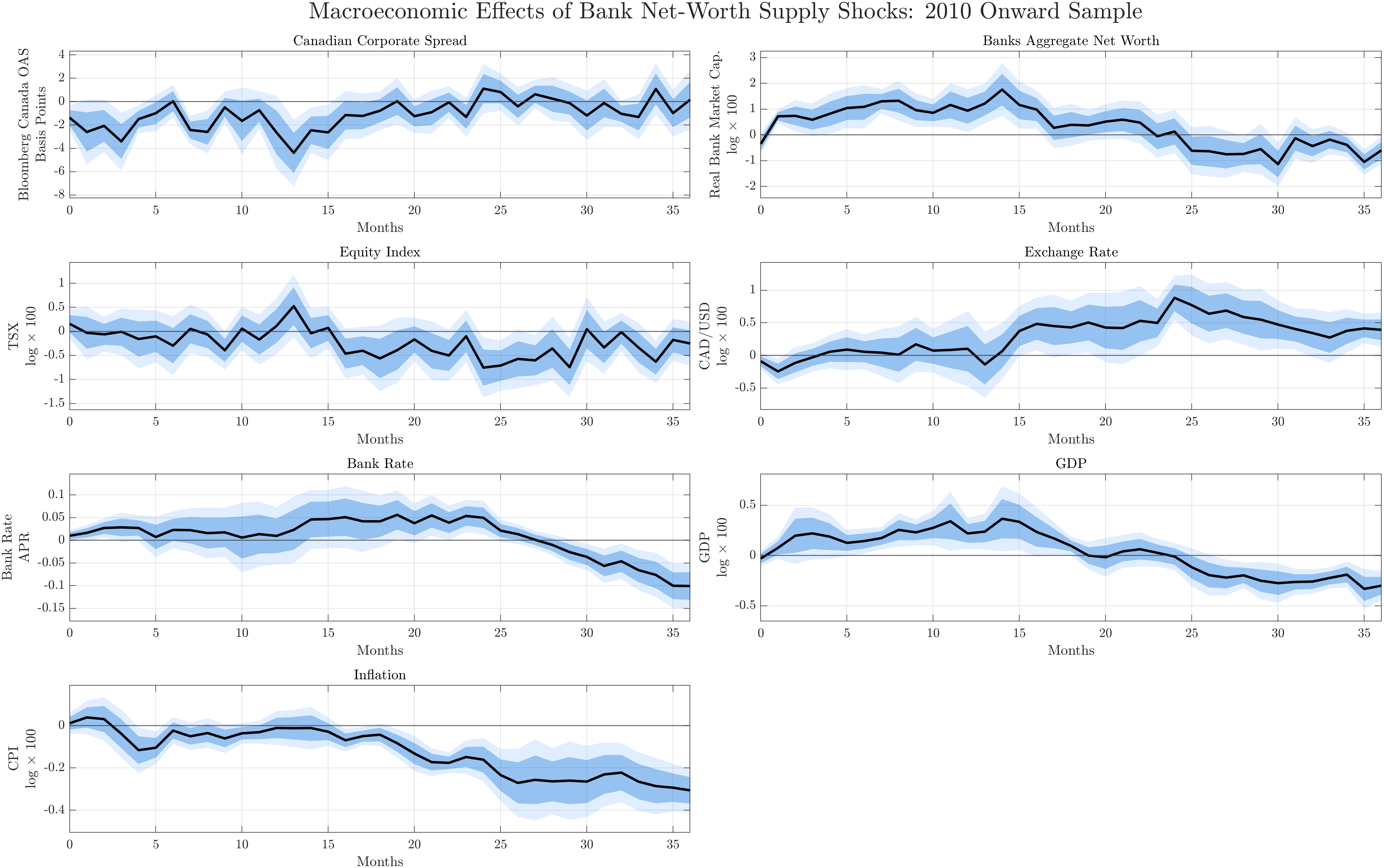}
\caption{Macroeconomic Effects of Bank Net-Worth Supply Shocks: Post-2010 Sample}
\label{fig:appendix_robust_post2010}
\floatfoot{\textbf{Notes:} This figure reports local projection impulse responses to a one-standard-deviation favorable credit-supply bank net-worth shock identified using the median rotational-angle procedure. The sample is restricted to observations from 2010 onward. The specification includes six monthly lags of Canadian macro-financial controls and a COVID-period dummy. The shock is constructed from timing-adjusted earnings-announcement stock-price reactions and aggregated using lagged market-capitalization shares. For variables in logs, responses are cumulative changes relative to the month before the shock and are measured in percentage points. The Bank of Canada policy rate and the Canadian corporate OAS are reported in forward levels. Dark and light shaded areas denote 68 and 90 percent confidence intervals, respectively, computed using Newey--West standard errors.}
\end{figure}
%%%%%%%%%%%%%%%%%%%%%%%%%%%%%%%%%%%%%%%%%%%%%%%%%%%%%%%%%%%%%%%%%%%%%%

%%%%%%%%%%%%%%%%%%%%%%%%%%%%%%%%%%%%%%%%%%%%%%%%%%%%%%%%%%%%%%%%%%%%%%
\subsection{Excluding the Largest Banks}
\label{appendix:robustness_large_banks}

This appendix reports a robustness exercise that assesses whether the
benchmark impulse responses are driven by the largest Canadian banks in the
sample. The benchmark shock is market-capitalization weighted, so larger
institutions receive larger weights in the aggregate bank net-worth surprise.
While this weighting is appropriate for constructing an aggregate measure of
bank balance-sheet news, it raises the possibility that the results may be
dominated by the largest banks.

We address this concern by reconstructing the shock series after excluding
large banks from the underlying announcement sample. We consider three
exercises. First, we remove RBC from the bank-level event sample, reconstruct
the credit-supply bank net-worth shock using the remaining banks, and
re-estimate the benchmark local projections. Second, we repeat the exercise
after removing TD. Third, we remove both RBC and TD simultaneously. In each
case, the shock is reconstructed before the macroeconomic analysis, so the
exercise changes the underlying high-frequency shock rather than simply
dropping observations from the local projection sample. The local projections
use the same specification as the benchmark: the same macro-financial
outcomes, normalization, controls, lag structure, COVID-period dummy, horizon,
and Newey--West inference procedure.

Figures \ref{fig:appendix_no_rbc}, \ref{fig:appendix_no_td}, and
\ref{fig:appendix_no_rbc_td} report the results. The main propagation
patterns are robust across the three exclusions. Excluding RBC leaves the
benchmark responses largely unchanged: corporate spreads fall after a
favorable bank net-worth shock, bank market capitalization and broader equity
prices rise, and GDP increases over the medium run. The response of the bank
rate is positive for much of the horizon, consistent with the baseline
monetary-policy response.

The results are similar when TD is excluded. The spread response remains
negative during the first year, bank valuations rise, the equity index
increases, and GDP responds positively over the medium run. Excluding both
RBC and TD also preserves the main qualitative findings. Even when the two
largest banks are removed from the shock-construction exercise, the remaining
bank announcements generate a credit-supply shock that lowers corporate
spreads, raises bank valuations and equity prices, and is followed by stronger
real activity.

Overall, these exercises show that the benchmark results are not driven by a
single large institution or by the joint influence of RBC and TD. The
estimated propagation mechanism remains present when the shock is constructed
from the remaining banks, supporting the interpretation that the benchmark
shock captures broad Canadian bank net-worth news.

%%%%%%%%%%%%%%%%%%%%%%%%%%%%%%%%%%%%%%%%%%%%%%%%%%%%%%%%%%%%%%%%%%%%%%
\begin{figure}[H]
\centering
\includegraphics[width=0.95\textwidth]{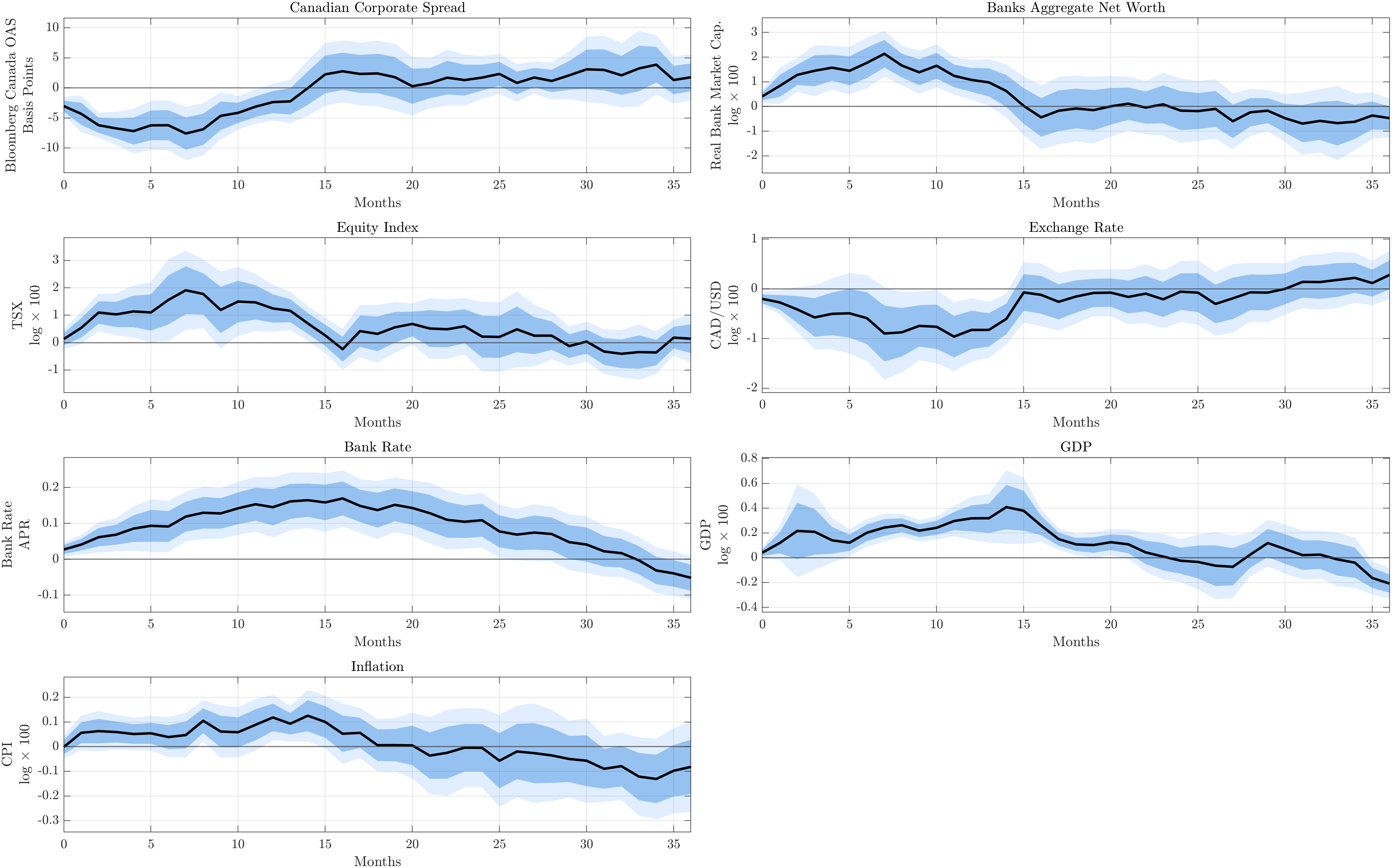}
\caption{Benchmark Responses Excluding RBC}
\label{fig:appendix_no_rbc}
\floatfoot{\textbf{Notes:} This figure reports impulse responses to the
median-rotation credit-supply bank net-worth shock after removing RBC from
the shock-construction sample. The shock is reconstructed using the remaining
banks and the benchmark local-projection specification is re-estimated. The
specification uses the same outcomes, controls, normalization, horizon,
COVID-period dummy, and Newey--West inference procedure as the benchmark
specification. The solid black line reports the point estimate. The dark and
light blue shaded areas report the 68 and 90 percent confidence intervals,
respectively.}
\end{figure}
%%%%%%%%%%%%%%%%%%%%%%%%%%%%%%%%%%%%%%%%%%%%%%%%%%%%%%%%%%%%%%%%%%%%%%

%%%%%%%%%%%%%%%%%%%%%%%%%%%%%%%%%%%%%%%%%%%%%%%%%%%%%%%%%%%%%%%%%%%%%%
\begin{figure}[H]
\centering
\includegraphics[width=0.95\textwidth]{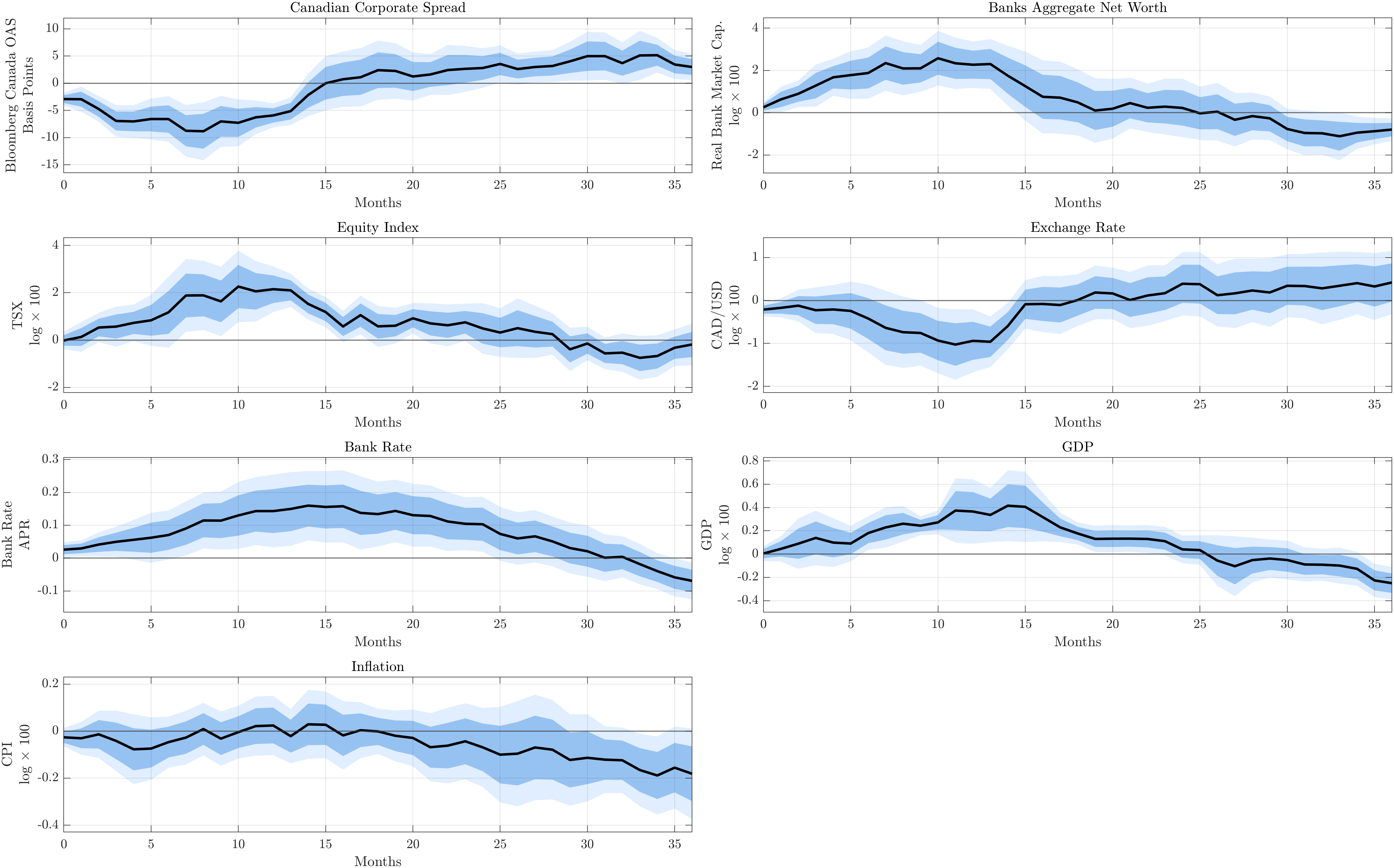}
\caption{Benchmark Responses Excluding TD}
\label{fig:appendix_no_td}
\floatfoot{\textbf{Notes:} This figure reports impulse responses to the
median-rotation credit-supply bank net-worth shock after removing TD from the
shock-construction sample. The shock is reconstructed using the remaining
banks and the benchmark local-projection specification is re-estimated. The
specification uses the same outcomes, controls, normalization, horizon,
COVID-period dummy, and Newey--West inference procedure as the benchmark
specification. The solid black line reports the point estimate. The dark and
light blue shaded areas report the 68 and 90 percent confidence intervals,
respectively.}
\end{figure}
%%%%%%%%%%%%%%%%%%%%%%%%%%%%%%%%%%%%%%%%%%%%%%%%%%%%%%%%%%%%%%%%%%%%%%

%%%%%%%%%%%%%%%%%%%%%%%%%%%%%%%%%%%%%%%%%%%%%%%%%%%%%%%%%%%%%%%%%%%%%%
\begin{figure}[H]
\centering
\includegraphics[width=0.95\textwidth]{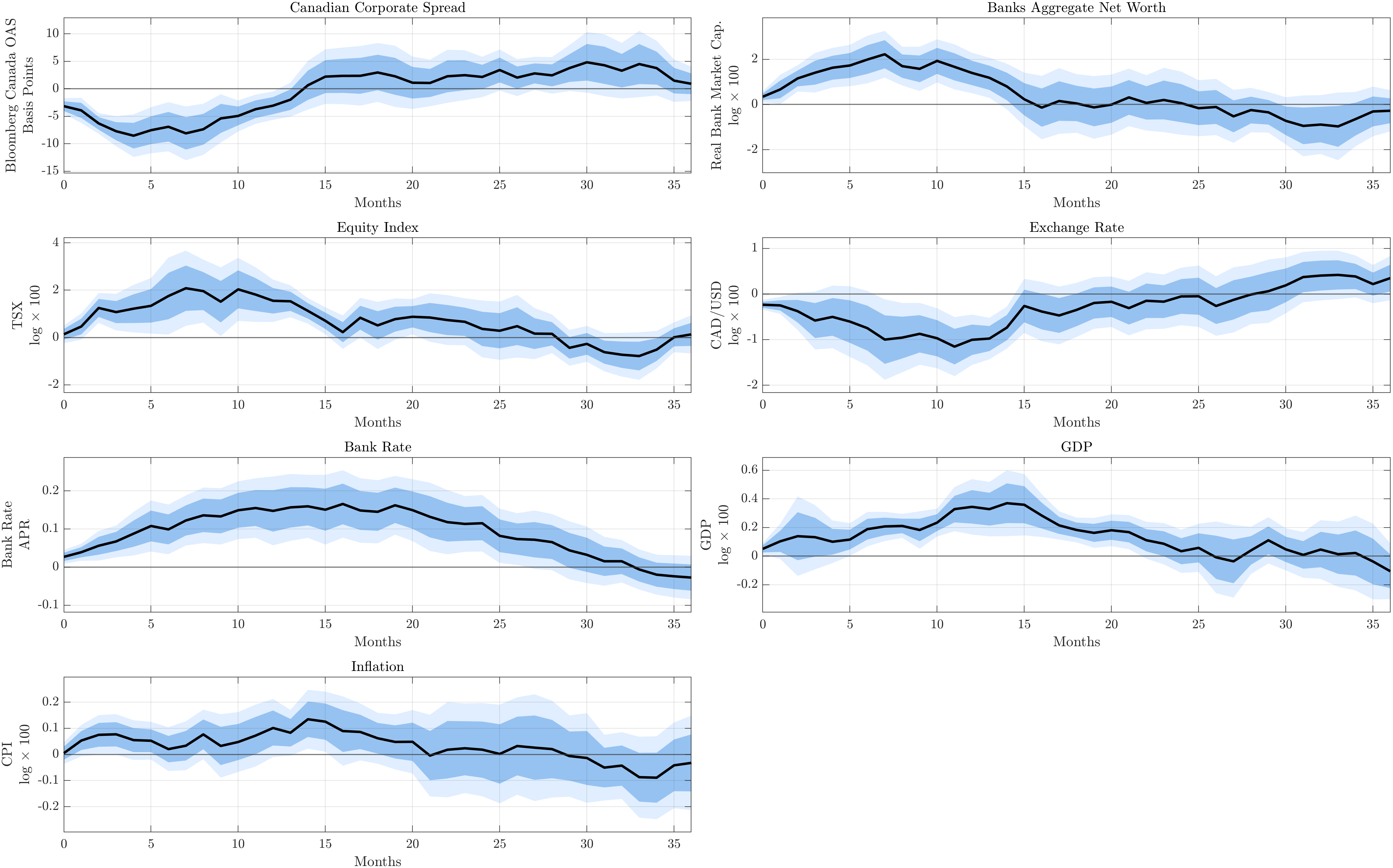}
\caption{Benchmark Responses Excluding RBC and TD}
\label{fig:appendix_no_rbc_td}
\floatfoot{\textbf{Notes:} This figure reports impulse responses to the
median-rotation credit-supply bank net-worth shock after removing both RBC
and TD from the shock-construction sample. The shock is reconstructed using
the remaining banks and the benchmark local-projection specification is
re-estimated. The specification uses the same outcomes, controls,
normalization, horizon, COVID-period dummy, and Newey--West inference
procedure as the benchmark specification. The solid black line reports the
point estimate. The dark and light blue shaded areas report the 68 and 90
percent confidence intervals, respectively.}
\end{figure}
%%%%%%%%%%%%%%%%%%%%%%%%%%%%%%%%%%%%%%%%%%%%%%%%%%%%%%%%%%%%%%%%%%%%%%

%%%%%%%%%%%%%%%%%%%%%%%%%%%%%%%%%%%%%%%%%%%%%%%%%%%%%%%%%%%%%%%%%%%%%%
\subsection{Sectoral Propagation}
\label{appendix:additional_sectoral_propagation}

This subsection reports the sectoral impulse-response figures associated with the additional propagation exercises discussed in Section \ref{subsec:robustness_sectoral}. Figure \ref{fig:appendix_goods_services} reports responses for durable goods, non-durable goods, and services. Figure \ref{fig:appendix_sectoral_gdp} reports responses for a more disaggregated set of sectors. Unless otherwise noted, all figures report responses to a one-standard-deviation favorable credit-supply bank net-worth shock identified using the median rotational-angle procedure. The regressions include the benchmark Canadian macro-financial controls, a COVID-period dummy, and six lags of the corresponding sectoral outcome.

%%%%%%%%%%%%%%%%%%%%%%%%%%%%%%%%%%%%%%%%%%%%%%%%%%%%%%%%%%%%%%%%%%%%%%
\begin{figure}[p]
\centering
\includegraphics[width=0.95\textwidth]{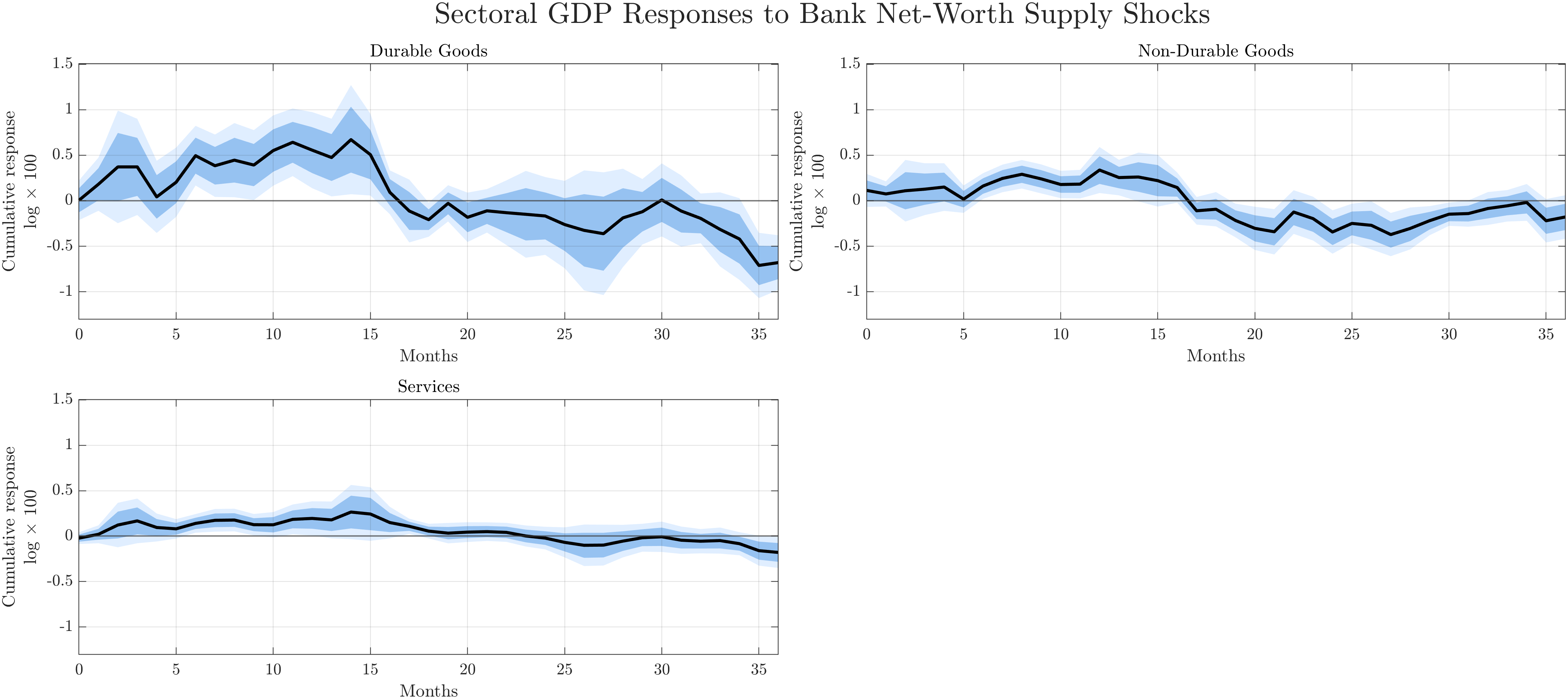}
\caption{Sectoral GDP Responses to Bank Net-Worth Supply Shocks: Goods and Services}
\label{fig:appendix_goods_services}
\floatfoot{\textbf{Notes:} This figure reports local projection impulse responses of durable goods, non-durable goods, and services to a one-standard-deviation favorable credit-supply bank net-worth shock identified using the median rotational-angle procedure. Sectoral outcomes are transformed as $100$ times log levels, and responses are cumulative changes relative to the month before the shock. The specification includes the benchmark Canadian macro-financial controls, six lags of the corresponding sectoral outcome, and a COVID-period dummy. Dark and light shaded areas denote 68 and 90 percent confidence intervals, respectively, computed using Newey--West standard errors.}
\end{figure}
%%%%%%%%%%%%%%%%%%%%%%%%%%%%%%%%%%%%%%%%%%%%%%%%%%%%%%%%%%%%%%%%%%%%%%

%%%%%%%%%%%%%%%%%%%%%%%%%%%%%%%%%%%%%%%%%%%%%%%%%%%%%%%%%%%%%%%%%%%%%%
\begin{figure}[p]
\centering
\includegraphics[width=0.95\textwidth]{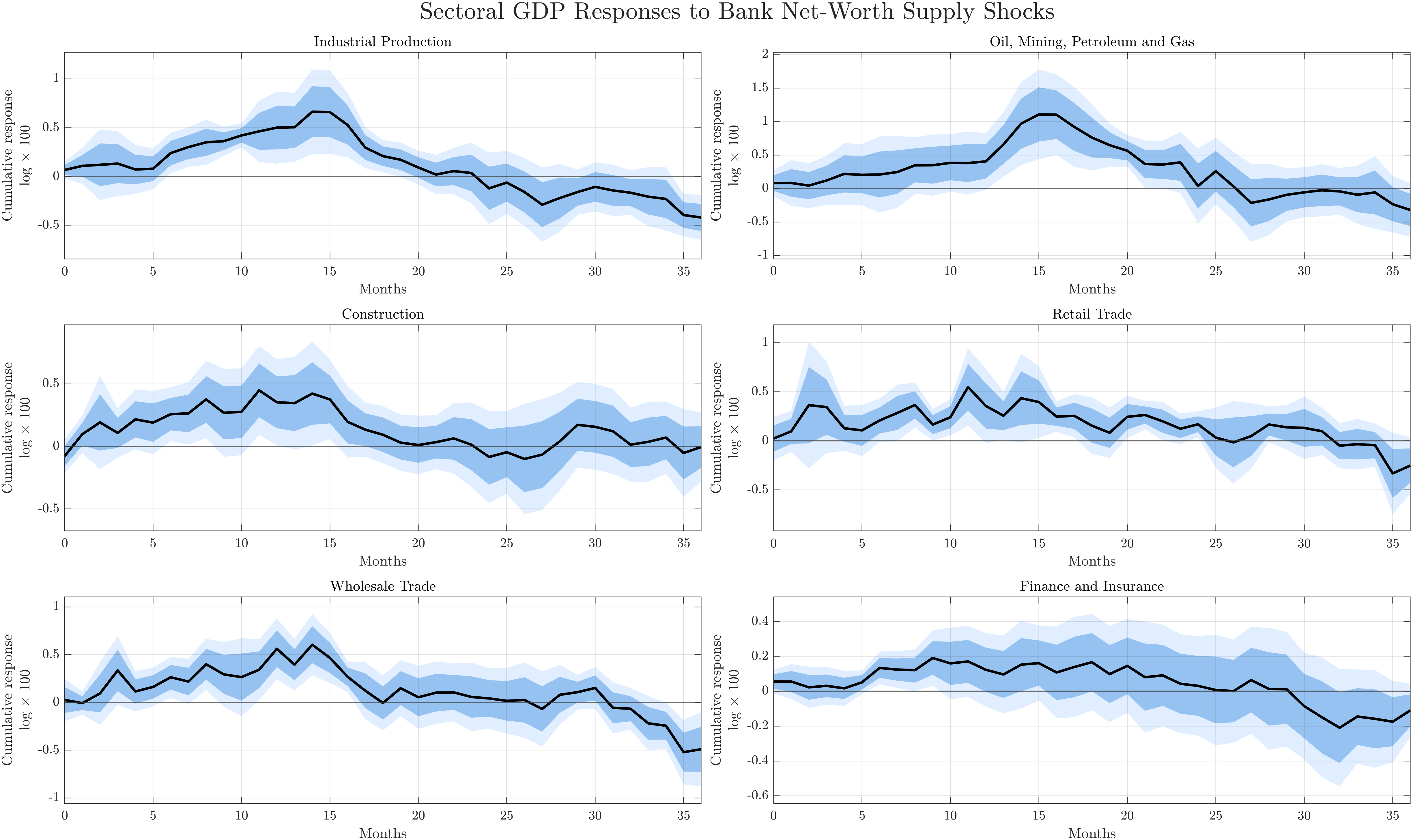}
\caption{Sectoral GDP Responses to Bank Net-Worth Supply Shocks: Disaggregated Sectors}
\label{fig:appendix_sectoral_gdp}
\floatfoot{\textbf{Notes:} This figure reports local projection impulse responses of industrial production; oil, mining, petroleum and gas; construction; retail trade; wholesale trade; and finance and insurance to a one-standard-deviation favorable credit-supply bank net-worth shock identified using the median rotational-angle procedure. Sectoral outcomes are transformed as $100$ times log levels, and responses are cumulative changes relative to the month before the shock. The specification includes the benchmark Canadian macro-financial controls, six lags of the corresponding sectoral outcome, and a COVID-period dummy. Dark and light shaded areas denote 68 and 90 percent confidence intervals, respectively, computed using Newey--West standard errors.}
\end{figure}
%%%%%%%%%%%%%%%%%%%%%%%%%%%%%%%%%%%%%%%%%%%%%%%%%%%%%%%%%%%%%%%%%%%%%%

%%%%%%%%%%%%%%%%%%%%%%%%%%%%%%%%%%%%%%%%%%%%%%%%%%%%%%%%%%%%%%%%%%%%%%
\subsection{Labor-Market Responses}
\label{appendix:additional_labor_market}

This subsection reports labor-market impulse responses associated with the additional propagation exercise discussed in Section \ref{subsec:robustness_labor_market}. Figure \ref{fig:appendix_labor_market} reports responses of total employment, the unemployment rate, and total hours worked to a one-standard-deviation favorable credit-supply bank net-worth shock identified using the median rotational-angle procedure. The regressions include the benchmark Canadian macro-financial controls, a COVID-period dummy, and six lags of the corresponding labor-market outcome.

%%%%%%%%%%%%%%%%%%%%%%%%%%%%%%%%%%%%%%%%%%%%%%%%%%%%%%%%%%%%%%%%%%%%%%
\begin{figure}[p]
\centering
\includegraphics[width=0.95\textwidth]{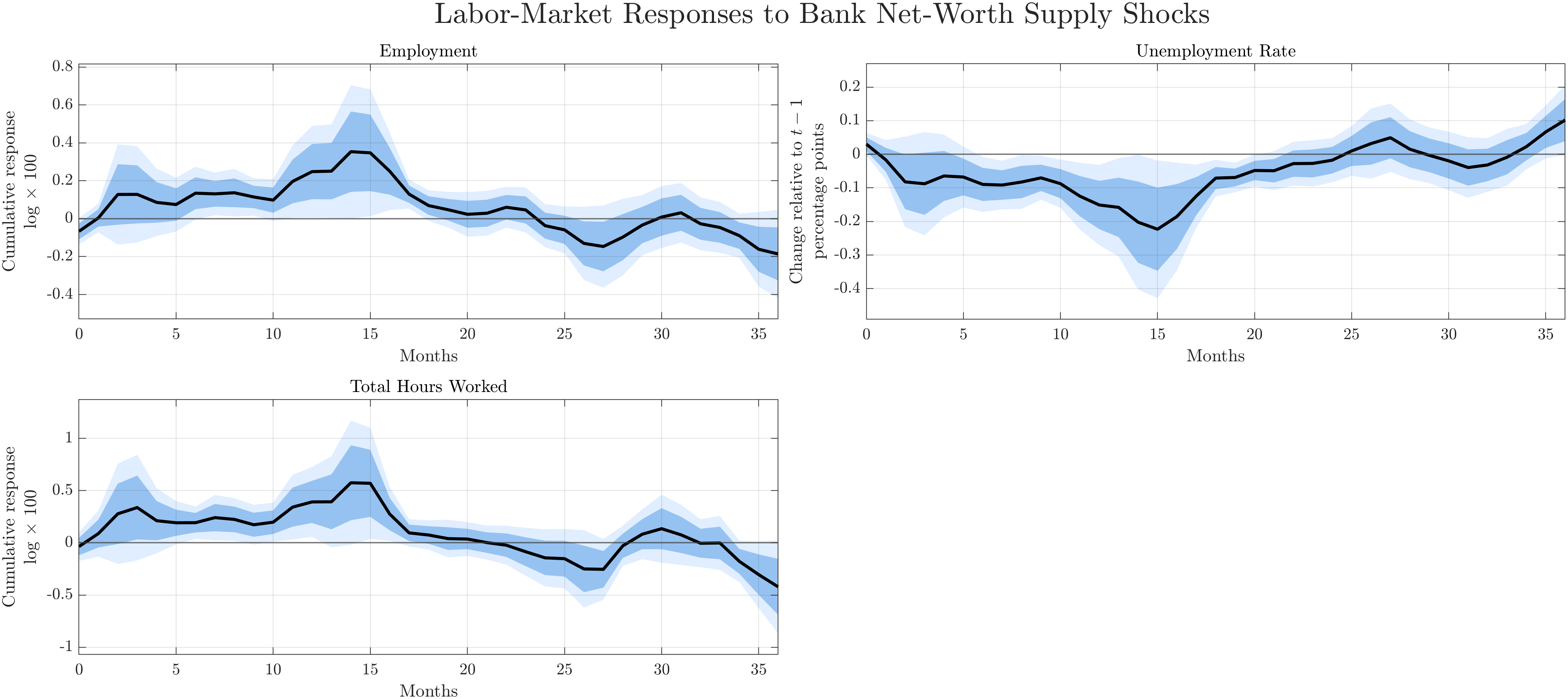}
\caption{Labor-Market Responses to Bank Net-Worth Supply Shocks}
\label{fig:appendix_labor_market}
\floatfoot{\textbf{Notes:} This figure reports local projection impulse responses of labor-market variables to a one-standard-deviation favorable credit-supply bank net-worth shock identified using the median rotational-angle procedure. Employment and total hours worked are transformed as $100$ times log levels, and responses are cumulative changes relative to the month before the shock. The unemployment-rate response is measured as the change in percentage points relative to the month before the shock. The specification includes the benchmark Canadian macro-financial controls, six lags of the corresponding labor-market outcome, and a COVID-period dummy. Dark and light shaded areas denote 68 and 90 percent confidence intervals, respectively, computed using Newey--West standard errors.}
\end{figure}
%%%%%%%%%%%%%%%%%%%%%%%%%%%%%%%%%%%%%%%%%%%%%%%%%%%%%%%%%%%%%%%%%%%%%%

\end{document}